\theoremstyle{plain}
\newtheorem{theorem}{Theorem}
\newtheorem{lemma}{Lemma}
\newtheorem{proposition}{Proposition}
\theoremstyle{remark}
\newtheorem{remark}{Remark}
\newtheorem{definition}{Definition}
\newtheorem{example}{Example}[section]
\newtheorem{assumption}{Assumption}
\begin{document}

\begin{frontmatter}
%%%%%%%%%%%%%%%%%%%%%%%%%%%%%%%%%%%%%%%%%%%%%%
%%                                          %%
%% Enter the title of your article here     %%
%%                                          %%
%%%%%%%%%%%%%%%%%%%%%%%%%%%%%%%%%%%%%%%%%%%%%%
\title{A generalized e-value feature detection method \\with FDR control at multiple resolutions}
%\title{A sample article title with some additional note\thanksref{T1}}
\runtitle{A generalized e-value feature detection method}
%\thankstext{T1}{A sample of additional note to the title.}

\begin{aug}
%%%%%%%%%%%%%%%%%%%%%%%%%%%%%%%%%%%%%%%%%%%%%%%
%% Only one address is permitted per author. %%
%% Only division, organization and e-mail is %%
%% included in the address.                  %%
%% Additional information can be included in %%
%% the Acknowledgments section if necessary. %%
%% ORCID can be inserted by command:         %%
%% \orcid{0000-0000-0000-0000}               %%
%%%%%%%%%%%%%%%%%%%%%%%%%%%%%%%%%%%%%%%%%%%%%%%
\author[A,C]{\fnms{Chengyao}~\snm{Yu}\ead[label=e1]{12532239@mail.sustech.edu.cn}},
\author[A]{\fnms{Ruixing}~\snm{Ming}\ead[label=e2]{rxming@zjgsu.edu.cn}}
\author[A]{\fnms{Min}~\snm{Xiao}\ead[label=e3]{xiaomin@zjgsu.edu.cn}}
\author[B]{\fnms{Zhanfeng}~\snm{Wang}\ead[label=e4]{zfw@ustc.edu.cn}}
\and
\author[C]{\fnms{Bingyi}~\snm{Jing}\ead[label=e5]{jingby@sustech.edu.cn}}
%%%%%%%%%%%%%%%%%%%%%%%%%%%%%%%%%%%%%%%%%%%%%%
%% Addresses                                %%
%%%%%%%%%%%%%%%%%%%%%%%%%%%%%%%%%%%%%%%%%%%%%%
\address[A]{School of Statistics and Mathematics, Zhejiang Gongshang University, \\ \printead[presep={\ }]{e1,e2,e3}}

\address[B]{School of Management, University of Science and Technology of China\printead[presep={,\ }]{e4}}

\address[C]{Department of Statistics and Data Science, Southern University of Science and Technology\printead[presep={,\ }]{e5}}
\end{aug}

\begin{abstract}
Multiple resolutions arise across a range of explanatory features due to domain-specific structures, leading to the formation of feature groups. It follows that the simultaneous detection of significant features and groups aimed at a specific response with false discovery rate (FDR) control stands as a crucial issue, such as the spatial genome-wide association studies. Nevertheless, existing detection methods with multilayer FDR control generally rely on valid p-values or knockoff statistics, which can be not flexible, powerful and stable in several settings. 
To fix this issue effectively, this article develops a novel method of Stabilized Flexible E-Filter Procedure (SFEFP), by constructing unified generalized e-values, leveraging a generalized e-filter, and adopting a stabilization treatment with power enhancement.
This method flexibly incorporates diverse base detection procedures at different resolutions to provide consistent, powerful, and stable results, while controlling FDR at multiple resolutions simultaneously.
Statistical properties of multilayer filtering procedure encompassing one-bit property, multilayer FDR control, and stability guarantee are established.
We also develop several examples for SFEFP such as the eDS-filter. Simulation studies and the analysis of HIV mutation data demonstrate the efficacy of SFEFP.
\end{abstract}

\begin{keyword}[class=MSC]
\kwd[Primary ]{62J15}
\kwd{62G10}
\kwd[; secondary ]{62F03}
\end{keyword}

\begin{keyword}
\kwd{Generalized e-values}
\kwd{feature detection}
\kwd{multiple resolutions}
\kwd{false discovery rate}
\kwd{knockoff}
\kwd{data splitting}
\end{keyword}

\end{frontmatter}

\section{Introduction}\label{section1}
Identifying explanatory features with significant effects on a specific response variable from a large set of candidates is a critical issue in various scientific domains, where features often possess multi-resolution structures arising from domain-specific characteristics, allowing their partition into several meaningful groups through different schemes (multiple distinct resolutions)~\cite{P-filter,Ramdas2019A,katsevich2019multilayer,gablenz2023catch}. For domain scientists, findings across different resolutions are often of substantial interest, where the detection of important features naturally facilitates discoveries at multiple resolutions, since the group encompassing it may also carry scientific significance. 
For example, with genome-wide association studies (GWAS), the primary objective is to identify both the SNPs that influence the level of a medically relevant phenotype and the genes that harbor these SNPs. Ensuring the reproducibility of findings across multiple resolutions requires rigorous control of false discoveries at each resolution. The false discovery rate (FDR)~\cite{Benjamini1995} offers an appropriate statistical assessment of reproducibility in large-scale multiple testing issues. Nevertheless, achieving simultaneous FDR control across multiple resolutions is a non-trivial task. A selected group frequently includes multiple selected features, and the counts of falsely rejected groups and features tend to be comparable—resulting in inflated FDR control at the group level. Two-stage controlled selection: the selection of relevant groups and then the location of features was developed in the literature ~\cite{benjamini2014selective,peterson2016many,yekutieli2008hierarchical}. However, ensuring coherence between discoveries at the group and individual levels can be difficult, particularly when the layers do not conform to a hierarchy.

Several approaches have been proposed to address the coherence between discoveries at different resolutions and provide multilayer FDR control, such as the p-filter~\cite{P-filter,Ramdas2019A} based on the p-value and the multilayer knockoff filter (MKF)~\cite{katsevich2019multilayer} integrating the knockoff framework~\cite{Barber2015,dai2016knockoff,Candes2018}. 
But, constructing valid p-values for the p-filter in high-dimensional settings can be challenging, and handling dependence by reshaping p-values can result in reduced power. For the MKF method, its multilayer FDR control makes the procedure conservative, due to the decoupling of dependencies among cross-layer knockoff statistics.
When features are highly correlated, the knockoff-based procedures may suffer severe power loss~\cite{spector2022powerful, dai2023scale}. In addition,
joint distribution of features required by the model-X knockoff is generally intractable to estimate. Although there are several alternatives to tackle high correlations and high-dimensional settings~\cite{Dai2022,dai2023scale,Xing2023,wang2025adaptive}, these methods focus on providing valid FDR control for single resolution. Extensions to multiple resolutions remain unexplored. Hence, it is essential to develop a unified feature detection framework that enables users to select the state-of-the-art methods for each resolution and achieves multilayer FDR control, not necessarily relying on a specific method based on p-values or knockoffs.

Methods based on e-values are  promising alternatives to p-value due to their simplicity in construction and integration ~\cite{ grunwald2020safe, howard2020time, shafer2021testing, shafer2019game, Wang2022, ren2023derandomizing, gablenz2023catch}. 
For example, Wang and Ramdas~\cite{Wang2022} introduced an analog of the BH procedure~\cite{Benjamini1995} for e-values called e-BH procedure, and suggested that an analog of the p-filter for e-values could also be developed. Using the e-BH procedure and developing one-bit knockoff e-values, Ren et al.~\cite{ren2023derandomizing} proposed the derandomizing knockoff procedure. Gablenz et al.~\cite{gablenz2023catch} further develop the e-filter and leverage the one-bit knockoff e-values to give an extension of the knockoff procedure, called e-MKF. Although the e-MKF enhances the power of MKF~\cite{katsevich2019multilayer} and guaranties multilayer FDR control, we reveal that the e-filter with one-bit e-values (e-MKF) may suffer zero power by demonstrating its one-bit property. Furthermore, to flexibly select different methods at different resolutions, it necessitates the development of generalized e-filter and unified constructions of generalized e-values (e.g. the asymptotic e-value) to incorporate situations beyond the knockoff procedure.

To fill in the gaps existing in the above methods, we propose a stabilized flexible e-filter procedure (Stabilized FEFP, or simply, SFEFP), which enables the use of different feature detection techniques at various layers to output a stable and powerful selection set with multilayer FDR control. Guided by this, this paper provides several contributions as follows. 
First, we develop a generalized e-filter via explicitly defining the generalized e-values, and further propose a unified construction of generalized e-values, based on which FEFP is developed.
In contrast to \cite{P-filter,Ramdas2019A,gablenz2023catch}, such a generalization and construction provides the flexibility to directly leverage different (state-of-the-art) detection methods such as \cite{Barber2015,dai2016knockoff,Candes2018,Dai2022,dai2023scale,Xing2023,wang2025adaptive} for each layer. The proposed construction of generalized e-values is more generic and powerful compared with existing works~\cite{ren2022derandomized, banerjee2023harnessing, li2024note, ignatiadis2024compound}; for a detailed comparison, see Remark~\ref{conpound_evalue} in Section 3.
Second, we develop the SFEFP by merging generalized e-values, a much more powerful and stable procedure compared with FEFP.  Different from \cite{gablenz2023catch}, SFEFP tackles the zero-power dilemma of the e-filter with input of one-bit, significantly improving the detection power. Also different from derandomization for single resolution~\cite{ren2023derandomizing} where stability comes with the cost of power, we provide a stabilized selection set with multilayer FDR control and enhanced power via constructing non-binary generalized e-values that better reconcile discrepancies across layers; a more detailed comparison see Section C of the Supplementary Material. Third, we investigate the statistical theories of the multilayer filtering procedures and several applications of SFEFP. Specifically, we explore the one-bit property of FEFP, revealing the output characteristics of the generalized e-filter under one-bit input. We further provide multilayer FDR control assurances for FEFP and SFEFP, and establish the stability of SFEFP under finite samples.
For SFEFP applications, we develop a novel method called the eDS-filter by extending the DS and MDS method~\cite{Dai2022} to multiple resolutions, providing a powerful alternative to MKF~\cite{katsevich2019multilayer} and e-MKF~\cite{gablenz2023catch} in highly correlated settings. The multilayer FDR control of the eDS-filter is also studied. Applications of SFEFP to
\cite{Xing2023,wang2025adaptive} are provided in Section F of the Supplementary Material. Lastly, we also extend the methods and theories to broader settings including scenarios with prior knowledge by providing a unified version of generalized e-filter, see Section E of the Supplementary Material.

The remainder of this paper is organized as follows. Section~\ref{setup} rigorously frames the problem of feature detection with multilayer FDR control and introduces necessary preliminaries. Section~\ref{section3_method} introduces the proposed SFEFP and relevant theoretical results.
In Section~\ref{section4_application}, we propose eDS-filter and eDS+gKF-filter as examples of SFEFP.
Other applications and possible extensions of SFEFP are also discussed.
Section~\ref{section5_simulation} presents the performance of our proposed methods on simulated data.
In Section~\ref{section5}, the eDS-filter is applied to analyze HIV mutation data.
Section~\ref{section6} discusses the applicable scenarios and the future work of SFEFP. 
The proofs of all propositions, theorems, and lemmas in the subsequent sections of the paper are provided in the supplementary material.

\section{Preliminaries}\label{setup}
\subsection{Problem setup}\label{section2.1}
Denote the response variable by $Y$ and the set of features by $(X_{1}, \ldots, X_{N})$. 
For the response variable and features, we denote the $n$ i.i.d samples by $(\boldsymbol{y}, \boldsymbol{X})$, where $\boldsymbol{y} \in \mathbb{R}^{n}$ is a vector of the responses and $\boldsymbol{X} \in \mathbb{R}^{n \times N}$ refers to a known design matrix. 
The feature detection issue is formalized as a multiple hypothesis testing problem \cite{Candes2018} that
\[
H_{j}: Y \perp \!\!\! \perp X_{j}| X_{-j},~j\in [N],
\]
where $X_{-j}=\{X_{1}, \dots, X_{N}\} \setminus \{X_{j}\}$ and $[N] = \{1,\dots,N\}$.
Intuitively, the hypothesis $H_{j}$ implies that $X_{j}$ does not provide additional information about $Y$ given the other features. 
If $H_{j}$ is not true, the feature $X_{j}$ is then considered to be relevant to $Y$. 
The set of relevant features is denoted as $\mathcal{H}_{1} =\{j: H_{j}~\text{is not true}\}$, and the set of irrelevant features by $\mathcal{H}_{0}=[N] \setminus \mathcal{H}_{1}$.

We further consider a scenario where the set of features is interpreted at $M$ different resolutions of inference. 
Specifically, for each $m\in [M]$, the features are divided into $G^{(m)}$ groups at the $m$-th layer, denoted by $\{\mathcal{A}_{g}^{(m)}\}_{g \in [G^{(m)}]}$. 
Subsequently, let the features in the $g$th group at layer $m$ be $X_{\mathcal{A}_{g}^{(m)}} = \{X_{j}: j\in \mathcal{A}_{g}^{(m)}\}$. 
The partition $\{\mathcal{A}_{g}^{(m)}\}_{g \in [G^{(m)}]}$ should be predetermined based on our specific objectives.
Let the function $h(m, j)$ represent the group which $X_{j}$ belongs to at the $m$-th layer. The group detection at layer $m$ is represented as
\[
H_{g}^{(m)}: Y \perp \!\!\! \perp X_{\mathcal{A}_{g}^{(m)}}| X_{-\mathcal{A}_{g}^{(m)}},~g\in [G^{(m)}],
\]
where $X_{-\mathcal{A}_{g}^{(m)}} = \{X_{j}: j\in [N]\setminus \mathcal{A}_{g}^{(m)}\}$.
We assume that~\cite{katsevich2019multilayer}
\[
H_{g}^{(m)}=\bigwedge_{j\in \mathcal{A}_{g}^{(m)}}H_{j}, \quad  g\in [G^{(m)}].
\] 
Within this assumption, the index set of irrelevant groups at layer $m$, $\mathcal{H}_{0}^{(m)}$, is given as
\[
\mathcal{H}_{0}^{(m)} = \{g \in [G^{(m)}]: \mathcal{A}_{g}^{(m)}  \subset \mathcal{H}_{0} \}.
\]
Thus, given the selected feature set $\mathcal{S} \subset [N]$, the set of selected groups in the layer $m$, $\mathcal{S}^{(m)}$, is defined as 
\[
\mathcal{S}^{(m)} = \{g \in [G^{(m)}]: \mathcal{A}_{g}^{(m)} \cap \mathcal{S} \neq \emptyset \}.
\]
The false discovery rate at the $m$-th layer is defined as
\[
\text{FDR}^{(m)}=\mathbb{E}[\text{FDP}^{(m)}], \quad \text{FDP}^{(m)} = \frac{\left| \mathcal{S}^{(m)} \cap \mathcal{H}_{0}^{(m)}   \right|}{\left| \mathcal{S}^{(m)} \right|  \vee 1},
\]
where $\left| \cdot \right|$ measures the size of a set. 
Our goal is to determine the largest set $\mathcal{S}$ such that $\text{FDR}^{(m)}$ remains below the predefined level $\alpha^{(m)}$ for all $m$ simultaneously.

\subsection{Recap: Multiple testing with e-values}\label{e-intro}
The main component of our proposed method is the e-value \cite{ grunwald2020safe, howard2020time, shafer2021testing, shafer2019game}, a measure of the strength of the evidence against the null hypothesis.
Formally, an e-variable $E$ is a nonnegative random variable with an expected value of at most 1 under the null hypothesis, say, $\mathbb{E}_{\text{null}}[E]\leq 1$, and an e-value $e$ is a one-time realization of an e-variable. For simplicity, the two concepts are not distinguished in this paper.
Compared with the p-value, the e-value is simpler to construct and exhibits greater robustness since it depends solely on the expectation.
For any significance level $\alpha\in(0,1)$, the null hypothesis is rejected if $e\geq 1/\alpha$.
 Type I error control is ensured by the Chebyshev inequality along with the definition of e-variable, say, $\mathbb{P}_{\text{null}}(E \geq \frac{1}{\alpha}) \leq \alpha \mathbb{E}_{\text{null}}(E) \leq \alpha$. Ramdas and Manole~\cite{ramdas2023randomized} studied a more precise method to control type I error, resulting in increased testing power when using e-values.

For multiple tests with the e-values, Wang and Ramdas~\cite{Wang2022} propose the e-BH procedure.
Specifically, assuming that each hypothesis $H_j$ is associated with an e-value $e_j$, the e-BH procedure first ranks e-values in a descending order $e_{(1)}\geq\dots\geq e_{(N)}$ and then rejects the hypotheses with their corresponding e-values exceeding $N/(\alpha\widehat{k})$, where $\widehat{k}=\max\{k\in[N]:e_{(k)}\geq N/(\alpha k)\}$. 
It has been demonstrated that the condition
$\sum_{j\in \mathcal{H}_0}\mathbb{E}[e_j]\leq N$ is sufficient to guarantee the FDR control~\cite{Wang2022} and e-values satisfying this condition are called relaxed e-values.
There are several literatures providing specific constructions of such relaxed e-values''~\cite{ren2022derandomized, li2024note, ignatiadis2024compound,banerjee2023harnessing}.
For multiple resolutions, Gablenz et al.~\cite{gablenz2023catch} developed the e-filter and leveraged the one-bit knockoff e-values~\cite{ren2022derandomized} to develop the e-MKF.

\section{Methodology}\label{section3_method}
This section presents the FEFP method and SFEFP method. The development of FEFP relies on the construction of the generalized e-values and generalized e-filter.

\subsection{FEFP: Flexible E-Filter Procedure}\label{section2.2}
This subsection introduces the FEFP in the following steps.
First, we summarize a framework for recent controlled feature or group detection procedures. Second, we provide the construction of generalized e-values for evaluating feature or group importance as indicated by the detection procedure within the proposed framework. Third, we describe how the generalized e-values are leveraged to produce a selected set. Lastly, we propose the multilayer FDR control theory for FEFP.
A high-level view of FEFP is presented in Algorithm~\ref{alg1}.

\subsubsection*{Framework for controlled detection procedures} We first give the definition of the framework for controlled feature or group detection procedures.
\begin{algorithm}[t]\label{alg1}
	\caption{FEFP: a flexible e-filter procedure for feature detection}
	\KwIn {Data $(\mathbf{X},\mathbf{y})$; target FDR level vector ($\alpha^{(1)}, \ldots, \alpha^{(M)})$; original FDR level vector $(\alpha_{0}^{(1)}, \ldots, \alpha_{0}^{(M)})$; partition $m$ given by $\left\{ \mathcal{A}_{1}^{(m)}, \ldots, \mathcal{A}_{G^{(m)}}^{(m)} \right\}$, $m \in [M].$}
	
	\For {$m=1, \ldots, M$}{
		Separately process any fixed detection procedure $\mathcal{G}^{(m)} \in \mathcal{K}_{\text{finite}} \bigcup \mathcal{K}_{\text{asy}}$ with level $\alpha_{0}^{(m)}$.
		
		Compute the generalized e-value $e_{g}^{(m)}= \frac{G^{(m)}} {\widehat{V}_{\mathcal{G}^{(m)}(\alpha_{0}^{(m)})} \bigvee \alpha_{0}^{(m)}} \cdot \mathbb{I}\left \{g \in \mathcal{G}^{(m)}(\alpha_{0}^{(m)}) \right \}$, $g\in[G^{(m)}]$.}
        Compute candidate set
            $\mathcal{S}_{\text{init}}^{(m)}=\left\{g\in G^{(m)}: \mathcal{A}_{g}^{(m)}\bigcap \left[\bigcap_{l=1}^{M}\left\{j:e_{h(l,j)}^{(l)}>0\right\}\right]\neq \emptyset\right\}$ for $m\in[M]$.
            
    \If{$\bigcap_{m=1}^M\left\{ \widehat{V}^{(m)}_{\mathcal{G}^{(m)}(\alpha_0^{(m)})}\leq \alpha^{(m)}|\mathcal{S}^{(m)}_{\text{init}}| \right\}$ occurs}
    {The selected set is $\mathcal{S}= \mathcal{S}\left(G^{(1)}/\widehat{V}_{\mathcal{G}^{(1)}(\alpha_{0}^{(1)})},\dots,G^{(M)}/\widehat{V}_{\mathcal{G}^{(M)}(\alpha_{0}^{(M)})}\right)$.}
    \Else{
    No features are selected, i.e., $\mathcal{S}=\emptyset$.
    }
	\KwOut{the selected set $\mathcal{S}$. }
\end{algorithm}

\begin{definition}\label{definition1}
	Consider $N$ features (groups). For any $\alpha \in (0,1)$, a feature (group) detection procedure $\mathcal{G}: \mathbb{R}^{n\times (N+1)} \to 2^{[N]}$ determines the rejection threshold $t_{\alpha}$ by
	\begin{equation}\label{FDR_framework}
	t_{\alpha}= \sideset{}{}{\arg\max}_{t}^{}R(t), \quad \text{subject to}~\widehat{\text{FDP}} (t) = \frac{\widehat{V}(t)\vee \alpha} {R(t)\vee 1} \leq \alpha, 
	\end{equation}
	where $\widehat{V}(t)$ and $R(t)$ represent the estimated number of false rejections (estimate of $V(t)$) and the number of rejections under the threshold $t$, respectively. If the procedure $\mathcal{G}$ controls the FDR under finite samples by ensuring $\mathbb{E}[V(t_{\alpha}) \slash \max\{\widehat{V}(t_{\alpha}),\alpha\}] \leq 1$, then $\mathcal{G} \in \mathcal{K}_{\text{finite}}$. If it ensures $\mathbb{E}[V(t_{\alpha}) \slash \max\{ \widehat{V}(t_{\alpha}), \alpha\}] \leq 1$ asymptotically as $(n, N) \to \infty$ at a proper rate, then $\mathcal{G}\in \mathcal{K}_{\text{asy}}$.
\end{definition}
The flexibility of FEFP allows for selecting the feature (group) detection procedure $\mathcal{G}^{(m)}$ for each layer $m\in[M]$ from the sets $\mathcal{K}_{\text{finite}}$ or $\mathcal{K}_{\text{asy}}$. Consequently, researchers can select the most suitable detection procedure for each layer based on prior knowledge.

\begin{proposition}\label{proposition0} 
	Consider an alternative formulation to~(\ref{FDR_framework}) that
	\begin{equation}\label{new_control}
		t_{\alpha}= \sideset{}{}{\arg\max}_{t}^{}R(t), \quad \text{subject to}~\widehat{\mathrm{FDP}} (t) = \frac{\widehat{V}(t)} {R(t)\vee 1} \leq \alpha.
	\end{equation}
	The rejection set of the detection procedure $\mathcal{G} \in \mathcal{K}_{\text{asy}} \bigcup \mathcal{K}_{\text{asy}}$ is unchanged if we replace~(\ref{FDR_framework}) by~(\ref{new_control}).
\end{proposition}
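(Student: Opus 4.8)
The plan is to show that the two constrained optimization problems~(\ref{FDR_framework}) and~(\ref{new_control}) have exactly the same feasible set of thresholds $t$; since they maximize the same objective $R(t)$ over this common set (with whatever canonical tie-breaking rule the base procedure uses), they return the same $t_{\alpha}$ and hence the same rejection set. This argument is entirely deterministic and pointwise in $t$: it never invokes the FDR-control inequality $\mathbb{E}[V(t_{\alpha})/\max\{\widehat V(t_{\alpha}),\alpha\}]\le 1$, so the distinction between membership in $\mathcal{K}_{\text{finite}}$ and in $\mathcal{K}_{\text{asy}}$ is immaterial and the two cases can be treated simultaneously (this also disposes of the evident typo $\mathcal{K}_{\text{asy}}\bigcup\mathcal{K}_{\text{asy}}$, which should read $\mathcal{K}_{\text{finite}}\bigcup\mathcal{K}_{\text{asy}}$).

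\textbf{Key step: the feasible sets coincide.} Fix a threshold $t$. The only difference between the two constraints is the numerator of $\widehat{\text{FDP}}(t)$, namely $\widehat V(t)\vee\alpha$ in~(\ref{FDR_framework}) versus $\widehat V(t)$ in~(\ref{new_control}). I would split on whether $\widehat V(t)\ge\alpha$. If $\widehat V(t)\ge\alpha$, then $\widehat V(t)\vee\alpha=\widehat V(t)$, so the two constraints are literally identical and $t$ is feasible for one iff it is feasible for the other. If $\widehat V(t)<\alpha$, I claim $t$ is feasible for \emph{both} problems unconditionally: for~(\ref{FDR_framework}), $\widehat{\text{FDP}}(t)=\frac{\alpha}{R(t)\vee 1}\le\alpha$ since $R(t)\vee 1\ge 1$; for~(\ref{new_control}), using $\widehat V(t)\ge 0$ and $R(t)\vee 1\ge 1$ we get $\widehat{\text{FDP}}(t)=\frac{\widehat V(t)}{R(t)\vee 1}\le\widehat V(t)<\alpha$. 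Combining the two cases, $\{t:\widehat{\text{FDP}}(t)\le\alpha\}$ is the same set under both formulations; maximizing $R(t)$ over it therefore yields the same $t_{\alpha}$, hence the same rejection set $\mathcal{G}(\alpha)$.

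\textbf{Anticipated obstacle.} There is essentially no hard step; the only point requiring a word of care is that $\arg\max_t R(t)$ need not be unique. Since both problems optimize the \emph{same} objective over the \emph{same} feasible set, any fixed selection convention (e.g.\ the smallest feasible $t$ attaining the maximum, as in the knockoff filter) returns the same $t_{\alpha}$, and thus the same set of rejected indices. I would also record explicitly the mild standing assumption $\widehat V(t)\ge 0$ used above, which holds because $\widehat V(t)$ is by construction an estimate of the number of false rejections; and note that the feasible set is nonempty in both formulations (a threshold large enough to make $R(t)=0$ is always feasible), so the two $\arg\max$ problems are genuinely comparable.
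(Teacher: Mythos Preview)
Your proposal is correct and takes essentially the same approach as the paper: both argue via the case split $\widehat V(t)\ge\alpha$ versus $\widehat V(t)<\alpha$, observing that in the first case the two constraints coincide and in the second case both are automatically satisfied. Your framing (the two feasible sets agree pointwise, hence the same $\arg\max$) is slightly cleaner than the paper's double-inclusion argument on the rejection sets, but the substance is identical.
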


Proposition~\ref{proposition0} characterizes an equivalent form of Definition~\ref{definition1}. With this insight, the following examples illustrate that Definition~\ref{definition1} includes the majority of commonly used feature or group detection procedures with FDR control.

\begin{example}[P-value based method]
	Consider the p-values $\{p_{j}\}$ for $\{H_j\}$ are available. The BH procedure~\cite{Benjamini1995} determines the number of rejections by $k_{\alpha} = \max_{k} \{k:  \# \{j: p_{j} \leq \alpha k \slash N   \} \geq k \}$, which is equivalent to determining the threshold $t_{\alpha}  = \max_{t} \{t\in \{\alpha \slash N, \ldots, \alpha N \slash N \}: \widehat{\text{FDP}} (t)=Nt \slash \#\{i : p_{j} \leq t \} \leq \alpha \}$. 
	It can be proved that $\mathbb{E}[\#\{j\in\mathcal{H}_{0}: p_{j} \leq t_{\alpha}\} \slash (Nt_{\alpha})] \leq 1$ if the p-values are independent, which implies that the BH procedure meets Definition~\ref{definition1}. In addition, it can also be verified that other p-values based methods such as \cite{benjamini2001control, sarkar2002some, storey2004strong} satisfy Definition~\ref{definition1}.
\end{example}

\begin{example}[Knockoff filters]
	Knockoff-based procedures including \cite{Barber2015,Candes2018, 2018Robust, li2022searching,ren2023knockoffs} satisfy Definition~\ref{definition1}. These methods estimate $V(t)$ by the symmetry property of null test statistics.
	Specifically, consider the example for low-dimensional Gaussian linear models. Barber and Candès~\cite{Barber2015} constructed a test statistic $W_{j}$ for $H_{j}$, which exhibits a large positive value when $H_{j}$ is false and is symmetric about zero when $H_{j}$ is true. The number of false discoveries $V(t)= \#\{j \in \mathcal{H}_{0}: W_{j} \geq t \}$ is overestimated by $\#\{j : W_{j} \leq -t \}$, owing to the symmetry property $\#\{j \in  \mathcal{H}_{0}: W_{j} \geq t\} \overset{d}{=} \#\{j \in  \mathcal{H}_{0}: W_{j} \leq -t\}$. 
    By the optional stopping time theorem, it holds that $\mathbb{E}[V(t_{\alpha}) \slash \max\{ \widehat{V}(t_{\alpha}), \alpha\}] \leq 1$.
\end{example}

\begin{example}[Other extensions]
Many powerful alternatives to knockoff methods satisfy Definition~\ref{definition1}.
For example, we demonstrate that the DS procedure~\cite{Dai2022}, a powerful method under high correlated settings with asymptotical FDR control, belongs to $\mathcal{K}_{\text{asy}}$.
In the Section F.1 of the Supplementary Material, we also prove that the Gaussian Mirror (GM) method~\cite{Xing2023} belongs to $\mathcal{K}_{\text{asy}}$.
More recently, Wang et al.~\citep{wang2025adaptive} proposed symmetry-based adaptive selection (SAS) framework, which directly uses two-dimensional statistics to determine a rejection hyperplane.
The SAS framework also satisfies Definition~\ref{definition1}; see Section F.2 of the Supplementary Material.
It can be also demonstrated that other procedures such as the symmetrized data aggregation (SDA) filter~\cite{du2023false} and the data-adaptive threshold selection procedure~\cite{guo2022threshold} also conform Definition~\ref{definition1}.
\end{example}

For clarity in the remainder of the paper, we denote $\widehat{V}_{\mathcal{G}(\alpha)}$ and $\mathcal{G}(\alpha)$ as $\widehat{V}(t_{\alpha})$ and $\mathcal{S}(t_{\alpha})$ in the context of $\mathcal{G}$, respectively.
The notation $\mathcal{G}$ should be distinguished from $\mathcal{G}(\alpha)$, where the former denotes a detection procedure, and the latter refers to a selected set.

\subsubsection*{Construction of the generalized e-values}
We start by explicitly defining the generalized e-values as follows.

\begin{definition}\label{definition2}
	Consider any fixed $N\in \mathbb{Z}_{+}$ and hypotheses $H_{1}, \dots, H_{N}$ associated with non-negative test statistics $e_{1}, \dots, e_{N}$.  
	The sample size is denoted as $n$ and the index set of null hypotheses is $\mathcal{H}_{0}$.
	If $\sum_{j\in\mathcal{H}_{0}}\mathbb{E}[e_{j}]\leq N$ holds, then the $e_{1}, \dots, e_{N}$ are termed relaxed e-values. Consider the asymptotic case where $(n, N )\to \infty$ at a proper rate. For $j\in \mathcal{H}_{0}$, if $\limsup_{(n, N)\to \infty}\mathbb{E}[e_{j}] \leq 1$, then $e_{j}$ is called an asymptotic e-value. Otherwise, if $\limsup_{(n, N)\to \infty}\sum_{j\in\mathcal{H}_{0}}\mathbb{E}[e_{j}]/N\leq 1 $ holds, then $e_{1}, \dots, e_{N}$ are called asymptotic relaxed e-values. Asymptotic e-values, relaxed e-values, and asymptotic relaxed e-values are collectively referred to as the generalized e-values.
\end{definition}

Based on the detection procedure $\mathcal{G}^{(m)} \in \mathcal{K}_{\text{finite}} \bigcup \mathcal{K}_{\text{asy}}$, we provide the construction of the generalized e-values for hypothesis $H_{g}^{(m)}$ which is as outlined below.
\begin{enumerate}
\item[Step 1.] For each $m\in[M]$, execute the detection procedure $\mathcal{G}^{(m)}$ with the original FDR level $\alpha^{(m)}_{0}$.
Denote the rejection set and the estimated number of false discoveries by $\mathcal{G}^{(m)}(\alpha_{0}^{(m)})$ and $\widehat{V}_{\mathcal{G}^{(m)}(\alpha_{0}^{(m)})}$, respectively.
\item[Step 2.] Transform the result of $\mathcal{G}^{(m)}$ into a list of the generalized e-values $\{e_{g}^{(m)}\}_{g\in [G^{(m)}]}$, where
\begin{equation}\label{generalized-e-values}
	e_{g}^{(m)}=  G^{(m)} \cdot \frac{\mathbb{I}\left \{g \in \mathcal{G}^{(m)}(\alpha_{0}^{(m)})\right \} } {\widehat{V}_{\mathcal{G}^{(m)}(\alpha_{0}^{(m)})} \bigvee \alpha_{0}^{(m)}}.
\end{equation}
\end{enumerate}

The validity of generalized e-values in Equation~\eqref{generalized-e-values} is supported by the equivalence between the generalized e-BH procedure and the detection procedure $\mathcal{G}^{(m)}$, where the generalized e-BH procedure refers to the e-BH procedure with generalized e-values.

\begin{theorem}[A generalized version of \cite{ren2022derandomized}]\label{theorem1}
	For any detection procedure $\mathcal{G}^{(m)} \in \mathcal{K}_{\text{finite}} \bigcup \mathcal{K}_{\text{asy}} $, the set $\mathcal{G}^{(m)}(\alpha_{0}^{(m)})$ equals to that of the generalized e-BH procedure at the same level $\alpha^{(m)}_{0}$, provided that the input is $\{e_g^{(m)}\}_{g\in[G^{(m)}]}$ defined in Equation~(\ref{generalized-e-values}). 
\end{theorem}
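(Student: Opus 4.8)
The plan is to prove the equivalence by a direct, deterministic computation at the sample level, since the statement concerns only the rejection sets and carries no probabilistic content. Fix a layer $m$ and abbreviate $\alpha_0 = \alpha_0^{(m)}$, $G = G^{(m)}$, $R = \bigl|\mathcal{G}^{(m)}(\alpha_0^{(m)})\bigr|$, and $\widehat V = \widehat V_{\mathcal{G}^{(m)}(\alpha_0^{(m)})}$. First I would recall the (generalized) e-BH procedure in the form used by the paper: given nonnegative inputs $e_1,\dots,e_G$ and level $\alpha_0$, order them $e_{(1)}\ge\cdots\ge e_{(G)}$, put $\widehat k = \max\bigl\{k\in[G]: e_{(k)} \ge G/(k\alpha_0)\bigr\}$ (with no rejection if that set is empty), and reject the hypotheses with $e_j \ge G/(\widehat k\alpha_0)$, i.e.\ the $\widehat k$ largest ones.

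The crux is that the generalized e-values in~(\ref{generalized-e-values}) are \emph{one-bit}: every $g\in\mathcal{G}^{(m)}(\alpha_0^{(m)})$ receives the common value $c := G/(\widehat V\vee\alpha_0)$, and every other $g$ receives $0$. The case $R=0$ is immediate (all inputs vanish, e-BH rejects nothing, and so does $\mathcal{G}^{(m)}$), so assume $R\ge 1$. Then in sorted order $e_{(1)}=\cdots=e_{(R)}=c$ and $e_{(R+1)}=\cdots=e_{(G)}=0$. For $k>R$ we have $e_{(k)}=0 < G/(k\alpha_0)$, hence $\widehat k\le R$. For $k=R$, I would invoke the feasibility constraint built into Definition~\ref{definition1} (equivalently Proposition~\ref{proposition0}) for $\mathcal{G}^{(m)}$ run at level $\alpha_0$: $\widehat{\text{FDP}}(t_{\alpha_0}) = (\widehat V\vee\alpha_0)/(R\vee 1)\le\alpha_0$, which, using $R\ge 1$, gives exactly $\widehat V\vee\alpha_0 \le \alpha_0 R$, i.e.\ $e_{(R)} = c = G/(\widehat V\vee\alpha_0) \ge G/(R\alpha_0)$. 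Hence $\widehat k\ge R$, so $\widehat k = R$, and the e-BH procedure rejects precisely the $R$ hypotheses carrying the value $c$ — which are exactly those in $\mathcal{G}^{(m)}(\alpha_0^{(m)})$. This establishes the claimed equivalence, uniformly over $\mathcal{G}^{(m)}\in\mathcal{K}_{\text{finite}}\cup\mathcal{K}_{\text{asy}}$, since the only property used is the defining constraint $\widehat{\text{FDP}}\le\alpha_0$, regardless of whether the FDR guarantee of $\mathcal{G}^{(m)}$ is finite-sample or asymptotic.

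I expect the delicate points to be bookkeeping rather than substance: (i) stating the generalized e-BH procedure precisely enough that the ties among the $R$ equal nonzero values cause no ambiguity — because all nonzero inputs coincide, every tie-breaking convention selects the same set; (ii) the degenerate subcase $\widehat V = 0$ with $R\ge 1$, which is covered since then $\widehat V\vee\alpha_0 = \alpha_0\le\alpha_0 R$; and (iii) recognizing that the ``$\vee\,\alpha_0$'' appearing both in the estimated FDP of Definition~\ref{definition1} and in the denominator of~(\ref{generalized-e-values}) is exactly the slack that makes the $k=R$ inequality hold — this alignment is what forces $\widehat k = R$ on the nose rather than merely $\widehat k\ge R$ or $\widehat k\le R$.
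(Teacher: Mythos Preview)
Your proposal is correct and follows essentially the same approach as the paper's proof: both exploit that the e-values are one-bit, argue that zero e-values cannot be rejected by e-BH (giving $\mathcal{S}_{\text{gebh}}\subseteq\mathcal{G}^{(m)}(\alpha_0^{(m)})$), and then use the feasibility constraint $(\widehat V\vee\alpha_0)/(R\vee 1)\le\alpha_0$ to show the $R$ nonzero e-values all meet the threshold $G/(R\alpha_0)$, forcing $\widehat k = R$. Your version is slightly more explicit about the computation of $\widehat k$ and the tie-handling, but the substance is identical.
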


\begin{remark}[Comparison with existing constructions]\label{conpound_evalue}
The recent compound e-values~\cite{banerjee2023harnessing,ignatiadis2024compound} given by
\begin{equation*}\label{compound_evalue}
	\tilde{e}_{g}^{(m)}=  G^{(m)} \cdot \frac{\mathbb{I}\left \{g \in \mathcal{G}^{(m)}(\alpha_{0}^{(m)})\right \} } {\alpha_{0}^{(m)}\left(\left|\mathcal{G}^{(m)}(\alpha_{0}^{(m)})\right|\bigvee 1\right)}.
\end{equation*}
seem to provide an alternative construction of \eqref{generalized-e-values}, since the set $\{\tilde{e}_g^{(m)}\}$ consistently represents a set of (asymptotic) relaxed e-values when procedure $\mathcal{G}^{(m)}$ controls FDR, where $\sum_{g\in \mathcal{H}_{0}^{(m)}}\mathbb{E}[\tilde{e}_{g}^{(m)}] = G^{(m)}\mathbb{E}\left[V_{\mathcal{G}^{(m)}} \slash\left|\mathcal{G}^{(m)}(\alpha_{0}^{(m)})\right|\right] \slash \alpha^{(m)}_{0} \leq G^{(m)}$.\footnote{The expectation $\mathbb{E}\left[V_{\mathcal{G}^{(m)}} \slash\left|\mathcal{G}^{(m)}(\alpha_{0}^{(m)})\right|\right]$ is exactly the achieved FDR of the detection procedure $\mathcal{G}^{(m)}$ with $\alpha_{0}^{(m)}$.}
However, such generalized e-values are inherently less powerful compared to those defined in Equation~\eqref{generalized-e-values} since it always holds that $e_{g}^{(m)'} \leq e_{g}^{(m)}$. The numerical results in the Section D of the Supplementary Materials demonstrate that the generalized e-values given by~\eqref{generalized-e-values} achieve significantly higher power.
Compared with the knockoff e-values~\cite{ren2022derandomized}, our construction is more generic, allowing the construction of generalized e-values based on any detection procedures that satisfy Definition~\ref{definition1}. Also differs from \cite{li2024note} where the validity of e-values relies on the mutual independence of the null p-values, etc., our construction is more generic and direct, which can be obtained from a broad class of FDR-controlled procedures beyond the knockoff-filter and the p-value based approaches. For example, the DS e-values developed in Section~\ref{section4_application}, the GM e-values and the two-dimensional statistics based generalized e-values (SAS e-values) developed in Section F of the Supplementary Material are beyond the scope of \citep{li2024note}.
\end{remark}

\subsubsection*{Leveraging the generalized e-filter}
We illustrate how to identify the selected features by using the derived generalized e-values and the generalized e-filter.
To ensure consistency across all layers, we define the candidate selection set as
\begin{equation}\label{Set}
\mathcal{S}(t^{(1)},\dots,t^{(M)}) = \{j: \text{for all}~m \in [M], e_{h(m,j)}^{(m)} \geq t^{(m)} \}
\end{equation}
for any thresholds $(t^{(1)},\dots,t^{(M)})\in [1, \infty]^{M}$,
%In other words, 
that is, $H_{j}$ is rejected only if all groups containing $X_{j}$ are rejected. 
The set of admissible thresholds, denoted as $\mathcal{T}(\alpha^{(1)}, \dots, \alpha^{(M)})$, is defined by
\[
\mathcal{T}(\alpha^{(1)}, \dots, \alpha^{(M)})=\left\{(t^{(1)}, \dots, t^{(M)}) \in [1, \infty]^{M}: \widehat{\text{FDP}}^{(m)}(t^{(1)}, \dots, t^{(M)}) \leq \alpha^{(m)} \text{for all } m \right\},
\]
where the estimated false discovery rate is defined as
\[
\widehat{\text{FDP}}^{(m)}(t^{(1)}, \dots, t^{(M)})=\frac{G^{(m)}\slash t^{(m)}}{\left|\mathcal{S}^{(m)}(t^{(1)}, \dots, t^{(M)}) \right|}.
\]
The justification for $\widehat{\text{FDP}}^{(m)}$ is based on the approximation
\[
V(t^{(m)}) \approx \sum_{g \in \mathcal{H}^{(m)}_{0}}\mathbb{P}(e_{g}^{(m)}\geq t^{(m)}) \leq \left|\mathcal{H}^{(m)}_{0}\right| \slash t^{(m)} \leq G^{(m)}\slash t^{(m)}.
\]
For each layer $m\in[M]$, the final threshold $\widehat{t}^{(m)}$ is determined by
\begin{equation}\label{threshold}
	\widehat{t}^{(m)} = \min \left\{ t^{(m)}: (t^{(1)},\ldots,t^{(M)}) \in \mathcal{T} (\alpha^{(1)}, \ldots, \alpha^{(M)}) \right\}.
\end{equation}
According to \cite{gablenz2023catch}, we have $(\widehat{t}^{(1)},\dots,\widehat{t}^{(M)})\in \mathcal{T}(\alpha^{(1)},\dots,\alpha^{(M)})$.

Thus far, we establish the hypotheses for multiple testing in the context of feature detection and give a detailed explanation of FEFP.
Note that the generalized e-filter (summarized in Algorithm~\ref{alg2}) can be used to guaranty multilayer FDR control for other hypotheses, provided that the corresponding generalized e-values are available.
The following proposition further establishes the correctness of Algorithm~\ref{alg2}.

\begin{proposition}\label{proposition2} 
The output threshold vector $(\widehat{t}^{(1)}, \dots, \widehat{t}^{(M)})$ generated by Algorithm~\ref{alg2} corresponds precisely to the one given in Equation~(\ref{threshold}).
\end{proposition}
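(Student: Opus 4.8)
I would prove this by adapting the correctness argument used for the p-filter and the multilayer knockoff filter to the $\widehat{\text{FDP}}^{(m)}$ of this paper. The starting point is a quick structural observation: by its very definition as a coordinate-wise minimum over the admissible set $\mathcal{T}(\alpha^{(1)},\dots,\alpha^{(M)})$, the vector $(\widehat{t}^{(1)},\dots,\widehat{t}^{(M)})$ of~(\ref{threshold}) is coordinate-wise no larger than every element of $\mathcal{T}$, and Proposition~\ref{proposition1} asserts that it actually belongs to $\mathcal{T}$; hence it is the unique coordinate-wise least element of $\mathcal{T}$. It therefore suffices to show that Algorithm~\ref{alg1} (a) never produces an iterate that exceeds $(\widehat{t}^{(1)},\dots,\widehat{t}^{(M)})$ in any coordinate, and (b) halts at a point of $\mathcal{T}$; together these pin the output down to the least element.

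The tool behind both parts is monotonicity of the candidate set: by~(\ref{Set}), lowering any coordinate of the threshold vector can only enlarge $\mathcal{S}(t^{(1)},\dots,t^{(M)})$, hence enlarge every $\mathcal{S}^{(m)}$, whereas the numerator $G^{(m)}/t^{(m)}$ of $\widehat{\text{FDP}}^{(m)}$ depends on $t^{(m)}$ alone. Two consequences will be used. First, if the $m$th coordinate is held fixed and the other coordinates are lowered, then $\widehat{\text{FDP}}^{(m)}$ --- and the ratio tested in the update step of Algorithm~\ref{alg1} --- can only decrease. Second, every $(s^{(1)},\dots,s^{(M)})\in\mathcal{T}$ obeys $s^{(m)}\ge 1/\alpha^{(m)}$, since $\widehat{\text{FDP}}^{(m)}(s^{(1)},\dots,s^{(M)})\le\alpha^{(m)}$ together with $|\mathcal{S}^{(m)}|\le G^{(m)}$ forces $1/s^{(m)}\le\alpha^{(m)}$; in particular $\widehat{t}^{(m)}\ge 1/\alpha^{(m)}$.

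For (a) I would induct over the single-coordinate updates executed by Algorithm~\ref{alg1}, maintaining the invariant that $t^{(k)}\le\widehat{t}^{(k)}$ for all $k$. At initialization $t^{(k)}=1/\alpha^{(k)}\le\widehat{t}^{(k)}$ by the bound just noted. Suppose the invariant holds when coordinate $m$ is about to be updated. Replacing the $m$th coordinate of the current threshold vector by $\widehat{t}^{(m)}$ yields a vector that is coordinate-wise dominated by $(\widehat{t}^{(1)},\dots,\widehat{t}^{(M)})$, so by monotonicity the ratio tested in that update, evaluated at $t=\widehat{t}^{(m)}$, is at most $\widehat{\text{FDP}}^{(m)}(\widehat{t}^{(1)},\dots,\widehat{t}^{(M)})\le\alpha^{(m)}$, the last step because $(\widehat{t}^{(1)},\dots,\widehat{t}^{(M)})\in\mathcal{T}$. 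Since moreover $\widehat{t}^{(m)}$ is at least the current value of $t^{(m)}$ (the invariant again), $\widehat{t}^{(m)}$ is feasible for the minimization defining the update, so the updated $t^{(m)}$ does not exceed $\widehat{t}^{(m)}$; the other coordinates are untouched, and the invariant persists.

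For (b), the iterates are non-decreasing, since each update minimizes over $[t^{(m)},+\infty)$ starting from the current $t^{(m)}$, and, because the generalized e-values take only finitely many values, $\mathcal{S}(\cdot)$ changes at only finitely many thresholds, so the iterates stay in a finite set and the algorithm halts. At the halting point each coordinate equals the minimizer of its own update and hence satisfies the tested inequality, which is exactly $\widehat{\text{FDP}}^{(m)}\le\alpha^{(m)}$ for every $m$; thus the output lies in $\mathcal{T}$. Combining (a) and (b), the output belongs to $\mathcal{T}$ and is coordinate-wise no larger than the least element $(\widehat{t}^{(1)},\dots,\widehat{t}^{(M)})$, so the two must coincide. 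I expect the induction in (a) to be the main obstacle: feasibility of $\widehat{t}^{(m)}$ has to be checked against the \emph{current} threshold vector rather than the final one, and this is exactly where monotonicity of $\mathcal{S}(\cdot)$ is needed; a minor side issue is to confirm that the $1\vee$ appearing in the algorithm's stopping rule and the convention $0/0:=0$ implicit in $\widehat{\text{FDP}}^{(m)}$ agree on the empty-selection case, but this is routine.
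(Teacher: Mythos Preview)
Your proposal is correct and follows essentially the same two-part strategy as the paper's proof: an induction showing the iterates never overshoot $(\widehat{t}^{(1)},\dots,\widehat{t}^{(M)})$ (the paper calls this ``prudence''), together with the observation that the halting point lies in $\mathcal{T}$, which forces equality with the coordinate-wise least element. Your write-up is in fact slightly more careful than the paper's in two places: you justify the base case $1/\alpha^{(m)}\le\widehat{t}^{(m)}$ explicitly, and you spell out termination via finiteness of the e-value range, whereas the paper simply asserts $\widehat{t}^{(m)}\in\{G^{(m)}/(\alpha^{(m)}k):k\in[G^{(m)}]\}\cup\{\infty\}$ after the proposition.
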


Now we turn to establish the theoretical property of FEFP.
\begin{algorithm}[t]\label{alg2}
	\caption{The generalized e-filter}
	\KwIn {Generalized e-value $ e_{g}^{(m)}~\text{with respect to hypothesis } H_{g}^{(m)},~g\in[G^{(m)}],~m \in [M]$; target FDR level vector $(\alpha^{(1)}, \ldots, \alpha^{(M)})$; partition $m$ given by $\left\{ \mathcal{A}_{1}^{(m)}, \ldots, \mathcal{A}_{G^{(m)}}^{(m)} \right\}$, $m \in [M]$. }
	\textbf{Initialize:} $t^{(m)}= \frac{1}{\alpha^{(m)}}, m \in[M]$.
	
	\Repeat{$t^{(1)}, \ldots, t^{(M)} $ \textrm{are all unchanged}}{
		\For{$m=1, \ldots, M$}{
			Compute candidate rejection set $\mathcal{S}(t^{(1)},\dots,t^{(M)})$ by Equation~(\ref{Set}).
			
			Update the threshold
			\[
			t^{(m)} \leftarrow \min \left\{t \in [t^{(m)},+ \infty   ] : \frac{G^{(m)} \frac{1}{t}}{ 1  \bigvee 
				\left| \mathcal{S}^{(m)}(t^{(1)},\ldots,t^{(m-1)},t, t^{(m+1)},\ldots,t^{(M)})\right|}  \le \alpha^{(m)} \right\}.
			\]
	}}
	
	\KwOut{the final threshold $\widehat{t}^{(m)} = t^{(m)}$ for $m\in [M]$ and the rejected set $\mathcal{S}(\widehat{t}^{(1)}, \ldots, \widehat{t}^{(M)} )$. }
\end{algorithm}

\subsubsection*{Multilayer FDR control and one-bit property}
We demonstrate that the FEFP method enables the use of distinct feature or group detection procedures, $\mathcal{G}^{(m)}$, across different layers, while maintaining multilayer FDR control.

\begin{lemma}\label{lemma1_1}
	For any $N\in \mathbb{Z}_{+}$, let the threshold vector $(\widehat{t}^{(1)}, \dots, \widehat{t}^{(M)})$ be determined by Equation~(\ref{threshold}) and $(e_1^{(m)},\dots,e_{G^{(m)}}^{(m)})$ be the generalized e-values satisfying Definition~\ref{definition2} for $m\in[M]$.
    Let $\mathcal{S}$ be the selected set of the generalized e-filter. Then it holds that 
    \[
    \textnormal{FDR}^{(m)} \leq \frac{\alpha^{(m)}}{G^{(m)}}\sum_{g=1}^{G^{(m)}} \mathbb{E}[e_{g}^{(m)}\mathbb{I}\{g\in\mathcal{H}_0^{(m)}\}] \quad\text{simultaneously for }m = 1, \dots, M.
    \]
    Specifically, if $\{e_{g}^{(m)}\}$ represents a set of e-values, then 
    $\textnormal{FDR}^{(m)} \leq \pi_{0}^{(m)}\alpha^{(m)}$, where $\pi_{0}^{(m)} =| \mathcal{H}_{0}^{(m)}|/G^{(m)}$ is the null proportion. If $\{e_{g}^{(m)}\}$ represents a set of relaxed e-values, then $\textnormal{FDR}^{(m)} \leq \alpha^{(m)}$. Consider $(n, G^{(m)} )\to \infty$ at a proper rate. If $\{e_{g}^{(m)}\}$ represents a set of asymptotic e-values, then $\limsup_{(n, G^{(m)}) \to \infty}\textnormal{FDR}^{(m)} \leq \overline{\pi}_{0}^{(m)} \alpha^{(m)}$, where $\overline{\pi}_{0}^{(m)} =\limsup_{G^{(m)} \to \infty} | \mathcal{H}_{0}^{(m)}|/G^{(m)}$. If $\{e_{g}^{(m)}\}$ represents a set of asymptotic relaxed e-values, then $\limsup_{(n, G^{(m)}) \to \infty}\textnormal{FDR}^{(m)} \leq \alpha^{(m)}$.
\end{lemma}
In particular, different layers may correspond to distinct types of generalized e-values.

\begin{theorem}\label{theorem2}
	Given $(\alpha^{(1)}_{0},\dots,\alpha^{(M)}_{0}) \in (0,1)^{M}$ and $(\alpha^{(1)},\dots,\alpha^{(M)}) \in (0,1)^{M}$, for each $m \in [M]$, if $\mathcal{G}^{(m)} \in \mathcal{K}_{\text{finite}}$, then FEFP ensures that $\text{FDR}^{(m)} \leq \alpha^{(m)}$ in finite sample settings. Furthermore, if $\mathcal{G}^{(m)} \in \mathcal{K}_{\text{asy}}$ and the group size $|\mathcal{A}_{g}^{(m)}|$ is uniformly bounded, then FEFP guaranties that $\text{FDR}^{(m)} \leq \alpha^{(m)}$ as $(n, N)\to \infty$ at a proper rate.
\end{theorem}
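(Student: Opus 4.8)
The plan is to deduce Theorem~\ref{theorem2} from Lemma~\ref{lemma1_1} by verifying that the one-bit statistics manufactured in the first stage of FEFP are legitimate generalized e-values in the sense of Definition~\ref{definition2}: relaxed e-values when $\mathcal{G}^{(m)} \in \mathcal{K}_{\text{finite}}$ and asymptotic relaxed e-values when $\mathcal{G}^{(m)} \in \mathcal{K}_{\text{asy}}$. Fix a layer $m$, write $t_0 = t_{\alpha_0^{(m)}}$ and $\widehat V = \widehat V_{\mathcal{G}^{(m)}(\alpha_0^{(m)})} = \widehat V(t_0)$, and sum the e-values of Equation~(\ref{generalized-e-values}) over the null groups, using the notation convention $\mathcal{G}^{(m)}(\alpha_0^{(m)}) = \mathcal{S}(t_0)$ and the fact that $V(t_0)$ counts the rejected null groups:
\[
\frac{1}{G^{(m)}}\sum_{g \in \mathcal{H}_0^{(m)}} e_g^{(m)}
= \frac{\#\{g \in \mathcal{H}_0^{(m)} : g \in \mathcal{G}^{(m)}(\alpha_0^{(m)})\}}{\widehat V \vee \alpha_0^{(m)}}
= \frac{V(t_0)}{\widehat V \vee \alpha_0^{(m)}}.
\]
Taking expectations, $\frac{1}{G^{(m)}}\sum_{g \in \mathcal{H}_0^{(m)}} \mathbb{E}[e_g^{(m)}] = \mathbb{E}\bigl[V(t_0)/(\widehat V \vee \alpha_0^{(m)})\bigr]$, which is exactly the quantity bounded in Definition~\ref{definition1}: it is $\le 1$ for finite samples when $\mathcal{G}^{(m)} \in \mathcal{K}_{\text{finite}}$, and has $\limsup \le 1$ as dimension and sample size grow at a proper rate when $\mathcal{G}^{(m)} \in \mathcal{K}_{\text{asy}}$. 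With the role of $N$ in Definition~\ref{definition2} played by $G^{(m)}$, this is precisely the relaxed, respectively asymptotic relaxed, e-value property. Theorem~\ref{theorem1} is the conceptual counterpart of this identity (the generalized e-BH procedure on these statistics reproduces $\mathcal{G}^{(m)}$), but for the FDR argument only the displayed inequality is needed.

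Next I would observe that the repeat-loop of Algorithm~\ref{alg1} is exactly the generalized e-filter (Algorithm~\ref{alg2}) applied to the e-values $\{e_g^{(m)}\}$ just analysed together with the target levels $(\alpha^{(1)},\dots,\alpha^{(M)})$, and that by Proposition~\ref{proposition2} its output is the threshold vector of Equation~(\ref{threshold}) — precisely the object Lemma~\ref{lemma1_1} concerns. Because the e-value type may vary with $m$, I would invoke Lemma~\ref{lemma1_1} one layer at a time (this is legitimate since each bullet is conditional only on the type of $\{e_g^{(m)}\}$). For every $m$ with $\mathcal{G}^{(m)} \in \mathcal{K}_{\text{finite}}$, the bullet of Lemma~\ref{lemma1_1} for relaxed e-values gives $\text{FDR}^{(m)} \le \alpha^{(m)}$ in finite samples, which is the first assertion of the theorem.

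For the asymptotic assertion the remaining issue is reconciling the limit regimes: $\mathcal{K}_{\text{asy}}$ and the asymptotic-relaxed-e-value bullet of Lemma~\ref{lemma1_1} are phrased in terms of $(n, G^{(m)}) \to \infty$, whereas Theorem~\ref{theorem2} states its conclusion for $(n, N) \to \infty$. This is where the uniform bound $|\mathcal{A}_g^{(m)}| \le C$ enters: since the groups partition the $N$ features, $N/C \le G^{(m)} \le N$, so $G^{(m)} \to \infty$ and $G^{(m)} \asymp N$, and any admissible growth rate for one pair is admissible for the other. Feeding the asymptotic relaxed e-value property from the first paragraph into Lemma~\ref{lemma1_1} then yields $\limsup_{(n,N) \to \infty} \text{FDR}^{(m)} \le \alpha^{(m)}$, completing the proof. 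I expect this asymptotic bookkeeping to be the only place requiring genuine care: one must check that the informal proper-rate qualifier is stable under the reparametrisation $N \leftrightarrow G^{(m)}$, and that the $\limsup$ controlling the average null e-value feeds cleanly into the hypothesis of Lemma~\ref{lemma1_1} — any uniform-integrability subtlety in passing from the e-value bound to the FDR bound is internal to the proof of Lemma~\ref{lemma1_1}, so here it suffices to verify that its stated hypotheses hold. The finite-sample half, by contrast, is a one-line substitution once the e-value identity of the first paragraph is recorded.
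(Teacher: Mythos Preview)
Your proposal is correct and follows essentially the same approach as the paper's proof: both compute $\sum_{g\in\mathcal{H}_0^{(m)}} e_g^{(m)} = G^{(m)} V(t_0)/(\widehat V \vee \alpha_0^{(m)})$, invoke the defining inequality of $\mathcal{K}_{\text{finite}}$ or $\mathcal{K}_{\text{asy}}$ to verify the (asymptotic) relaxed e-value property, and then apply Lemma~\ref{lemma1_1}. Your treatment is in fact slightly more explicit than the paper's on two points --- the role of Proposition~\ref{proposition2} in identifying the algorithm's output with the threshold of Equation~(\ref{threshold}), and the reconciliation of the $(n,N)$ and $(n,G^{(m)})$ limit regimes via the uniform group-size bound --- but these are exactly the steps the paper takes or leaves implicit.
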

Theorem~\ref{theorem2} demonstrates that FEFP provides multilayer FDR control.
The following theorem further presents the one-bit property of FEFP.

\begin{theorem}[One-bit Property]\label{onebitproperty}
Write 
\[
\mathcal{S}_{\text{init}}^{(m)}=\left\{g\in G^{(m)}: \mathcal{A}_{g}^{(m)}\bigcap \left[\bigcap_{l=1}^{M}\left\{j:e_{h(l,j)}^{(l)}>0\right\}\right]\neq \emptyset\right\}.
\]
The FEFP selects the set $\mathcal{S}\left(G^{(1)}/\widehat{V}_{\mathcal{G}^{(1)}(\alpha_{0}^{(1)})},\dots,G^{(M)}/\widehat{V}_{\mathcal{G}^{(M)}(\alpha_{0}^{(M)})}\right)$ if and only if the event $\bigcap_{m=1}^M\left\{ \widehat{V}^{(m)}_{\mathcal{G}^{(m)}(\alpha_0^{(m)})}\leq \alpha^{(m)}|\mathcal{S}^{(m)}_{\text{init}}| \right\}$ occurs. Otherwise, if there exists $m$ such that $\widehat{V}^{(m)}_{\mathcal{G}^{(m)}(\alpha_0^{(m)})}>\alpha^{(m)}|\mathcal{S}_{\text{init}}^{(m)}|$, then FEFP makes no rejections.
A sufficient condition for the event $\bigcap_{m=1}^M\left\{ \widehat{V}^{(m)}_{\mathcal{G}^{(m)}(\alpha_0^{(m)})}\leq \alpha^{(m)}|\mathcal{S}^{(m)}_{\text{init}}| \right\}$ to occur is that $|\mathcal{G}^{(m)}(\alpha_0^{(m)})|/|\mathcal{S}^{(m)}_{\text{init}}|\leq \alpha^{(m)}/\alpha_0^{(m)}$ for all $m\in[M]$.
\end{theorem}

Based on this property, we summarize FEFP in Algorithm~\ref{alg1}.
By Theorem~\ref{onebitproperty}, 
FEFP may yield zero power. This instability is caused by the one-bit nature of generalized e-values, that is, all groups initially rejected by $\mathcal{G}^{(m)}$ are assigned equal importance. As a consequence, when there are significant conflicts in the discovery of different layers, FEFP is forced to consider all features as unimportant.

\subsection{SFEFP: Stabilized Flexible E-Filter Procedure}\label{section2.3}
To avoid the zero-power-dilemma, we aim to obtain non-one-bit generalized e-values that better reflect the ranking of (group) feature importance.
We consider two settings as follows.
\begin{itemize}
\item[(1)] If the base detection procedure $\mathcal{G}^{(m)}\in \mathcal{K}_{\text{finite}} \bigcup \mathcal{K}_{\text{asy}}$ has inherent randomness (e.g., the model-X knockoff filter), then different runs of $\mathcal{G}^{(m)}$ may result in different $\{e_{g}^{(m)}\}$. In this case, we suggest repeating $\mathcal{G}^{(m)}$ multiple times and using the averaged generalized e-values.
\item[(2)] If the base detection procedure $\mathcal{G}^{(m)}\in \mathcal{K}_{\text{finite}} \bigcup \mathcal{K}_{\text{asy}}$ has no randomness, then $\{e_{g}^{(m)}\}$ is determined in different runs. In this case, we take the fusion decision, i.e., running a different determined procedure $\mathcal{G}^{(m)}$ and averaging their one-bit generalized e-values.
\end{itemize}

\begin{algorithm}[t]
	\caption{SFEFP: a stabilized flexible e-filter procedure for feature detection}
	\label{alg3}
	\KwIn {Data $(\mathbf{X},\mathbf{y})$; target FDR level vector $(\alpha^{(1)}, \ldots, \alpha^{(M)})$; original FDR level vector $(\alpha_{0}^{(1)}, \ldots, \alpha_{0}^{(M)})$; partition $m$ given by $\left\{ \mathcal{A}_{1}^{(m)}, \ldots, \mathcal{A}_{G^{(m)}}^{(m)} \right\}$, $m \in [M]$. }
\For {$m=1, \ldots, M$}{ 
\For {$r=1,\dots,R^{(m)}$}{
Process detection procedure $\mathcal{G}^{(m)}_{r} \in \mathcal{K}_{\text{finite}} \bigcup \mathcal{K}_{\text{asy}}$ with level $\alpha_{0}^{(m)}$.

Compute the generalized e-value $e_{gr}^{(m)}$ by Equation \eqref{e-value_different_run}, $g\in[G^{(m)}]$.
}
Compute $\overline{e}_{g}^{(m)} =\sum_{r\in[R^{(m)}]} w^{(m)}_{r}e_{gr}^{(m)},g\in[G^{(m)}]$.
}
\textbf{Initialize:} $t^{(m)}= \frac{1}{\alpha^{(m)}}, m \in[M]$.

	\Repeat{ $t^{(1)}, \ldots, t^{(M)} $ are all unchanged }{
		\For{$m=1, \ldots, M$}{
			$t^{(m)} \leftarrow \min \left\{t \in [t^{(m)},+ \infty) : \frac{G^{(m)} \frac{1}{t}}{ 1  \bigvee 
				\left| \mathcal{S}^{(m)}(t^{(1)},\ldots,t^{(m-1)},t, t^{(m+1)},\ldots,t^{(M)})\right|}  \le \alpha^{(m)} \right\} $, 
			
			where $\mathcal{S}^{(m)}$ is computed as in Equation~(\ref{Set}) with input $\overline{e}_{g}^{(m)}$.
	}}
    $\widehat{t}^{(m)} \leftarrow t^{(m)}$, $\mathcal{S}_{\text{stab}}=\mathcal{S}(\widehat{t}^{(1)},\dots,\widehat{t}^{(M)})$.
	
	\KwOut{The selected set $\mathcal{S}_{\text{stab}}$. }
	
\end{algorithm}

For these two settings, we summarize the SFEFP method in Algorithm~\ref{alg3}. 
Specifically, for each layer $m \in [M]$ and each run $r \in [R^{(m)}]$, we choose the detection procedure $\mathcal{G}^{(m)}_{r}\in \mathcal{K}_{\text{finite}} \bigcup \mathcal{K}_{\text{asy}}$.
We compute the $r$-th selection set $\mathcal{G}^{(m)}_{r}(\alpha_{0}^{(m)})$ and the $r$-th estimated number of false discoveries $\widehat{V}_{\mathcal{G}^{(m)}_{r}(\alpha_{0}^{(m)})}$.
The corresponding generalized e-value is given by
\begin{equation}\label{e-value_different_run}
e_{gr}^{(m)}=G^{(m)} \cdot \frac{\mathbb{I}\left \{g \in \mathcal{G}^{(m)}_{r}(\alpha_{0}^{(m)}) \right \}} {\widehat{V}_{\mathcal{G}^{(m)}_{r}(\alpha_{0}^{(m)})} \bigvee \alpha^{(m)}_{0}}.
\end{equation}
The generalized e-values of the realizations of $R^{(m)}$ are subsequently aggregated by calculating a weighted average, given by
\begin{equation}\label{average}
\overline{e}^{(m)}_{g}=\sum_{r=1}^{R^{(m)}} \omega^{(m)}_{r}e_{gr}^{(m)}, \quad\quad \sum_{r=1}^{R^{(m)}} \omega^{(m)}_{r}=1,
\end{equation}
where $\{\omega^{(m)}_{r}\}$ denotes the set of normalized weights. 
We take $\omega^{(m)}_{r}=1/R$ to illustrate the proposed method.
Subsequently, by applying the generalized e-filter to the stabilized generalized e-values at the target levels $\alpha^{(1)}, \dots, \alpha^{(M)}$, we derive a set of selected features $\mathcal{S}_{\text{stab}}$.
Note that for setting (I), $\mathcal{G}_{r}^{(m)}$ is identical to $\mathcal{G}^{(m)}$. And for setting (II), $\mathcal{G}_{r}^{(m)}$ changes when $r$ varies, %For the second case, 
where weights $\{\omega^{(m)}_{r}\}$ reflect the importance of different procedures.

For single resolution, as demonstrated by simulations of \cite{ren2022derandomized} and clarified in \cite{lee2024boosting}, the derandomized procedure tends to suffer power loss compared to %the original procedure, 
the original non-derandomized one,
since the average of tight e-values is no longer tight. 
Here, the results of multiple resolutions are quite different. SFEFP facilitates the acquisition of more precise generalized e-values that better rank the features or groups, which are instrumental in addressing potential irreconcilable conflicts between different layers.

\subsubsection*{Multilayer FDR control and stability guarantee}
The following theorem demonstrates that SFEFP is capable of simultaneously controlling $\text{FDR}^{(m)}$ at the level $\alpha^{(m)}$ for $m\in[M]$.

\begin{theorem}\label{theorem3}
	Given any original vector of FDR level $(\alpha^{(1)}_{0},\dots,\alpha^{(M)}_{0}) \in (0,1)^{M}$, any target vector of FDR level $(\alpha^{(1)}, \dots,\alpha^{(M)}) \in (0,1)^{M}$, and any number of replications $R \geq1$,
	for each $m \in [M]$, if $\mathcal{G}^{(m)} \in \mathcal{K}_{\text{finite}}$, then the selected set $\mathcal{S}_{\text{stab}}$ calculated by SFEFP satisfies $\text{FDR}^{(m)} \leq \alpha^{(m)}$ in finite sample settings. Furthermore, if $\mathcal{G}^{(m)} \in \mathcal{K}_{\text{asy}}$ and the group size $|\mathcal{A}_{g}^{(m)}|$ is uniformly bounded, then SFEFP satisfies $\text{FDR}^{(m)} \leq \alpha^{(m)}$ as $(n, N)\to \infty$ at a proper rate.
\end{theorem}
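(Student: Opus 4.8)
The plan is to reduce Theorem~\ref{theorem3} to Lemma~\ref{lemma1_1} by verifying that, for each layer $m\in[M]$, the aggregated statistics $\{\overline{e}_{g}^{(m)}\}_{g\in[G^{(m)}]}$ produced by Algorithm~\ref{alg3} are generalized e-values of the appropriate type. I would begin with a single replication. Fix $m$ and $r\in[R]$; the procedure $\mathcal{G}_{r}^{(m)}$ is just $\mathcal{G}^{(m)}$ run with fresh internal randomness, hence still a member of $\mathcal{K}_{\text{finite}}$ or $\mathcal{K}_{\text{asy}}$. Summing the defining formula for $e_{gr}^{(m)}$ over $g\in\mathcal{H}_{0}^{(m)}$ and writing $V_{\mathcal{G}_{r}^{(m)}(\alpha_{0}^{(m)})}=\sum_{g\in\mathcal{H}_{0}^{(m)}}\mathbb{I}\{g\in\mathcal{G}_{r}^{(m)}(\alpha_{0}^{(m)})\}$ for the number of false rejections yields
\[
\frac{1}{G^{(m)}}\sum_{g\in\mathcal{H}_{0}^{(m)}}\mathbb{E}\!\left[e_{gr}^{(m)}\right]=\mathbb{E}\!\left[\frac{V_{\mathcal{G}_{r}^{(m)}(\alpha_{0}^{(m)})}}{\widehat{V}_{\mathcal{G}_{r}^{(m)}(\alpha_{0}^{(m)})}\vee\alpha_{0}^{(m)}}\right],
\]
which Definition~\ref{definition1} bounds by $1$ in finite samples when $\mathcal{G}^{(m)}\in\mathcal{K}_{\text{finite}}$, and whose $\limsup$ it bounds by $1$ when $\mathcal{G}^{(m)}\in\mathcal{K}_{\text{asy}}$. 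This is exactly the identity underlying Theorem~\ref{theorem1}, and it shows $\{e_{gr}^{(m)}\}_{g}$ is a set of relaxed e-values, respectively asymptotic relaxed e-values.

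The second step is to check that aggregation preserves this. Since the weights are nonnegative with $\sum_{r}\omega_{r}^{(m)}=1$ and $R$ is a fixed finite integer, linearity of expectation gives
\[
\frac{1}{G^{(m)}}\sum_{g\in\mathcal{H}_{0}^{(m)}}\mathbb{E}\!\left[\overline{e}_{g}^{(m)}\right]=\sum_{r=1}^{R}\omega_{r}^{(m)}\cdot\frac{1}{G^{(m)}}\sum_{g\in\mathcal{H}_{0}^{(m)}}\mathbb{E}\!\left[e_{gr}^{(m)}\right]\le 1
\]
in the finite-sample case, and the same bound holds with $\limsup_{(n,G^{(m)})\to\infty}$ prepended in the asymptotic case, because the $\limsup$ of a finite sum is at most the sum of the termwise $\limsup$'s. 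I would emphasize here that no independence among the $R$ runs is needed---only marginal expectations enter---which is precisely why e-value aggregation derandomizes without damaging validity and why the (possibly dependent) internal randomness of the base procedures is harmless. Hence $\{\overline{e}_{g}^{(m)}\}_{g}$ is again a set of relaxed e-values (finite case) or asymptotic relaxed e-values (asymptotic case), possibly of different types in different layers.

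The third step is to invoke the generalized e-filter guarantee. Algorithm~\ref{alg3} runs the generalized e-filter of Algorithm~\ref{alg2} on $\{\overline{e}_{g}^{(m)}\}$ at the target levels $(\alpha^{(1)},\dots,\alpha^{(M)})$; by Propositions~\ref{proposition1} and \ref{proposition2} the thresholds it returns are admissible and agree with the vector defined in Equation~(\ref{threshold}), so Lemma~\ref{lemma1_1} applies verbatim to $\mathcal{S}_{\text{derand}}$. Its second bullet gives $\text{FDR}^{(m)}\le\alpha^{(m)}$ for the $\mathcal{K}_{\text{finite}}$ layers in finite samples, and its fourth bullet gives $\limsup\text{FDR}^{(m)}\le\alpha^{(m)}$ for the $\mathcal{K}_{\text{asy}}$ layers asymptotically, simultaneously over $m$. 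The one remaining point is to align the growth conditions: when $\mathcal{G}^{(m)}\in\mathcal{K}_{\text{asy}}$, the uniform bound $|\mathcal{A}_{g}^{(m)}|\le C$ forces $G^{(m)}\ge N/C$, so $(n,N)\to\infty$ at a proper rate entails $(n,G^{(m)})\to\infty$ at a proper rate, matching the hypothesis of Lemma~\ref{lemma1_1}; this is the only use of the bounded-group-size assumption, exactly as in the proof of Theorem~\ref{theorem2}.

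I expect the only genuinely delicate part to be the asymptotic bookkeeping in the third step: one has to make sure the ``proper rate'' demanded by Definition~\ref{definition1} for each $\mathcal{G}_{r}^{(m)}$ is compatible across the finitely many runs $r\in[R]$ and the $M$ layers, so that a single joint rate makes every $\limsup$ bound hold at once, and then that the $M$ per-layer $\limsup$ statements can be combined into the claimed simultaneous statement. Everything else---the identity in the first step, the linearity bound in the second, and the reduction to Lemma~\ref{lemma1_1} in the third---is routine given Theorem~\ref{theorem1}, Propositions~\ref{proposition1} and \ref{proposition2}, and Lemma~\ref{lemma1_1}.
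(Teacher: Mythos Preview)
Your proposal is correct and follows essentially the same approach as the paper's proof: show each $\{e_{gr}^{(m)}\}_{g}$ is a set of (asymptotic) relaxed e-values via the identity $\sum_{g\in\mathcal{H}_0^{(m)}}\mathbb{E}[e_{gr}^{(m)}]=G^{(m)}\mathbb{E}[V/(\widehat V\vee\alpha_0^{(m)})]$ and Definition~\ref{definition1}, pass to the weighted average by linearity, and invoke Lemma~\ref{lemma1_1}. If anything, your write-up is more careful than the paper's (you make the bounded-group-size reduction $G^{(m)}\ge N/C$ and the $\limsup$-of-a-finite-sum step explicit); the only quibble is that the key identity comes directly from Definition~\ref{definition1} rather than from Theorem~\ref{theorem1}.
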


We further study the stability of SFEFP. We only need to consider setting (I) since the selection set is determined under setting (II).
For each layer $m\in[M]$, suppose that the feature or group detection procedure $\mathcal{G}^{(m)}$ is randomized. 
Conditional on $(\boldsymbol{X}, \boldsymbol{y})$, the $e_{gr}^{(m)}$'s are i.i.d. and bounded across each replication $r\in[R^{(m)}]$. 
By the strong law of large numbers, we have 
\[
\overline{e}_{g}^{(m)} = \frac{1}{R^{(m)}}\sum\limits_{r=1}^{R^{(m)}} e_{gr}^{(m)} \to \Bar{\Bar{e}}_{g}^{(m)}:= \mathbb{E}[e_{g1}^{(m)} | \boldsymbol{X}, \boldsymbol{y}] \quad \text{almost surely as } R^{(m)}\to \infty. 
\]
Let $\mathcal{S}_{\infty}$ denote the set of selected features obtained by SFEFP as $R^{(m)}\to \infty$, $m\in[M]$.
The set $\mathcal{S}_{\infty}$ becomes a deterministic function of the data $(\boldsymbol{X}, \boldsymbol{y})$ because $\Bar{\Bar{e}}_{g}^{(m)}$ is fixed. 
The following theorem demonstrates the stability guaranty of SFEFP.

\begin{theorem}[Stability]\label{derandomized}
	Let 
	\[
	\Delta^{(m)} = \min_{g\in[G^{(m)}]}\left|\Bar{\Bar{e}}_{g}^{(m)}- \widehat{t}^{(m)}_{\infty}\right|, \quad m\in[M],
	\]
	where $(\widehat{t}_{\infty}^{(1)},\dots,\widehat{t}_{\infty}^{(M)})$ represents the threshold vector corresponding to $\mathcal{S}_{\infty}$. Based on the data $(\boldsymbol{X}, \boldsymbol{y})$,  we have
	\[
\mathbb{P}\left(\mathcal{S}_{\text{stab}} =\mathcal{S}_{\infty} | \boldsymbol{X}, \boldsymbol{y} \right) \geq 1-2\min\left\{\frac{1}{2},\sum_{m=1}^M G^{(m)}\exp\left(\frac{-2(\Delta^{(m)})^2R^{(m)}}{(G^{(m)})^2}\right)\right\} .
	\]
\end{theorem}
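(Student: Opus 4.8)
The plan is to condition throughout on the data $(\boldsymbol{X},\boldsymbol{y})$ and to exhibit a single high-probability event on which the $R$-replication procedure and its $R=\infty$ idealization return identical output. Conditionally on $(\boldsymbol{X},\boldsymbol{y})$, the only randomness left in $e_{g1}^{(m)},\dots,e_{gR}^{(m)}$ is the internal randomization of the $R$ independent runs of $\mathcal{G}^{(m)}$, so for fixed $m,g$ these are i.i.d.\ nonnegative variables bounded above by $G^{(m)}$ with common conditional mean $\bar{\bar e}_g^{(m)}=\mathbb{E}[e_{g1}^{(m)}\mid\boldsymbol{X},\boldsymbol{y}]$. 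I would introduce
\[
\mathcal{E}_m=\bigcap_{g=1}^{G^{(m)}}\bigl\{\,\bigl|\overline e_g^{(m)}-\bar{\bar e}_g^{(m)}\bigr|<\Delta^{(m)}\,\bigr\},\qquad \mathcal{E}=\bigcap_{m=1}^{M}\mathcal{E}_m ,
\]
with $\overline e_g^{(m)}=\tfrac1R\sum_{r=1}^R e_{gr}^{(m)}$, and prove (i) that $\mathcal{S}_{\text{derand}}=\mathcal{S}_\infty$ on $\mathcal{E}$, and (ii) that $\mathbb{P}(\mathcal{E}\mid\boldsymbol{X},\boldsymbol{y})$ is at least the stated lower bound; since $\mathbb{P}(\mathcal{S}_{\text{derand}}=\mathcal{S}_\infty\mid\boldsymbol{X},\boldsymbol{y})\ge\mathbb{P}(\mathcal{E}\mid\boldsymbol{X},\boldsymbol{y})$ this finishes the proof.

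For (i), by the analogue of Propositions~\ref{proposition1}--\ref{proposition2} for Algorithm~\ref{alg3}, both $\mathcal{S}_{\text{derand}}$ and $\mathcal{S}_\infty$ equal $\mathcal{S}(\widehat t^{(1)},\dots,\widehat t^{(M)})$ for the coordinatewise-least admissible threshold vector of Equation~(\ref{threshold}), computed from $\{\overline e_g^{(m)}\}$ and from $\{\bar{\bar e}_g^{(m)}\}$ respectively; write $(\widehat t_\infty^{(1)},\dots,\widehat t_\infty^{(M)})$ for the latter. On $\mathcal{E}$, for every $m,g$ we have $|\overline e_g^{(m)}-\bar{\bar e}_g^{(m)}|<\Delta^{(m)}\le|\bar{\bar e}_g^{(m)}-\widehat t_\infty^{(m)}|$, so $\overline e_g^{(m)}$ and $\bar{\bar e}_g^{(m)}$ lie strictly on the same side of $\widehat t_\infty^{(m)}$, whence the superlevel sets $\{g:\overline e_g^{(m)}\ge\widehat t_\infty^{(m)}\}$ and $\{g:\bar{\bar e}_g^{(m)}\ge\widehat t_\infty^{(m)}\}$ coincide. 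Since the candidate set~(\ref{Set}) and the estimate $\widehat{\text{FDP}}^{(m)}$ depend on the e-values only through such superlevel sets, the vector $(\widehat t_\infty^{(1)},\dots,\widehat t_\infty^{(M)})$ is admissible for $\{\overline e_g^{(m)}\}$ as well and its candidate set is $\mathcal{S}_\infty$; by minimality this already gives $\widehat t^{(m)}\le\widehat t_\infty^{(m)}$ and $\mathcal{S}_{\text{derand}}\supseteq\mathcal{S}_\infty$. For the reverse inclusion I would run the fixed-point iteration of Algorithm~\ref{alg3} in parallel on $\{\overline e_g^{(m)}\}$ and $\{\bar{\bar e}_g^{(m)}\}$: both start at $1/\alpha^{(m)}$ and only update $t^{(m)}$ to points of the finite grid $\{G^{(m)}/(\alpha^{(m)}k):k\in[G^{(m)}]\}\cup\{\infty\}$, and an induction on the iteration count — using that $(\widehat t_\infty^{(m)})_m$ is the coordinatewise-least admissible vector and that $\mathcal{E}$ forces the relevant superlevel sets to agree at the grid points that the iteration actually visits — shows the two trajectories stay in lockstep and terminate at $(\widehat t_\infty^{(1)},\dots,\widehat t_\infty^{(M)})$, hence $\mathcal{S}_{\text{derand}}=\mathcal{S}_\infty$ on $\mathcal{E}$.

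For (ii), Hoeffding's inequality applied to the i.i.d.\ bounded variables $e_{g1}^{(m)},\dots,e_{gR}^{(m)}$ (conditional range at most $G^{(m)}$) gives, for each $m,g$,
\[
\mathbb{P}\bigl(|\overline e_g^{(m)}-\bar{\bar e}_g^{(m)}|\ge\Delta^{(m)}\bigm|\boldsymbol{X},\boldsymbol{y}\bigr)\le 2\exp\left(\frac{-2(\Delta^{(m)})^{2}R}{(G^{(m)})^{2}}\right),
\]
and a union bound over $g\in[G^{(m)}]$ yields $\mathbb{P}(\mathcal{E}_m^c\mid\boldsymbol{X},\boldsymbol{y})\le 2G^{(m)}\exp(-2(\Delta^{(m)})^{2}R/(G^{(m)})^{2})$. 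Because the replication randomness is drawn independently across layers, the events $\mathcal{E}_1,\dots,\mathcal{E}_M$ are conditionally independent, so
\[
\mathbb{P}(\mathcal{E}\mid\boldsymbol{X},\boldsymbol{y})=\prod_{m=1}^M\mathbb{P}(\mathcal{E}_m\mid\boldsymbol{X},\boldsymbol{y})
\ge\prod_{m=1}^M\left[1-2G^{(m)}\exp\left(\frac{-2(\Delta^{(m)})^{2}R}{(G^{(m)})^{2}}\right)\right]
\ge\left(\min_{m\in[M]}\left[1-2G^{(m)}\exp\left(\frac{-2(\Delta^{(m)})^{2}R}{(G^{(m)})^{2}}\right)\right]\right)^{M},
\]
the last step being $\prod_m a_m\ge(\min_m a_m)^M$ (and trivially valid when some factor is negative). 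Combined with (i), this is exactly the asserted bound.

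The step I expect to be the main obstacle is the reverse inclusion $\mathcal{S}_{\text{derand}}\subseteq\mathcal{S}_\infty$ in (i): the easy direction only uses that lowering a threshold cannot lose discoveries, whereas the delicate point is to rule out that the perturbation $\bar{\bar e}\mapsto\overline e$, of size below $\Delta^{(m)}$, makes the layer-$m$ constraint binding at a grid point strictly below $\widehat t_\infty^{(m)}$ and thereby enlarges the selection. Carrying this out rigorously requires exploiting the structure of the coupled fixed-point updates in Algorithm~\ref{alg3} — in particular that $(\widehat t_\infty^{(m)})_m$ is the coordinatewise-minimal admissible vector, so that $\widehat t_\infty^{(m)}$ is the binding grid point at layer $m$ — and arguing that this minimal admissible configuration is stable under perturbations that do not carry any $\bar{\bar e}_g^{(m)}$ across $\widehat t_\infty^{(m)}$. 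Everything else (the conditioning, the Hoeffding/union bound, and the independence-across-layers step) is routine.
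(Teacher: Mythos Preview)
Your overall scheme is the paper's: condition on $(\boldsymbol X,\boldsymbol y)$, introduce $\mathcal{E}=\bigcap_m\{\max_g|\overline e_g^{(m)}-\bar{\bar e}_g^{(m)}|<\Delta^{(m)}\}$, lower-bound $\mathbb{P}(\mathcal{E}\mid\boldsymbol X,\boldsymbol y)$ via Hoeffding, a union bound over $g$, and conditional independence across layers, and then argue that $\mathcal{S}_{\text{derand}}=\mathcal{S}_\infty$ on $\mathcal{E}$. Your part~(ii) and the inclusion $\mathcal{S}_\infty\subseteq\mathcal{S}_{\text{derand}}$ in part~(i) reproduce the paper's computations essentially line for line.

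The divergence is in the reverse inclusion $\mathcal{S}_{\text{derand}}\subseteq\mathcal{S}_\infty$. The paper dispatches it in a single sentence---``Adopting the same argument as above, we also have $\mathcal{S}_\infty\supseteq\mathcal{S}_{\text{derand}}$''---i.e.\ a bare symmetry claim with no details. You instead propose to run the two fixed-point iterations in parallel and assert that on $\mathcal{E}$ the superlevel sets agree ``at the grid points that the iteration actually visits.'' That claim is not supported by $\mathcal{E}$: because $\Delta^{(m)}=\min_g|\bar{\bar e}_g^{(m)}-\widehat t_\infty^{(m)}|$ is tied to the \emph{single} level $\widehat t_\infty^{(m)}$, the event $\mathcal{E}$ pins down the superlevel sets only at $\widehat t_\infty^{(m)}$. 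At a smaller grid point $t_*<\widehat t_\infty^{(m)}$ one can perfectly well have $\bar{\bar e}_g^{(m)}<t_*<\overline e_g^{(m)}$ with $|\overline e_g^{(m)}-\bar{\bar e}_g^{(m)}|<\Delta^{(m)}$; this enlarges the $\overline e$-superlevel set at $t_*$, can render $t_*$ admissible for $\overline e$ though not for $\bar{\bar e}$, and breaks the lockstep at the very first update of the algorithm. So your induction does not go through as written. You are right that this is the crux; the paper's one-line appeal to symmetry does not supply the missing argument either (the roles of $(\bar{\bar e},\widehat t_\infty)$ and $(\overline e,\widehat t)$ are not symmetric in the definition of $\Delta^{(m)}$), so what is actually needed here is either a refinement of $\mathcal{E}$ that controls the gap to \emph{every} relevant grid point, or a separate argument ruling out that any grid point strictly below $\widehat t_\infty^{(m)}$ becomes admissible under perturbations of size $<\Delta^{(m)}$.
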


\subsection{Choices of parameters}\label{Choices of parameters}
The original FDR level $\{\alpha_0^{(m)}\}$ does not affect the FDR guaranty, but influences the detection power.
In general, a small $\alpha_{0}^{(m)}$ implies running the base procedure strictly, which reduces the number of rejections and lowers the estimated number of false rejections, thereby resulting in non-zero generalized e-values of greater magnitude. However, this also leads to smaller $|\mathcal{G}^{(m)}|$.
Furthermore, unlike single resolution, a large fraction $\alpha^{(m)}/\alpha_0^{(m)}$ results in a better ability to handle conflicts between different layers, as demonstrated by Theorem~\ref{onebitproperty}.

For $M=1$, the choice of $\alpha_0^{(1)}$ is consistent with \cite{ren2022derandomized}.
Suppose that non-nulls exhibit extremely strong signal strength such that $\mathcal{G}_{r}^{(1)}$ selects the entire set of non-nulls for each run $r$. Denote the selected set by $\mathcal{C}$.
Thus, the number of false discoveries $|\mathcal{G}_{r}^{(1)}(\alpha_{0}^{(1)}) \cap \mathcal{H}_{0}|$ can be approximated by $\alpha_{0}^{(1)} |\mathcal{C}| \slash (1-\alpha_{0}^{(1)})$ as a result of $\widehat{\text{FDP}}  \approx \text{FDP} \approx \alpha_{0}^{(1)}$, which further implies that $\overline{e}_{j} \approx N \slash (\alpha_{0}^{(1)} |\mathcal{C}| \slash (1-\alpha_{0}^{(1)})) $ for $j\in \mathcal{B}$ and $\overline{e}_{j} \approx 0$ for $j \in [N] \setminus \mathcal{C}$.
To ensure that the set $\mathcal{C}$ is selected by the generalized e-BH procedure, it is necessary that $ N \slash (\alpha_{0}^{(1)} |\mathcal{C}| \slash (1-\alpha_{0}^{(1)})) \geq N \slash (\alpha^{(1)} |\mathcal{C}|)$, which subsequently implies $\alpha_{0}^{(1)} \leq \alpha^{(1)} \slash (1+\alpha^{(1)})$.
A special case occurs when $R=1$, where $\alpha_{0}^{(1)}=\alpha^{(1)}$ is evidently the optimal choice.

The choice of $\{\alpha_0^{(m)}\}$ under multiple resolutions ($M \geq 2$) differs from $M=1$. The key is that $\alpha_{0}^{(m)} = \alpha^{(m)}$ is no longer optimal when $R=1$.  
Specifically, if $R=1$, the generalized e-values are binary at each resolution.
As previously discussed, coordination between the discoveries across different layers is crucial, and such all-or-nothing generalized e-values may reduce the coordinated space, potentially resulting in zero power. 
Therefore, the selection of $\alpha_{0}^{(m)}$ should consider reducing the contradictions between different layers, rather than only pursuing to maximize the number and the magnitude of the non-zero e-values at each layer.

We provide several recommendations as follows. 
Firstly, choose $\alpha_{0}^{(m)} \leq \alpha^{(m)} \slash (1+\alpha^{(m)})$.
This is because a feature is selected by SFEFP only when its corresponding group is selected separately at each layer. 
Demonstrated by the simulation results, $\alpha_{0}^{(m)} = \alpha^{m} \slash 2$ is great choice.
Secondly, a smaller $\alpha_{0}^{(m)}$ is recommended in high-dimensional scenarios with sparse signals.
The reason is that increasing $\alpha^{(m)}_{0}$ has a limited impact on boosting the number of non-zero generalized e-values due to sparsity, but it tends to get inflated estimates of false discoveries, ultimately resulting in non-zero generalized e-values with smaller magnitude. 
Finally, as suggested by Gablenz and Sabatti~\cite{gablenz2023catch}, selecting $\alpha_{0}^{(m)}$ by tuning with a hold-out data set may be feasible, although their context differs somewhat from ours.

\section{Examples for SFEFP}\label{section4_application}
We introduce several useful filters as examples of SFEFP. These filters serve as powerful alternatives to MKF~\cite{katsevich2019multilayer} in different specific scenarios.
The first example is \textit{eDS-filter}, which is designed for the high-correlated features.
We develop this by firstly extending DS to group detection and then applying the SFEFP. The second example is \textit{eDS+gKF-filter}, which is designed for the settings where features within the same group are highly correlated, whereas correlations across groups are weak. Additional potential extensions of SFEFP are also discussed.

\subsection{eDS-filter: multilayer FDR control by data splitting}\label{section3.1}
Dai et al.~\cite{Dai2022} proposed the DS procedure and its refinement (MDS) as feature detection methods with asymptotic FDR control, specifically tailored for high-dimensional regression models. It has been demonstrated through extensive simulations~\cite{Dai2022,dai2023scale} that DS and MDS exhibit significantly higher power than knockoff when features are highly correlated in regression models. Inspired by this, in this paper, we extend the DS method to group detection by constructing group mirror statistics~\cite{dai2016knockoff}. Then we establish the \textit{eDS-filter} for multilayer FDR control.

We begin by reviewing the DS method. Let the set of coefficients $\widehat{\boldsymbol{\beta}}=(\widehat{\beta}_{1},\dots,\widehat{\beta}_{N})^\top$ represent the importance of each feature, where a larger value of $|\widehat{\beta}_{j}|$ suggests that $X_{j}$ is more important.
The DS procedure randomly splits the $n$ observations into two subsets $(\boldsymbol{y}^{(i)}, \boldsymbol{X}^{(i)})_{i=1,2}$ and uses each subset of the data to estimate $\widehat{\boldsymbol{\beta}}^{(1)}$ and $\widehat{\boldsymbol{\beta}}^{(2)}$, respectively. 
To ensure the independence of $\widehat{\boldsymbol{\beta}}^{(1)}$ and $\widehat{\boldsymbol{\beta}}^{(2)}$, the data-splitting must be independent of the response vector $\boldsymbol{y}$. 
In order to estimate the false discoveries, the following assumption is required. 
\begin{assumption}[Symmetry]\label{assumption1}
	For $j\in \mathcal{H}_{0}$, the sampling distribution of $\widehat{\beta}^{(1)}_{j}$ or $\widehat{\beta}^{(2)}_{j}$ is symmetric about zero.
\end{assumption}
A general form of the test statistic $W_{j}$ is
\[
W_{j}=\text{sign}(\widehat{\beta}_{j}^{(1)}\widehat{\beta}_{j}^{(2)})f(|\widehat{\beta}^{(1)}_{j}|,|\widehat{\beta}^{(2)}_{j}|),
\]
where the function $f(u,v)$ is nonnegative, exchangeable, and monotonically increasing defined for nonnegative $u$ and $v$; for instance, the function $f(u, v) = u+v$ satisfies these conditions.
A feature with a larger positive test statistic is more likely to be relevant. 
Meanwhile, the sampling distribution of the test statistic for any irrelevant feature is symmetric around zero.
For any $\alpha \in (0,1)$, the set $\{j:W_{j} \geq t_{\alpha}\}$ is selected, where
\[
t_{\alpha}=\min\left\{ t > 0 : \widehat{\text{FDP}}(t)=\frac{\#\{j : W_{j} < -t   \}} { \#\{j : W_{j} >t   \} \vee \text{1}} \leq \alpha \right\}.
\]
Under the assumption of weak dependence in~\cite{Dai2022} and certain technical conditions, it has been demonstrated that the FDR control is achieved as $N \to \infty$.

\subsubsection*{Data splitting for group detection}
We now expand the data splitting framework to address the problem of group detection.
To avoid introducing additional symbols, we focus on any fixed $m\in \{2,\dots,M\}$.
For hypothesis $H_{g}^{(m)}$, the test statistic $T_{g}^{(m)}$ is constructed as 
\begin{equation}\label{group statistics}
T_{g}^{(m)}=\frac{1}{|\mathcal{A}_{g}^{(m)}|}\sum\limits_{j\in \mathcal{A}_{g}^{(m)}}W_{j},\quad g\in[G^{(m)}].
\end{equation}

\begin{lemma} \label{lemma01}
	Under Assumption~\ref*{assumption1}, regardless of the data-splitting procedure, we have
	\[
	\#\left\{g\in  \mathcal{H}_{0}^{(m)}: T_{g}^{(m)}\geq t \right\} \overset{d}{=} \#\left\{g\in  \mathcal{H}_{0}^{(m)}: T_{g}^{(m)}\leq -t\right\}, \quad {\forall} t>0.
	\]
\end{lemma}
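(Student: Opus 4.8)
The plan is to reduce the group-level identity to a sign-flip symmetry of the feature-level mirror statistics restricted to the null coordinates, and then to read off that symmetry from Assumption~\ref{assumption1}. Write $\boldsymbol{W}_{\mathcal{H}_{0}}=(W_{j})_{j\in\mathcal{H}_{0}}$. Since every null group obeys $\mathcal{A}_{g}^{(m)}\subseteq\mathcal{H}_{0}$ for $g\in\mathcal{H}_{0}^{\text{grp}}$, the collection $(T_{g}^{(m)})_{g\in\mathcal{H}_{0}^{\text{grp}}}$ is a fixed, known function of $\boldsymbol{W}_{\mathcal{H}_{0}}$ alone, namely within-group averaging, and this function is odd: replacing $\boldsymbol{W}_{\mathcal{H}_{0}}$ by $-\boldsymbol{W}_{\mathcal{H}_{0}}$ turns every $T_{g}^{(m)}$ into $-T_{g}^{(m)}$. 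Hence, for each fixed $t>0$, the map $\boldsymbol{w}\mapsto\#\{g\in\mathcal{H}_{0}^{\text{grp}}:\frac{1}{|\mathcal{A}_{g}^{(m)}|}\sum_{j\in\mathcal{A}_{g}^{(m)}}w_{j}\geq t\}$ takes the value $\#\{g\in\mathcal{H}_{0}^{\text{grp}}:T_{g}^{(m)}\leq -t\}$ at $-\boldsymbol{W}_{\mathcal{H}_{0}}$ and the value $\#\{g\in\mathcal{H}_{0}^{\text{grp}}:T_{g}^{(m)}\geq t\}$ at $\boldsymbol{W}_{\mathcal{H}_{0}}$. It therefore suffices to prove
\[
\boldsymbol{W}_{\mathcal{H}_{0}}\overset{d}{=}-\boldsymbol{W}_{\mathcal{H}_{0}},
\]
and the claimed identity then follows for every $t>0$ (in fact simultaneously, through this one coupling).

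For the displayed symmetry I would use the explicit form $W_{j}=\text{sign}(\widehat{\beta}_{j}^{(1)}\widehat{\beta}_{j}^{(2)})\,f(|\widehat{\beta}_{j}^{(1)}|,|\widehat{\beta}_{j}^{(2)}|)$, in which $f$ enters only through the two magnitudes. For each $j\in\mathcal{H}_{0}$ fix a split index $s_{j}\in\{1,2\}$ for which $\widehat{\beta}_{j}^{(s_{j})}$ is symmetric about zero, which exists by Assumption~\ref{assumption1}, and let $\Psi$ be the transformation of $(\widehat{\boldsymbol{\beta}}^{(1)},\widehat{\boldsymbol{\beta}}^{(2)})$ that flips the sign of $\widehat{\beta}_{j}^{(s_{j})}$ for every $j\in\mathcal{H}_{0}$ and leaves all other coordinates untouched. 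Applying $\Psi$ flips $\text{sign}(\widehat{\beta}_{j}^{(1)}\widehat{\beta}_{j}^{(2)})$ while preserving both magnitudes, so it sends $W_{j}\mapsto-W_{j}$ for every $j\in\mathcal{H}_{0}$, that is, it maps $\boldsymbol{W}_{\mathcal{H}_{0}}$ to $-\boldsymbol{W}_{\mathcal{H}_{0}}$. What remains is to check that $\Psi$ preserves the joint law of $(\widehat{\boldsymbol{\beta}}^{(1)},\widehat{\boldsymbol{\beta}}^{(2)})$.

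I expect this last point to be the main obstacle, and I would handle it by conditioning. Condition on the magnitudes $\{|\widehat{\beta}_{j}^{(1)}|,|\widehat{\beta}_{j}^{(2)}|\}_{j\in\mathcal{H}_{0}}$, on all coordinates of $\widehat{\boldsymbol{\beta}}^{(1)}$ and $\widehat{\boldsymbol{\beta}}^{(2)}$ indexed outside $\mathcal{H}_{0}$, and on the signs $\{\text{sign}(\widehat{\beta}_{j}^{(3-s_{j})})\}_{j\in\mathcal{H}_{0}}$ carried by the non-symmetric halves. Using that the two subsamples are independent---which holds because the split is taken independently of $\boldsymbol{y}$---together with Assumption~\ref{assumption1}, the remaining signs $\{\text{sign}(\widehat{\beta}_{j}^{(s_{j})})\}_{j\in\mathcal{H}_{0}}$ form a random vector whose conditional law is invariant under negation of all its coordinates; hence $\Psi$ leaves the conditional distribution of $(\widehat{\boldsymbol{\beta}}^{(1)},\widehat{\boldsymbol{\beta}}^{(2)})$ unchanged, and integrating over the conditioning variables gives the full $\Psi$-invariance, so $\boldsymbol{W}_{\mathcal{H}_{0}}\overset{d}{=}-\boldsymbol{W}_{\mathcal{H}_{0}}$. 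Combining this with the reduction in the first paragraph completes the proof. The delicate part is precisely the conditioning step: it is where the per-coordinate symmetry granted by Assumption~\ref{assumption1} must be promoted to the joint sign-flip invariance that the counting statistic needs, which in turn leans on the conditional-independence structure of the data-splitting construction of \cite{Dai2022}; by contrast, the passage from features to groups is handled cleanly by the oddness observation and introduces no genuine difficulty.
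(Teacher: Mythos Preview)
The paper itself supplies no proof beyond the remark that it is ``straightforward,'' so there is no detailed argument to compare against. Your reduction to the joint sign-flip identity $\boldsymbol{W}_{\mathcal{H}_{0}}\overset{d}{=}-\boldsymbol{W}_{\mathcal{H}_{0}}$ via oddness of the group-averaging map is the natural route and is almost certainly the intended one; the passage from features to groups is handled cleanly, as you note.

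The gap lies exactly where you flag it, and your conditioning sketch does not close it. Assumption~\ref{assumption1} grants only \emph{marginal} symmetry of each null $\widehat{\beta}_{j}^{(s_{j})}$, and this together with independence of the two data halves is \emph{not} enough to conclude that the conditional law of the sign vector $(\operatorname{sign}\widehat{\beta}_{j}^{(s_{j})})_{j\in\mathcal{H}_{0}}$ is invariant under simultaneous negation of all coordinates. A two-coordinate obstruction: take $\widehat{\beta}_{1}^{(2)}=\epsilon_{1}$ and set $\widehat{\beta}_{2}^{(2)}=\epsilon_{2}$ when $\epsilon_{1}=1$ but $\widehat{\beta}_{2}^{(2)}=2\epsilon_{2}$ when $\epsilon_{1}=-1$, with $\epsilon_{1},\epsilon_{2}$ independent Rademacher. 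Each coordinate is marginally symmetric about zero, yet $(\widehat{\beta}_{1}^{(2)},\widehat{\beta}_{2}^{(2)})$ and $(-\widehat{\beta}_{1}^{(2)},-\widehat{\beta}_{2}^{(2)})$ have different joint laws: the event $\{\widehat{\beta}_{1}^{(2)}=1,\,|\widehat{\beta}_{2}^{(2)}|=2\}$ has probability $0$ in the first and $1/2$ in the second. Your $\Psi$-invariance step therefore requires a \emph{joint} input, namely $(\widehat{\beta}_{j}^{(s_{j})})_{j\in\mathcal{H}_{0}}\overset{d}{=}-(\widehat{\beta}_{j}^{(s_{j})})_{j\in\mathcal{H}_{0}}$ conditionally on the other half. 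This does hold in the paper's concrete Lasso+OLS instantiation---the null OLS coordinates on the second half are jointly Gaussian with mean zero, hence jointly sign-symmetric---but it does not follow from Assumption~\ref{assumption1} as literally stated. To repair the argument you should either invoke this stronger joint symmetry explicitly, or retreat to the marginal statement $T_{g}^{(m)}\overset{d}{=}-T_{g}^{(m)}$ for each null $g$ individually, which Assumption~\ref{assumption1} does deliver and which is all the paper actually uses downstream (it only needs the approximate equality of counts, cf.\ the display immediately following the lemma).
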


\begin{remark}\label{group_test}
	There exist several alternative ways to construct test statistics.
	For example, when an important group contains only a few relevant features, but these features exhibit strong signal strength, the test statistic $T_{g}^{(m)} = \max_{j\in \mathcal{A}_{g}^{(m)}} W_{j}$ may exhibit greater power. Similarly, we have $\#\left\{g\in  \mathcal{H}_{0}^{(m)}: T_{g}^{(m)}\geq t \right\} \overset{d}{=} \#\left\{g\in  \mathcal{H}_{0}^{(m)}: \min_{j\in\mathcal{A}_{g}^{(m)}}W_{j} \leq -t\right\}$ for any $t>0$. 
\end{remark}
The proof is straightforward. It can be easily verified that if hypothesis $H_{g}^{(m)}$ is false, the test statistic $T_{g}^{(m)}$ tends to have a large value.
This property, together with Lemma~\ref{lemma01}, immediately provides a precise upper bound on the number of false discoveries that
\[
\#\{g \in  \mathcal{H}_{0}^{(m)}: T_{g}^{(m)} \geq t   \}  \approx \#\{g \in  \mathcal{H}_{0}^{(m)}: T_{g}^{(m)}\leq -t \}  \leq  \#\{g : T_{g}^{(m)} \leq -t  \},\quad {\forall} t>0.
\]
Hence, the estimate of $\text{FDP}^{(m)}(t)$, denoted as $\widehat{\text{FDP}}^{(m)} (t) $, is given by
\[
 \widehat{\text{FDP}}^{(m)} (t) =  \frac{\#\left\{g : T_{g}^{(m)}< -t  \right\}} { \#\left\{g : T_{g}^{(m)} >t   \right\} \bigvee \text{1}}.
\]
The set of selected groups is then given by $\mathcal{G}^{(m)}(\alpha^{(m)})=\left\{g\in[G^{(m)}]:T_{g}^{(m)} \geq t_{\alpha^{(m)}}\right\}$, where $t_{\alpha^{(m)}}=\min\left\{ t > 0 : \widehat{\text{FDP}}^{(m)} (t) \leq \alpha^{(m)} \right\}$.
The following assumption ensures weak dependence among the null test statistics, which is essential for controlling the FDR.

\begin{assumption}[Group weak dependence]\label{assumption2}
	The test statistics $T^{(m)}_{g}$'s are continuous random variables, and there exist constants $c^{(m)} > 0$ and $\lambda^{(m)} \in (0,2) $ such that
	\[
	\sum\limits_{ g\neq h \in \mathcal{H}_{0}^{(m)}}\mathrm{Cov}\left(\mathbb{I}\left\{T_{g}^{(m)}>t\right\}, \mathbb{I}\left\{T_{h}^{(m)}>t\right\}\right) \leq c^{(m)} |\mathcal{H}^{(m)}_{0}|^{\lambda^{(m)}}, \quad {\forall}t \in \mathbb{R} \nonumber.
	\]
\end{assumption}

\begin{theorem}\label{theorem6}
	Given any $\alpha^{(m)} \in (0, 1)$, suppose that there exists a constant $\tau_{\alpha^{(m)}} >  0$ such that $\mathbb{P}(\text{FDP}^{(m)}(\tau_{\alpha^{(m)}}) \leq \alpha^{(m)}) \to 1 $ as $G^{(m)} \to \infty$. Additionally, assume that $\text{Var}(T_{g}^{(m)})$ is uniformly upper bounded and also lower bounded away from zero. Then, under Assumptions~\ref{assumption1} and \ref{assumption2}, we have 
	\[
	\limsup_{G^{(m)}\to \infty} \mathrm{FDR}^{(m)}(t_{\alpha^{(m)}})  \leq \alpha^{(m)}.
	\]
\end{theorem}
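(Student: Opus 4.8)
The plan is to follow the mirror-statistic FDR template of \cite{Dai2022} (see also \cite{du2023false}), adapted to the group statistics. Fix the layer $m$ and abbreviate $q=\alpha^{(m)}$, $\mathcal{H}_0=\mathcal{H}_0^{(m)}$, $T_g=T_g^{(m)}$, $t_q=t_{\alpha^{(m)}}$, $\tau_q=\tau_{\alpha^{(m)}}$, and write $A_+(t)=\#\{g\in\mathcal{H}_0:T_g\ge t\}$, $A_-(t)=\#\{g\in\mathcal{H}_0:T_g\le -t\}$, $R(t)=\#\{g\in[G^{(m)}]:T_g\ge t\}$. First I would record the algebraic decomposition of the realized FDP at the data-driven cutoff. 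Because the $T_g$ are continuous (Assumption~\ref{assumption2}), ties occur with probability zero, $\widehat{\text{FDP}}^{(m)}$ is right-continuous so $\widehat{\text{FDP}}^{(m)}(t_q)\le q$ by definition of $t_q$, and $\#\{g:T_g<-t_q\}\ge A_-(t_q)$ a.s.; the counts $R(t_q),\#\{g:T_g>t_q\},\#\{g:T_g<-t_q\}$ agree with their weak/strict counterparts up to $O(1)$ corrections that are negligible after normalization. On the event that the procedure rejects at least one group (the complement being trivial, since then $\text{FDP}^{(m)}=0$),
\[
\text{FDP}^{(m)}(t_q)=\frac{A_+(t_q)}{R(t_q)\vee1}=\widehat{\text{FDP}}^{(m)}(t_q)\cdot\frac{A_+(t_q)}{\#\{g:T_g<-t_q\}\vee1}\le q\cdot\frac{A_+(t_q)}{A_-(t_q)\vee1}.
\]
Since $\text{FDP}^{(m)}\le1$ always, it then suffices to show the correction ratio $A_+(t_q)/(A_-(t_q)\vee1)$ equals $1+o_P(1)$, after which bounded convergence gives $\limsup_{G^{(m)}\to\infty}\text{FDR}^{(m)}(t_q)\le q$.

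Next I would prove a uniform-in-threshold concentration for the null counts. By Lemma~\ref{lemma01}, $A_+(t)\overset{d}{=}A_-(t)$ for every fixed $t>0$, so $\mathbb{E}[A_+(t)]=\mathbb{E}[A_-(t)]$, and the sign-symmetry of the null group statistics underlying Lemma~\ref{lemma01} transfers the covariance bound of Assumption~\ref{assumption2} to the left tail as well. Consequently $\text{Var}(A_\pm(t))\le|\mathcal{H}_0|/4+c^{(m)}|\mathcal{H}_0|^{\lambda^{(m)}}=o(|\mathcal{H}_0|^2)$ since $\lambda^{(m)}<2$, and Chebyshev's inequality yields $|\mathcal{H}_0|^{-1}\bigl(A_\pm(t)-\mathbb{E}[A_\pm(t)]\bigr)\to0$ in probability for each fixed $t$. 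To make the convergence of $|\mathcal{H}_0|^{-1}A_\pm(\cdot)$ uniform over a compact window $t\in(0,\tau_q]$ I would use monotonicity: $A_\pm(\cdot)$ and $R(\cdot)$ are non-increasing step functions, so pointwise control on a fine deterministic grid together with the sandwich $A_\pm(t_{k+1})\le A_\pm(t)\le A_\pm(t_k)$ and continuity of the limiting mean profiles upgrade it to uniform control. The two-sided bounds on $\text{Var}(T_g)$ keep the null statistics non-degenerate, so the common limit of $|\mathcal{H}_0|^{-1}A_\pm(t)$ stays bounded away from zero on $(0,\tau_q]$; in particular $\mathbb{E}[A_-(\tau_q)]\ge c_0|\mathcal{H}_0|\to\infty$ for a constant $c_0>0$.

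Finally I would pin down the random cutoff and assemble. The assumption $\mathbb{P}(\text{FDP}^{(m)}(\tau_q)\le q)\to1$, combined with the previous step (which makes $\widehat{\text{FDP}}^{(m)}$ and $\text{FDP}^{(m)}$ asymptotically equivalent in the regime where the negative tail is dominated by nulls), forces $\widehat{\text{FDP}}^{(m)}(\tau_q)\le q$ with probability tending to one, hence $t_q\le\tau_q$ with high probability. On the intersection of this event with the uniform-concentration event, $t_q\in(0,\tau_q]$, so $A_+(t_q)$ and $A_-(t_q)$ both equal $\mathbb{E}[A_-(t_q)]\,(1+o_P(1))$ with $\mathbb{E}[A_-(t_q)]\ge c_0|\mathcal{H}_0|\to\infty$; dividing, $A_+(t_q)/(A_-(t_q)\vee1)=1+o_P(1)$, which closes the argument through the first step. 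The hard part will be the last two steps together: getting the concentration uniformly in the threshold, and — most of all — ruling out that $t_q$ drifts into a region where $A_-(t_q)$ is of strictly smaller order than $|\mathcal{H}_0|$, which would make the correction ratio an unstable quotient of small integer counts. That is exactly what the assumed deterministic $\tau_q$ is for (an upper bound on $t_q$) and what the two-sided variance bounds are for (non-degeneracy, hence an order-$|\mathcal{H}_0|$ lower bound on the left-tail null count over the window); the rest is the routine mirror-statistic bookkeeping of \cite{Dai2022}.
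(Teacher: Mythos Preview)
Your proposal is essentially the same approach as the paper's. Both arguments hinge on (i) a uniform-in-$t$ law of large numbers for the normalized null tail counts $|\mathcal{H}_0|^{-1}A_\pm(t)$, proved via Chebyshev with the weak-dependence covariance bound and upgraded to uniformity through a grid/monotonicity sandwich (this is exactly the paper's auxiliary Lemma in the supplement), and (ii) localizing the random threshold via the assumed deterministic $\tau$. The only cosmetic difference is that the paper routes the comparison through an intermediate quantity $\overline{\text{FDP}}$ with the mean $L^0$ in the numerator and uses a triangle-inequality (additive) decomposition $\text{FDP}\le |\text{FDP}-\overline{\text{FDP}}|+|\overline{\text{FDP}}-\text{FDP}^+|+\text{FDP}^+$, whereas you use the multiplicative correction-ratio $\text{FDP}\le q\cdot A_+(t_q)/(A_-(t_q)\vee1)$. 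Both unpack to the same uniform LLN.

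One small technical point worth tightening: from $\mathbb{P}(\text{FDP}(\tau_q)\le q)\to1$ and $|\widehat{\text{FDP}}(\tau_q)-\text{FDP}(\tau_q)|=o_P(1)$ you only get $\widehat{\text{FDP}}(\tau_q)\le q+o_P(1)$, which does \emph{not} imply $\mathbb{P}(\widehat{\text{FDP}}(\tau_q)\le q)\to1$ and hence does not directly give $t_q\le\tau_q$ with high probability. The paper sidesteps this by invoking the hypothesis at a slightly shrunken level and working with $\tau_{q-\epsilon}$, so that $\text{FDP}(\tau_{q-\epsilon})\le q-\epsilon$ leaves an $\epsilon$ cushion to absorb the $o_P(1)$ term; you should do the same (and then let $\epsilon\downarrow0$ at the end).
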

The proof is similar to that of Proposition 2.2 in~\cite{Dai2022}. It is important to note that, apart from Assumptions \ref{assumption1} and \ref{assumption2}, the remaining conditions are merely technical, aimed at handling the most general case without the need to specify a parametric model between the response and features. Under the data splitting framework for both feature detection and group detection, we further provide multilayer FDR control by introducing \textit{eDS-filter}, an instantiation of SFEFP by specifying $\mathcal{G}_{r}^{(m)}$ as DS method for $m\in[M]$ and $r\in[R]$. The application of SFEFP not only enhance the detection power, but also overcomes the instability of data-splitting framework, i.e., the selection result of data splitting framework can vary substantially across different sample splits.

\subsubsection*{eDS-filter}
The \textit{eDS-filter} is introduced as follows.
At each layer $m\in[M]$, we independently repeat the data splitting procedure for group (individual) detection $R$ times at the partition $\{\mathcal{A}_{g}^{(m)}\}_{g\in[G^{(m)}]}$. 
Let $\{T^{(m)}_{gr}\}_{g\in[G^{(m)}]}$ represent the set of test statistics computed in the $r$-th replication. Given the original FDR level $\alpha_{0}^{(m)}$, we compute the threshold as
\begin{equation}
t^{r}_{\alpha_{0}^{(m)}} = \inf\left\{t >0:  \frac{\#\left\{g : T_{gr}^{(m)}< -t  \right\}} { \#\left\{g : T_{gr}^{(m)} >t   \right\} \bigvee \text{1}} \leq \alpha_{0}^{(m)}\right\}.
\end{equation}
Then the DS generalized e-values is given by
\begin{equation}\label{DS-evalue}
	e_{gr}^{(m)}= G^{(m)} \cdot \frac{\mathbb{I} \left \{ T_{gr}^{(m)} \geq t^{r}_{\alpha_{0}^{(m)}} \right \} }{\#\left\{g: T_{gr}^{(m)}\leq -t^{r}_{\alpha_{0}^{(m)}}\right\} \bigvee \alpha_{0}^{(m)} },\quad g\in[G^{(m)}].
\end{equation}
Subsequently, Equation~(\ref{average}) is employed to calculate the averaged generalized e-values $\{\overline{e}_{g}^{(m)}\}_{g\in[G^{(m)}]}$.
Finally, by applying the generalized e-filter at the target level $(\alpha^{(1)},\dots,\alpha^{(M)})$ to the averaged generalized e-values $\{\overline{e}_{g}^{(m)}\}_{g\in[G^{(m)}],m\in[M]}$, we obtain the set of selected features, denoted as $\mathcal{S}_{\text{eDS-filter}}$.

\begin{theorem}\label{theorem7}
	Consider any $(\alpha^{(1)}_{0},\dots,\alpha^{(M)}_{0}) \in (0,1)^{M}$, $(\alpha^{(1)}, \dots,\alpha^{(M)}) \in (0,1)^{M}$, and any number of replications $R \geq1$. For each layer $m\in[M]$, suppose that there exists a constant $\tau_{\alpha^{(m)}_{0}} > 0$ such that $\mathbb{P}(\text{FDP}^{(m)}(\tau_{\alpha^{(m)}_{0}}) \leq \alpha^{(m)}_{0}) \to 1 $ as $G^{(m)} \to \infty$, the variance $\text{Var}(T_{g}^{(m)})$ is uniformly upper bounded and lower bounded away from zero, and $\sum_{g\in\mathcal{H}_0^{(m)}}e_{gr}^{(m)}/G^{(m)}$ is uniformly integrable.
	If the group size $|\mathcal{A}_{g}^{(m)}|$ is uniformly bounded, then under Assumption~\ref{assumption1} and \ref{assumption2}, $\{e_{gr}^{(m)}\}_{g\in[G^{(m)}]}$ given in \eqref{DS-evalue} is a set of asymptotic relaxed e-values, and the eDS-filter guarantees that $\limsup_{N\to \infty}\mathrm{FDR}^{(m)}\leq \alpha^{(m)}$.
\end{theorem}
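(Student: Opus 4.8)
The plan is to derive Theorem~\ref{theorem7} as a specialization of Theorem~\ref{theorem3}. The only genuinely new ingredient is that the group data-splitting procedure of Section~\ref{section3.1} belongs to $\mathcal{K}_{\text{asy}}$; once that is established, the asymptotic-relaxed-e-value claim follows immediately from the definition of $\mathcal{K}_{\text{asy}}$ together with Theorem~\ref{theorem1}, and the multilayer FDR guarantee follows from Theorem~\ref{theorem3} (or, unpacked, from the fourth bullet of Lemma~\ref{lemma1_1}). So I would organize the argument as: (i) place the group DS procedure inside Definition~\ref{definition1}; (ii) show it lies in $\mathcal{K}_{\text{asy}}$; (iii) conclude that $\{e_{gr}^{(m)}\}_{g}$, and then the average $\{\overline e_{g}^{(m)}\}_{g}$, are asymptotic relaxed e-values; (iv) invoke the e-filter FDR bound.

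For (i), reading off the construction preceding Equation~(\ref{DS-evalue}) with $R(t)=\#\{g:T_{gr}^{(m)}>t\}$ and $\widehat V(t)=\#\{g:T_{gr}^{(m)}<-t\}$, the group DS rule is exactly of the form~(\ref{new_control}), hence by Proposition~\ref{proposition0} it fits Definition~\ref{definition1}, and Equation~(\ref{DS-evalue}) is precisely Equation~(\ref{generalized-e-values}) for this choice of $\mathcal{G}^{(m)}$. For (ii), I must prove
\[
\limsup_{G^{(m)}\to\infty}\mathbb{E}\!\left[\frac{\#\{g\in\mathcal{H}_{0}^{(m)}:T_{gr}^{(m)}\ge t^{r}_{\alpha_{0}^{(m)}}\}}{\#\{g:T_{gr}^{(m)}\le -t^{r}_{\alpha_{0}^{(m)}}\}\vee\alpha_{0}^{(m)}}\right]\le 1 .
\]
This is the group analogue of the computation behind Theorem~\ref{theorem6}, whose proof mirrors Proposition~2.2 of~\cite{Dai2022}, and I would reuse its three building blocks. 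Lemma~\ref{lemma01} supplies, at every deterministic $t$, the exact identity $\#\{g\in\mathcal{H}_{0}^{(m)}:T_{gr}^{(m)}\ge t\}\overset{d}{=}\#\{g\in\mathcal{H}_{0}^{(m)}:T_{gr}^{(m)}\le -t\}$, so the numerator is distributionally dominated by the denominator at any fixed threshold; Assumption~\ref{assumption2} and the uniform upper bound on $\mathrm{Var}(T_{g}^{(m)})$ promote this to a uniform-in-$t$ concentration of the two counting processes around their means, which is what lets the bound survive the substitution of the data-dependent threshold $t^{r}_{\alpha_{0}^{(m)}}$; and the assumption $\mathbb{P}(\mathrm{FDP}^{(m)}(\tau_{\alpha_{0}^{(m)}})\le\alpha_{0}^{(m)})\to 1$ together with the variance lower bound keeps $t^{r}_{\alpha_{0}^{(m)}}$ bounded away from $0$ with probability $\to 1$, so the ratio is uniformly integrable and its limsup is at most $1$. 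The sole difference from Theorem~\ref{theorem6} is that one carries the ratio $V/(\widehat V\vee\alpha_{0}^{(m)})$ rather than $\mathrm{FDP}^{(m)}$, which the same estimates control.

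Steps (iii)--(iv) are routine plumbing. From (ii) and the definition of $\mathcal{K}_{\text{asy}}$, $\tfrac1{G^{(m)}}\sum_{g\in\mathcal{H}_{0}^{(m)}}\mathbb{E}[e_{gr}^{(m)}]\le 1+o(1)$, so each $\{e_{gr}^{(m)}\}_{g}$ is a set of asymptotic relaxed e-values --- relaxed rather than ordinary because $\widehat V_{\mathcal{G}_{r}^{(m)}}$ couples the groups and no individual $\mathbb{E}[e_{gr}^{(m)}]$ need be $\le 1$. Since $\overline e_{g}^{(m)}=\sum_{r}\omega_{r}^{(m)}e_{gr}^{(m)}$ with $\sum_{r}\omega_{r}^{(m)}=1$, linearity of expectation immediately gives $\tfrac1{G^{(m)}}\sum_{g\in\mathcal{H}_{0}^{(m)}}\mathbb{E}[\overline e_{g}^{(m)}]\le 1+o(1)$, so $\{\overline e_{g}^{(m)}\}_{g}$ is again a set of asymptotic relaxed e-values. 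The last stage of the eDS-filter is the generalized e-filter run on $\{\overline e_{g}^{(m)}\}_{g\in[G^{(m)}],\,m\in[M]}$ at levels $(\alpha^{(1)},\dots,\alpha^{(M)})$; the uniform bound on $|\mathcal{A}_{g}^{(m)}|$ forces $G^{(m)}\ge N/\sup_{g}|\mathcal{A}_{g}^{(m)}|\to\infty$ as $N\to\infty$, so the fourth bullet of Lemma~\ref{lemma1_1}, with Proposition~\ref{proposition1} certifying admissibility of the output threshold, delivers $\limsup_{N\to\infty}\mathrm{FDR}^{(m)}\le\alpha^{(m)}$; alternatively one quotes Theorem~\ref{theorem3} verbatim with $\mathcal{G}^{(m)}=$ group DS. The main obstacle is step (ii): the threshold $t^{r}_{\alpha_{0}^{(m)}}$ is a functional of all the dependent statistics $\{T_{gr}^{(m)}\}_{g}$, so the fixed-$t$ symmetry of Lemma~\ref{lemma01} has to be upgraded to a high-probability uniform statement through Assumption~\ref{assumption2}, and one must still dispose of the degenerate cases (no rejections, $t^{r}_{\alpha_{0}^{(m)}}=\infty$, or the threshold collapsing to $0$) and absorb the $\vee\,\alpha_{0}^{(m)}$ in the denominator. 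Everything downstream of that bound is mechanical.
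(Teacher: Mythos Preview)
Your proposal is correct and matches the paper's approach almost exactly: the paper also reduces everything to showing $\limsup_{G^{(m)}\to\infty}\tfrac{1}{G^{(m)}}\sum_{g\in\mathcal{H}_0^{(m)}}\mathbb{E}[e_{gr}^{(m)}]\le 1$ (your step (ii), equivalently $\mathcal{G}^{(m)}\in\mathcal{K}_{\text{asy}}$), proves this by conditioning on $\{t^{r}_{\alpha_0^{(m)}}\le\tau_{\alpha_0^{(m)}-\epsilon}\}$ and using the uniform-in-$t$ concentration from Assumption~\ref{assumption2} exactly as you sketch, then averages over $r$ and invokes Lemma~\ref{lemma1_1}. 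One small remark: the reference to Theorem~\ref{theorem1} in your step (iii) is superfluous---once (ii) is established, the asymptotic-relaxed-e-value property is literally the statement $\tfrac{1}{G^{(m)}}\sum_{g\in\mathcal{H}_0^{(m)}}\mathbb{E}[e_{gr}^{(m)}]\le 1+o(1)$, and Theorem~\ref{theorem1} (equivalence of rejection sets) plays no role there; what you actually need is the identification of~(\ref{DS-evalue}) with~(\ref{generalized-e-values}), which is just reading off the definitions, and then the computation in the proof of Theorem~\ref{theorem2}.
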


We note that the DS procedure for group detection is a special case of eDS-filter.
As a result, Theorem~\ref{theorem7} provides an alternative explanation for why the DS procedure for group detection controls the FDR in the asymptotic sense. When $M=1$ and $G^{(1)}=N$, the eDS-filter reduces to the eDS procedure for feature detection and, therefore, can be viewed as an alternative to MDS~\cite{Dai2022}.
The following remark clarifies the differences between MDS and eDS procedure.
\begin{remark}[Comparison with MDS~\cite{Dai2022}]
	The MDS procedure computes the estimated inclusion rates $\widehat{I}_{j}$ by $\widehat{I}_{j} =\sum_{r=1}^{R}\mathbb{I}\{j \in \mathcal{G}^{(1)}_{r}(\alpha^{(1)})\}/(R|\mathcal{G}^{(1)}_{r}(\alpha^{(1)})|)$ for $j\in[N]$ and selects the set $\{j: \widehat{I}_{j} \geq \widehat{I}_{\widehat{k}}\}$, where
    $\widehat{k}=\max\{k:\sum_{j=1}^k \widehat{I}_{(j)}\leq \alpha^{(1)}\}$, $ \widehat{I}_{(1)}\leq \dots \leq \widehat{I}_{(N)}$.
	In comparison, the generalized e-value and the estimated inclusion rate can be regarded as the weighted selection frequencies of features, with each assigned a distinct weight. The way of computing the cutoff also differs. With regard to the assumptions, the eDS procedure does not rely on the ranking consistency needed by MDS to ensure FDR control.
\end{remark}

\subsection{eDS+gKF-filter}\label{section3.2}
We introduce a flexible variant, the eDS+gKF-filter, which implements the SFEFP framework by specifying the feature detection procedure $\mathcal{G}^{(1)}$ as the DS method and assigning the group detection procedure $\mathcal{G}^{(m)}$ to the group knockoff filter for $m=2,\dots,M$.
The idea behind eDS+gKF-filter is as follows.
On one hand, if the features are highly correlated, the DS procedure performs better than the knockoff procedure~\cite{Dai2022}. However, if the signals within the group are very sparse, the DS statistic for group detection~\eqref{group statistics} can be powerless, which is caused by adopting the average signal strength as a representative of the group signal strength.
In this case, leveraging group knockoff for group detection is a better choice.
This motivates the development of the eDS+gKF-filter.

We outline the steps for implementing the \textit{eDS+gKF-filter} for linear models. Assuming that $Y=\beta_{1}X_{1}+\dots+\beta_{N}X_{N}+\epsilon$, where $\epsilon  \sim N(0, \sigma^{2})$. At the first layer, we perform DS using the Lasso+OLS procedure~\cite{Dai2022}, illustrated as follows.
\begin{enumerate}
\item[Step 1.] Apply Lasso to data $(\boldsymbol{y}^{(1)}, \boldsymbol{X}^{(1)})$ to obtain $\widehat{\boldsymbol{\beta}}^{(1)}$ and $\mathcal{S}_{\text{lasso}}=\{j: \widehat{\beta}^{(1)}_{j} \neq 0\}$;
\item[Step 2.] Restricted to data $\mathcal{S}_{\text{lasso}}$, apply OLS to $(\boldsymbol{y}^{(2)}, \boldsymbol{X}^{(2)})$ to obtain the estimate $\widehat{\boldsymbol{\beta}}^{(2)}$.
\end{enumerate}
Note that under other model assumptions of $Y|(X_1,\dots,X_N)$, such as generalized linear models, constructions of $\{(\widehat{\beta}_j^{(1)},\widehat{\beta}_{j}^{(2)})\}$ are discussed in \cite{Dai2022, dai2023scale}. Next, we construct the test statistic $T_{j}^{(1)} = W_{j} = \text{sign}(\widehat{\beta}_{j}^{(1)}\widehat{\beta}_{j}^{(2)})f(|\widehat{\beta}^{(1)}_{j}|,|\widehat{\beta}^{(2)}_{j}|)$. For the random-design matrix $\boldsymbol{X}$, Dai et al.~\cite{Dai2022} demonstrated that both the sure screening property of Lasso and the weak dependence (Assumption~\ref{assumption2} when $m=1$) hold under regular conditions, and further asserted that the fixed-design setting can also be established.
Given that the sure screening property holds, Assumption~\ref{assumption1} is thereby ensured by the OLS.
Consequently, with Theorem~\ref{theorem7} for each replication $r\in[R]$, the set $\{e_{jr}^{(1)}\}_{j\in[N]}$ given in Equation~(\ref{DS-evalue}) is a set of asymptotic relaxed e-values.

For the subsequent layers, we compute group knockoff statistics following \cite{2016A} if $n\geq N$; otherwise, we can adopt the constructions developed in \cite{Candes2018, chu2023second, katsevich2019multilayer, sesia2019gene, spector2022powerful}, which require assumptions on the joint distribution of the features. Specifically, for each $m\in\{2,\dots,M\}$, $r\in[R]$, the group knockoff procedure $\mathcal{G}^{(m)}_{r}$ constructs a test statistic $T_{gr}^{(m)}$ for $H_{g}^{(m)}$, which satisfies two key properties: (a) if $H_{g}^{(m)}$ is false, then $T_{gr}^{(m)}$ tends to take a large positive value; (b) the martingale property holds that 
\begin{equation}\label{martingale}
\mathbb{E}\left[\frac{\#\left\{g\in\mathcal{H}_{0}^{(m)}: T_{gr}^{(m)} \geq t^{r}_{\alpha_{0}^{(m)}}\right\}}{1+\#\left\{g\in[G^{(m)}]: T_{gr}^{(m)} \leq -t^{r}_{\alpha_{0}^{(m)}}\right\}}\right] \leq 1,
\end{equation}
where
\begin{equation}\label{knockoff-t}
t^{r}_{\alpha_{0}^{(m)}} = \inf\left\{t >0:  \frac{1+\#\left\{g : T_{gr}^{(m)}< -t  \right\}} { \#\left\{g : T_{gr}^{(m)} >t   \right\} \bigvee \text{1}} \leq \alpha_{0}^{(m)}\right\}.
\end{equation}
The group knockoff procedure $\mathcal{G}^{(m)}_{r}$ selects the set $\left\{g\in[G^{(m)}]: T_{gr}^{(m)} \geq t^{r}_{\alpha_{0}^{(m)}} \right\}$, and false discovery control is guaranteed by Equation~(\ref{martingale}) and (\ref{knockoff-t}). 
Based on Equation (\ref{martingale}), we can construct a set of relaxed e-values by
\begin{equation}
e_{gr}^{(m)}=G^{(m)} \cdot \frac{\mathbb{I} \left \{ T_{gr}^{(m)} \geq t^{r}_{\alpha_{0}^{(m)}} \right \} }{1+ \#\left\{g: T_{gr}^{(m)}\leq -t^{r}_{\alpha_{0}^{(m)}}\right\} }.
\end{equation}
For the target FDR level vector $(\alpha^{(1)},\dots,\alpha^{(M)})$, the rejection set of the eDS+gKF-filter is then given by the output of the generalized e-filter, which takes as input the generalized e-values obtained from Lasso+OLS and the group knockoff filter. 
The multilayer FDR control is ensured by the properties of the obtained generalized e-values.

In practical applications, it is recommended to slightly relax the target FDR levels to achieve higher power. This is inspired by Katsevich and Sabatti \cite{katsevich2019multilayer}, who demonstrated that the $\text{MKF}(c)+$ satisfies $\text{FDR}^{(m)} \leq c_{\text{kn}}\alpha^{(M)}/ c$ for $m\in[M]$, where $c_{\text{kn}} = 1.93$ and $c>0$. They suggest that practitioners deploy MKF(1)+ (denoted by MKF+ for simplicity) rather than $\text{MKF}(c_{\text{kn}})+$ without concern of FDR control, as multilayer methods are generally more conservative in practice than theoretically anticipated.
For the same reason, we also recommend that practitioners appropriately relax the target FDR levels when applying the eDS-filter and eDS+gKF-filter, etc.

\subsection{Other applications and possible extensions of SFEFP}\label{section3.3}
\subsubsection*{SFEFP with other detection procedures}
Gablenz et al. \cite{gablenz2023catch} developed e-MKF by leveraging the knockoff e-values and the e-filter, which is consistent with applying FEFP to knockoff procedures.
For convenience of distinction, we denote this method by *e-MKF.
By Theorem~\ref{onebitproperty}, *e-MKF has the risk of zero power.
By using SFEFP, we can extend *e-MKF to a stable and powerful version (termed e-MKF). Our simulation results also show that the stabilized method e-MKF consistently has higher power than *e-MKF. 
 Furthermore, in the Section F of the Supplementary Material, we develop asymptotical relaxed e-values for the GM method~\citep{Xing2023} and the SAS framework~\citep{wang2025adaptive}. Therefore, other multilayer filtering procedures can be developed by utilizing the SFEFP and these base detection procedures.

\subsubsection*{Extensions to time series data}
Time series data are prevalent in many practical applications such as economics.
Chi et al.~\cite{chi2021high} proposed the method of time series knockoffs inference (TSKI), and demonstrated that FDR can be asymptotically controlled at a preset level under regular conditions.
The TSKI procedure relies on the ideas of subsampling and robust knockoff~\cite{2018Robust}, and the control of FDR depends on the property of robust e-values. 
Indeed, when $\mathcal{G}^{(1)}$ with $M = 1$ corresponds to the robust knockoffs for FEFP, the TSKI procedure without subsampling is consistent with FEFP.
The framework of TSKI with subsampling is also similar to our proposed SFEFP.
The key difference is that we use the entire dataset in each replication, while TSKI utilizes different disjointed subsamples to address the complex serial dependence.
It is straightforward to establish multilayer FDR control for time series data by modifying SFEFP to subsampling setting and constructing robust e-values for group knockoff filter.

\section{Simulation}\label{section5_simulation}
Simulation studies are conducted to demonstrate the power enhancement brought about by combining different methods and the tie-breaking via derandomization.
Performances under low and high dimensions are explored, respectively.

\subsection{Basic settings}
We generate data from the linear model $\boldsymbol{y} = \boldsymbol{X}\boldsymbol{\beta} + \boldsymbol{\epsilon}$, where $\boldsymbol{\epsilon} \sim N(\textbf{0}, \boldsymbol{I}_{n})$ and $\boldsymbol{X} \in \mathbb{R}^{n \times N}$. Consider that there are $M=2$ layers: one layer with $N$ individual features and the other one with $G$ groups, each containing $N/G$ features.
The design matrix $\boldsymbol{X}$ is constructed by independently sampling each row from the distribution $N(\textbf{0},\Sigma_{\rho})$, where $\Sigma_{\rho}$ is a block-diagonal matrix composed of $G$ Toeplitz submatrices, each with correlation $\rho$. These submatrices describe the covariance structure within each group of features. Specifically, each block along the diagonal is 
\[
\begin{bmatrix}
	1 & \frac{(G^{'}-2)\rho}{G^{'}-1} & \frac{(G^{'}-3)\rho}{G^{'}-1} & \dots & \frac{\rho}{G^{'}-1}& 0  \\
	\frac{(G^{'}-2)\rho}{G^{'}-1}  & 1 & \frac{(G^{'}-2)\rho}{G^{'}-1} & \dots & \frac{2\rho}{G^{'}-1}& \frac{\rho}{G^{'}-1}  \\
	\vdots& & \ddots & & & \vdots\\
	0 & \frac{\rho}{G^{'}-1} & \frac{2\rho}{G^{'}-1} &\dots &  \frac{(G^{'}-2)\rho}{G^{'}-1} & 1
\end{bmatrix},
\]
where $G^{'} = N \slash G$ represents the group size.
Note that this construction closely aligns with many real-world datasets where features within a group are highly correlated while features between groups exhibit near-zero correlation. 
To determine $\mathcal{H}_{1}$, we begin by randomly selecting $K$ groups, followed by randomly selecting $|\mathcal{H}_{1}| $ elements from features within these $K$ groups. On average, a relevant group consists of $|\mathcal{H}_1|/K$ relevant features.
For $j \in \mathcal{H}_{1}$, we sample $\beta_{j}$ from the distribution $N(0, \delta\sqrt{\text{log}~N/n})$, where $\delta$ represents the signal strength of relevant features.

We consider four methods, e-MKF, eDS-filter, eDS+gKF and KF+gDS, designed by applying SFEFP to different $\mathcal{G}^{(m)}$ with $R=50$. 
To show the power enhancement brought out by leveraging non-one-bit generalized e-values, we set the number of replications for each layer as $R=1$, and the corresponding methods are denoted by *e-MKF, *eDS-filter, *eDS+gKF, and *KF+gDS, respectively.
In this case, *e-MKF is exactly the method proposed by \cite{gablenz2023catch}.
We also compare proposed methods to the MKF+~\cite{katsevich2019multilayer}. Details of methods are illustrated as follows.
For low-dimensional scenarios, we use the fixed design (group) knockoffs with the signed-max function. For high-dimensional settings, we use the R package \textit{knockoffsr}
\footnote{The R package \textit{knockoffsr} is  available at \href{https://github.com/biona001/knockoffsr/tree/main}{https://github.com/biona001/knockoffsr/tree/main}.}\cite{chu2023second} to construct model-X (group) knockoffs.
For the DS procedure for individual selection, we run the Lasso+OLS procedure as the setting of \cite{Dai2022}. For the DS procedure for group selection, we compute $T_{g}^{(m)}$ as \eqref{group statistics}, where statistics $W_{j}$ are obtained through the above Lasso+OLS procedure.
The target FDR levels is set as $\alpha^{(1)}=\alpha^{(2)} = 0.2$ for all methods. The original FDR levels is simply set as $\alpha_{0}^{(m)} = \alpha^{(m)} / 2 = 0.1$ for $m\in[M]$.

\subsection{Comparison under low dimensions}
\begin{figure}[htbp]
	\centering
    \includegraphics[width=45em]{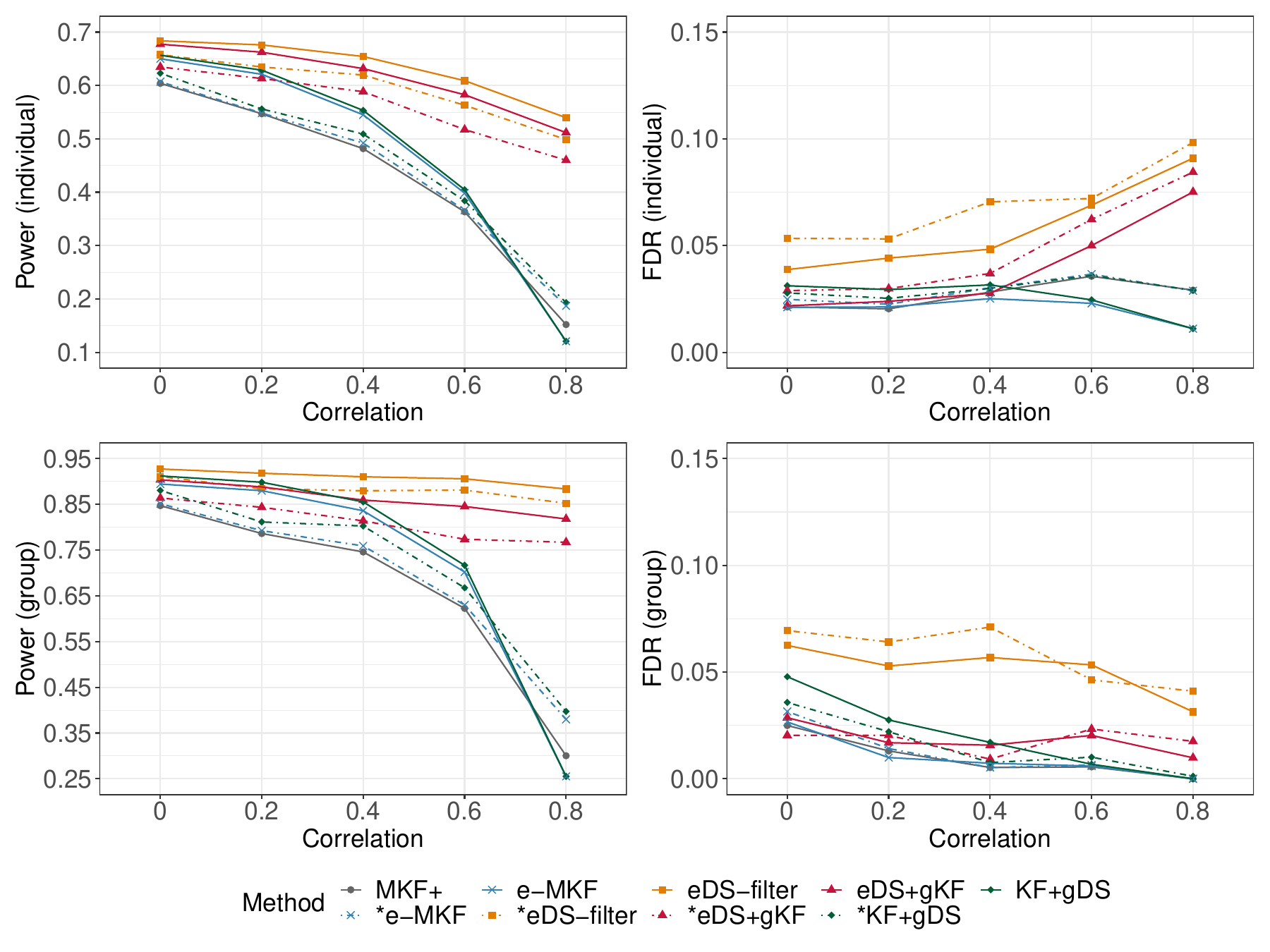}
	\caption{Simulation results for methods MKF+, e-MKF, eDS-filter, eDS+gKF, KF+gDS, *e-MKF, *eDS-filter, *eDS+gKF, and *KF+gDS under different correlations, while fixing the signal strength $\delta$ as 3.}
	\label{lowsignal}
\end{figure}

\begin{figure}[htbp]
	\centering
    \includegraphics[width=45em]{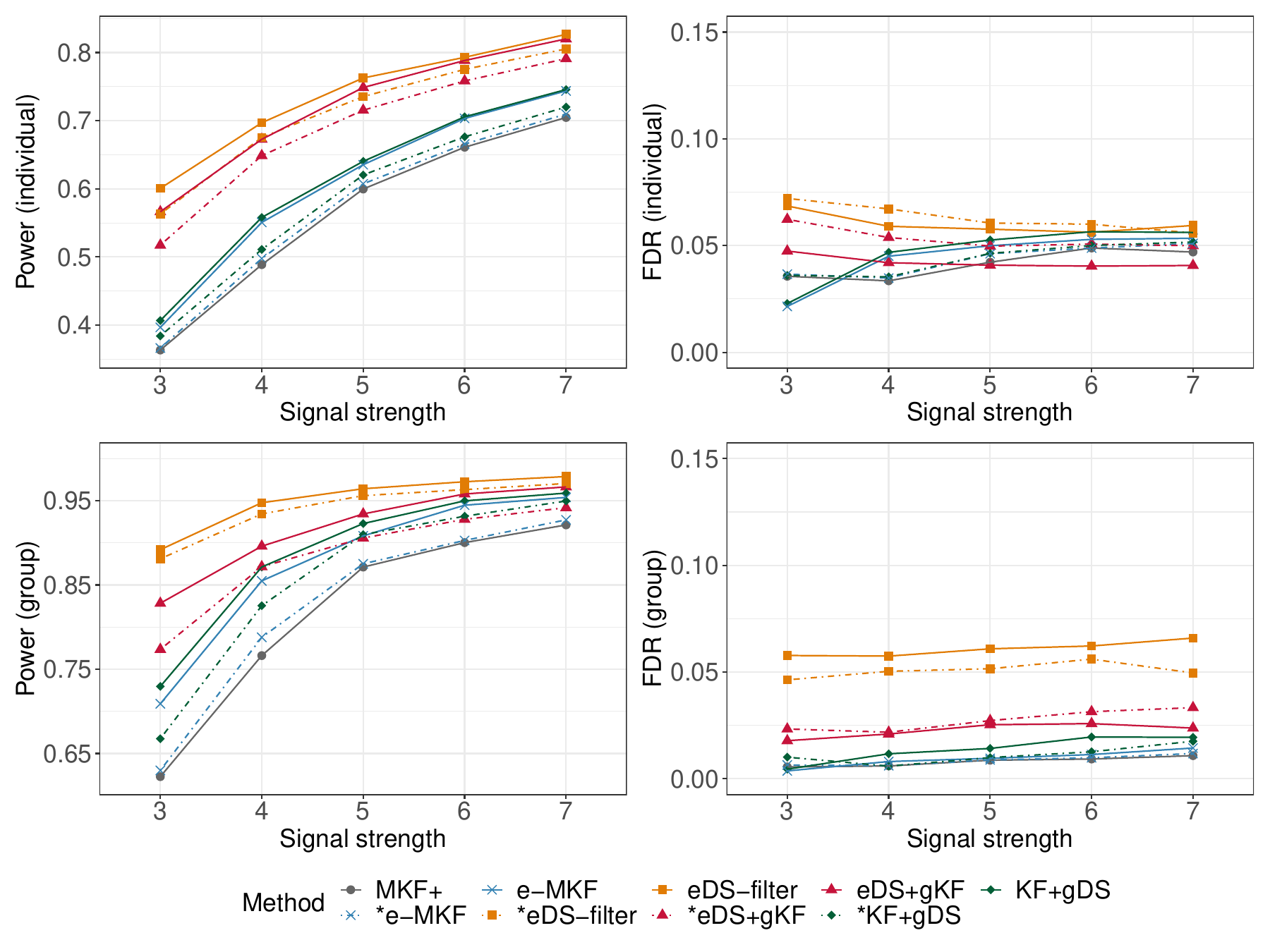}
	\caption{Simulation results for nine methods under different signal strength, while fixing the correlation $\rho$ as 0.6.}
	\label{highcorr}
\end{figure}

 Consider $n=1600$, $N=800$, $G=80$, $|\mathcal{H}_1|=60$, and $K=20$. Therefore, each group contains 10 features, and on average a relevant group consists of 3 relevant features. The theoretical FDR level is set to $\alpha^{(m)}=0.2$ for $m=1,2$.
We first fix the signal strength at $\delta = 3$ and vary the correlation $\rho$ in the set $\{0, 0.2, 0.4, 0.6, 0.8\}$. 
Figure~\ref{lowsignal} illustrates the realized FDR and power at both the individual and group layers as a function of the correlation $\rho$.
Next, we fix $\rho = 0.6$ and vary $\delta$ in the set $\{3,4,5,6,7\}$.
Figure~\ref{highcorr} presents these quantities as a function of the signal strength $\delta$, with a fixed correlation $\rho = 0.6$.
All results are averaged over 50 independent trials.

Figures \ref{lowsignal} and \ref{highcorr} highlight two attractive advantages of SFEFP. First, under high correlations, both eDS-filter and eDS+gKF-filter simultaneously achieve higher power at multiple resolutions than MKF+, e-MKF, and KF+gDS. This empirical evidence demonstrates the flexibility of SFEFP in accommodating different detection procedures at different resolutions, thus enhancing the detection power.
Second, by comparing the dashed and solid lines, we observe that the stabilization step consistently enhances the detection power across almost all settings. This demonstrates the advantage of stabilization under multiple resolutions.

\subsection{Comparison under high-dimensional settings}
\begin{figure}[htbp]
	\centering
    \includegraphics[width=45em]{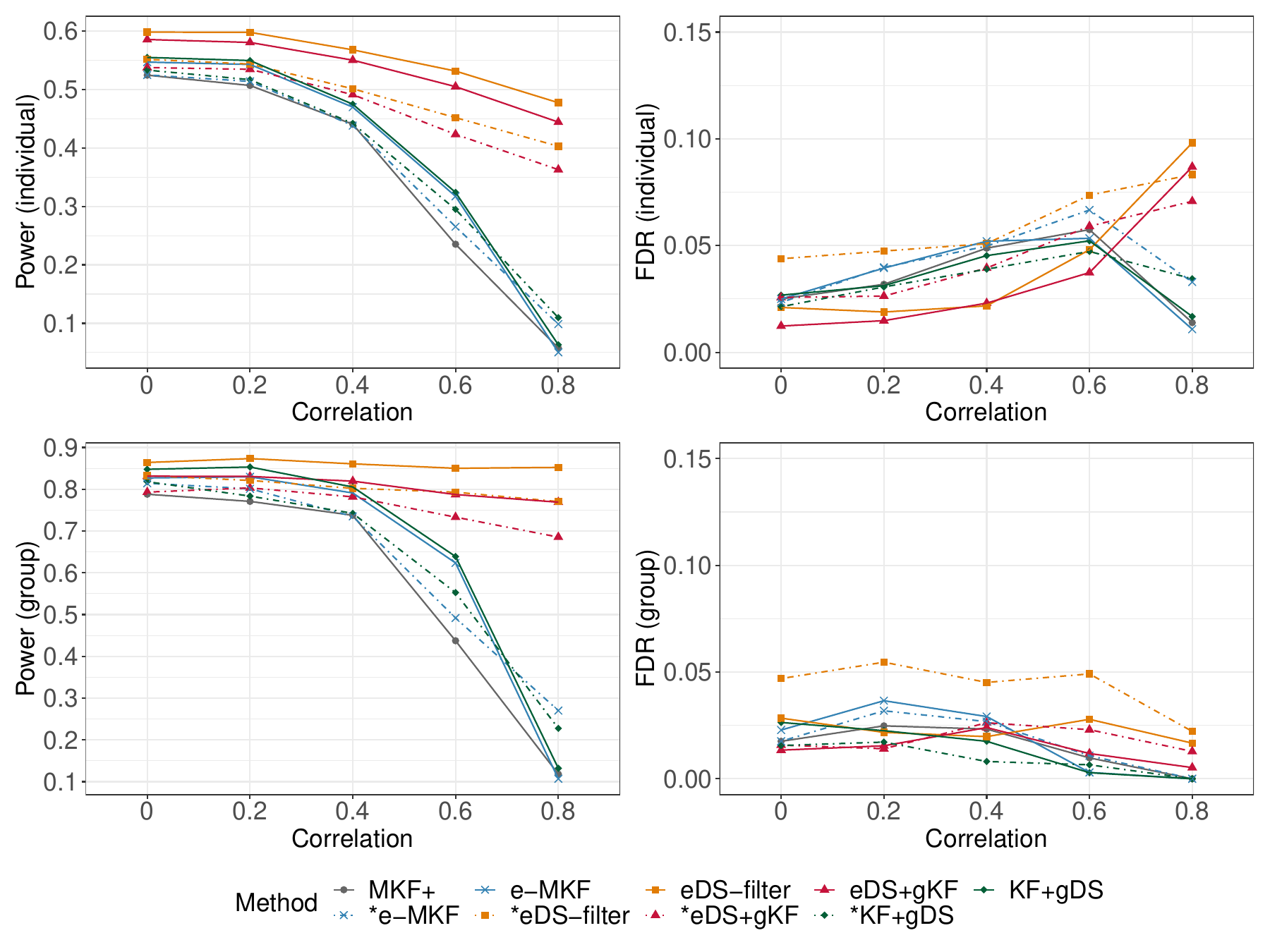}
	\caption{Simulation results for the high-dimensional setting under different correlations with $\delta=3$. }
	\label{highdim_high_correlation}
\end{figure}

\begin{figure}[htbp]
	\centering
    \includegraphics[width=45em]{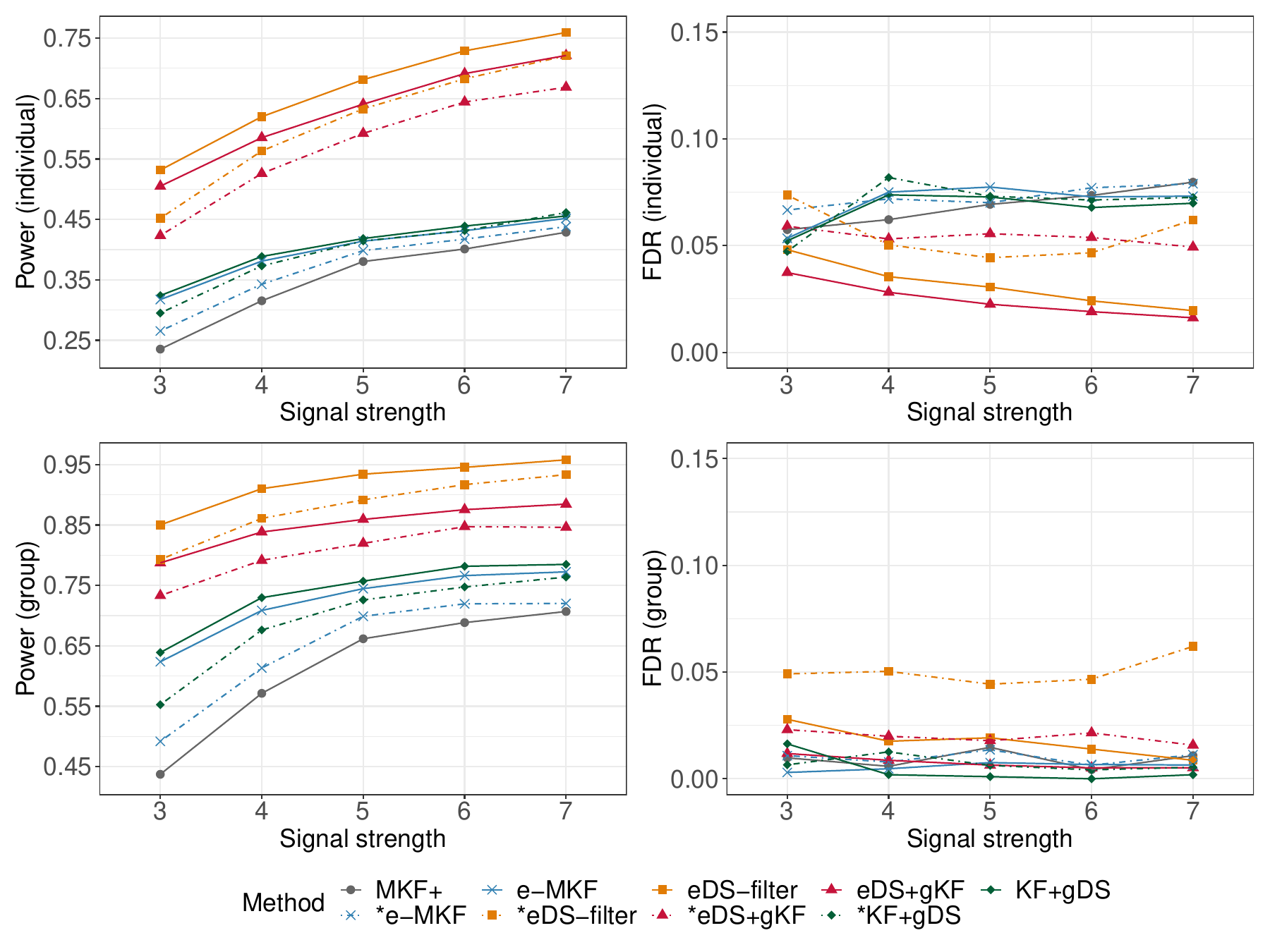}
	\caption{Simulation results for the high-dimensional setting under different signal strength with $\rho=0.6$.}
	\label{hihdim_highcorr}
\end{figure}

Consider a high-dimensional setting where $n=600$, $N=800$, $G=80$, $|\mathcal{H}_1|=60$, and $K=20$. We set $\alpha^{(m)}=0.2$ for $m=1,2$. Figures~\ref{highdim_high_correlation} and ~\ref{hihdim_highcorr} present the performance of the methods under different correlations and signal strength, respectively. All results are averaged over 50 independent trials. The analysis and conclusion in low-dimensional settings also hold in high-dimensional settings.

\section{Experiment on HIV mutation data}\label{section5}
We compare the proposed eDS-filter, e-MKF~\citep{gablenz2023catch}, MKF+~\citep{katsevich2019multilayer}, and the knockoff filter~\cite{Barber2015} in the study of identifying mutations and their clusters in the human immunodeficiency virus type 1 (HIV-1) associated with drug resistance.
The dataset\footnote{Data available online at \href{http://hivdb.stanford.edu/pages/published\_analysis/genophenoPNAS2006/}{http://hivdb.stanford.edu/pages/published\_analysis/genophenoPNAS2006/}.} previously analyzed in~\cite{Barber2015, Dai2022, lu2018deeppink, rhee2006genotypic, song2021variable}, contain drug resistance measurements in various drug classes and genotype information in HIV-1 samples.
Our analysis focuses on seven drugs (APV, ATV, IDV, LPV, NFV, RTV, SQV) for protease inhibitors (PIs) and four drugs (ABC, AZT, D4T, DDI) for nucleoside reverse transcriptase inhibitors (NRTIs). 
A notable distinction of our study from previous work is our dual focus: we aim not only to obtain reliable results for individual mutations (i.e., controlling individual-FDR), but also to ensure reliable findings at the position level (i.e., controlling group-FDR simultaneously).

\begin{table}
	\caption{Sample information for the seven PI-type drugs and the three NRTI-type drugs.}
	\centering
	\begin{tabular}{ccccc}
		\hline
		Drug type&Drug&Sample size&\# mutations&\# positions genotyped\\
		\hline
		&APV&767&201&65\\
		&ATV&328&147&60\\
		&IDV& 825&206&66\\
		PI&LPV&515&184&65\\
		&NFV&842&207&66\\
		&RTV& 793&205&65\\
		&SQV& 824&206&65\\
		\hline
		&ABC&623&283&105\\
		NRTI&AZT& 626&283&105\\
        &D4T&625&281&104\\
        &DDI& 628 &283&105\\
		\hline
	\end{tabular}
	\label{table1}
\end{table}

The information and preprocessing steps for the data set are described below.
The response variable $Y$ represents the log-fold increase of lab-tested drug resistance. The feature $X_{j}$ indicates whether the mutation $j$ is present, which is binary. Different mutations at the same location are treated as distinct features. Accordingly, mutations are naturally grouped based on their known locations.
Our goal is to identify which locations are more likely to harbor important mutations and which mutations influence drug resistance.	
For each drug, we preprocess the data by removing rows lacking drug resistance information and retaining mutations that appear more than three times in the sample for that drug. The sample size, number of mutations and number of groups (positions) are summarized in Table~\ref{table1}. Consistent with~\cite{Barber2015, Dai2022}, we assume a linear model between the response and features with no interaction terms.

For each drug, we apply the Knockoff+, MKF+~\cite{katsevich2019multilayer}, e-MKF~\cite{gablenz2023catch}, and eDS-filter to identify important mutations and their clusters (positions).
The $\text{Knockoff}+$ is included as a comparison to highlight the necessity of multilayer FDR control. The treatment-selected mutation (TSM) panels~\cite{rhee2005hiv} provide a good approximation to the ground truth of the real datasets.
In this study, we take the TSM panel as the reference standard to evaluate the performance of the methods under comparison.
The results for the PI-type drugs and the NRTI-type drugs are presented in Table~\ref{PI} and Table~\ref{table_NRTI}, respectively.

\begin{table}
	\centering
	\setlength{\tabcolsep}{2pt} 
	\caption{Results for PI drugs. ``True'' represents the number of true positives, i.e., the number of mutations or positions identified in the TSM panel for the PI class of treatments. ``False'' presents the number of false positives. The FDP is calculated as the ratio of the number of false positives to the total number of positives. The target FDR levels are $\alpha^{(1)}=\alpha^{(2)} = 0.2$. For eDS-filter, we set $R=50$ and $\alpha^{(m)}_0=\alpha^{(m)}/2$ for $m=1,2$. The best-performing method is highlighted in \textbf{bold}.}
	\begin{tabular}{cc|ccc|ccc}
		\hline
		Drug&Method&True (ind)&False (ind)&FDP (ind)&True (grp)&False (grp)&FDP (grp)\\
		\hline
		APV&KF+&27&9&0.250&18&7&0.280\\
		&MKF+&0&0&0&0&0&0\\
          &e-MKF& 0&0 &0  &0 &0 &0 \\
		&\textbf{eDS-filter}& \textbf{27} & \textbf{4} & \textbf{0.129} &\textbf{18} &\textbf{2} &\textbf{0.100}\\
		\hline
		ATV&KF+&19&6&0.240&19&1&0.050\\
		&MKF+&0&0&0&0&0&0\\
        &e-MKF&0 & 0& 0 & 0& 0& 0\\
		&\textbf{eDS-filter}& \textbf{18}&\textbf{1}&\textbf{0.053}&\textbf{18}&\textbf{0}&\textbf{0}\\
		\hline
		IDV&KF+&34&33&0.493&24&15&0.385\\
		&MKF+&26&3&0.103&17&0&0\\
         &e-MKF& 26& 4& 0.133 &18 &0 &0 \\
		&\textbf{eDS-filter}&\textbf{27}&\textbf{3}&\textbf{0.100}&\textbf{18}&\textbf{0}&\textbf{0}\\
		\hline
		LPV&KF+&27&8&0.229&20&3&0.130\\
		&MKF+&19&3&0.136&13&1&0.071\\
         &e-MKF&19 & 3& 0.136 &13 &1 &0.071 \\
		&\textbf{eDS-filter}&\textbf{23}&\textbf{3}&\textbf{0.115}&\textbf{15}&\textbf{0}&\textbf{0}\\
		\hline
		NFV&KF+&33&22&0.400&24&8&0.250\\
		&MKF+&0&0&0&0&0&0\\
          &e-MKF& 0& 0& 0 &0 &0 &0 \\
		&\textbf{eDS-filter}&\textbf{32}&\textbf{8}&\textbf{0.200}&\textbf{20}&\textbf{2}&\textbf{0.091}\\
		\hline
		RTV&KF+&19&5&0.208&12&2&0.143\\
		&MKF+&0&0&0&0&0&0\\
          &e-MKF& 0& 0& 0 & 0& 0& 0\\
		&\textbf{eDS-filter}&\textbf{25}&\textbf{7}&\textbf{0.219}&\textbf{17}&\textbf{2}&\textbf{0.105}\\
		\hline
		SQV&KF+&22&6&0.214&16&2&0.111\\
		&MKF+&0&0&0&0&0&0\\
          &e-MKF& 0& 0& 0 & 0& 0& 0\\
		&\textbf{eDS-filter}&\textbf{22}&\textbf{4}&\textbf{0.154} &\textbf{15}&\textbf{0}&\textbf{0} \\
		\hline
	\end{tabular}
	\label{PI}
\end{table}

\begin{table}
	\centering
	\setlength{\tabcolsep}{2pt} 
	\caption{Results for NRTI drugs. The target FDR levels are $\alpha^{(1)}=\alpha^{(2)} = 0.3$. For eDS-filter, we set $R=100$. The best-performing method is highlighted in \textbf{bold}.}
	\begin{tabular}{cc|ccc|ccc}
		\hline
		Drug&Method&True (ind)&False (ind)&FDP (ind)&True (grp)&False (grp)&FDP (grp)\\
		\hline        ABC&\textbf{KF+}&\textbf{14}&\textbf{3}&\textbf{0.176}&\textbf{14}&\textbf{2}&\textbf{0.125}\\
		&MKF+&0&0&0&0&0&0\\
       &e-MKF& 0& 0 & 0 & 0& 0&0 \\
		&eDS-filter&13&2&0.133&12&2&0.143\\
		\hline
        AZT&KF+&17&8&0.320&16&5&0.238\\
		&MKF+&11&0&0&10&0& 0\\
        &e-MKF&11&0 &0  &10 &0 &0 \\
		&\textbf{eDS-filter}&\textbf{15}&\textbf{1}&\textbf{0.063} &\textbf{14} & \textbf{0}& \textbf{0}\\
		\hline
		D4T&KF+&10&1&0.091&9&1&0.100\\
		&MKF+&0&0&0&0&0&0\\
         &e-MKF& 0&0 &0&0 &0 &0 \\
		&\textbf{eDS-filter}& \textbf{18}&\textbf{2}&\textbf{0.100}&\textbf{15}&\textbf{2}&\textbf{0.118}\\
		\hline
        DDI&KF+&0 & 0& 0& 0& 0& 0\\
		&MKF+&0&0&0 &0&0&0\\
        &e-MKF& 0& 0& 0 & 0& 0& 0\\
		&\textbf{eDS-filter}& \textbf{18}& \textbf{4}&\textbf{0.182} &\textbf{17} &\textbf{3} &\textbf{0.150} \\
		\hline
	\end{tabular}
	\label{table_NRTI}
\end{table}

For the seven PI-type drugs, the proposed eDS-filter performs consistently the best among the three methods.
Specifically, the method of KF+ fails to simultaneously control the FDP for the individual and group resolutions, especially for the drugs APV, ATV, IDV, LPV, and NFV.
In contrast, the eDS-filter almost always has fewer false discoveries and thus controls FDP at multiple resolution simultaneously, while maintaining comparable power as KF+.
For drug RTV, the eDS-filter even achieves a higher power.
The MKF+ and e-MKF have no discoveries for the drugs APV, ATV, NFV, RTV, and SQV. For drugs IDV and LPV, the eDS-filter still outperforms the MKF+ and e-MKF by achieving higher power and lower FDP.
Additionally, we compare the selection results of our methods with those obtained by DeepPINK~\cite{lu2018deeppink} on the individual resolution, since they aimed to control the individual FDR level to $0.2$.
We observe that the eDS-filter achieves a higher power than DeepPINK for all drugs except for ATV, and achieves a lower FDP for drugs ATV, IDV, LPV, and NFV.

For NRTI, the eDS-filter performs best for the drugs AZT, D4T, and DDI. 
Specifically, the MKF+ and e-MKF have no discoveries for the drugs ABC, D4T and DDI. The KF+ fails to control the FDP for the drug AZT and has no discoveries for the drug DDI.
In comparison, the eDS-filter consistently controls FDP at multiple resolutions and achieves satisfactory power. Especially for the drugs D4T and DDI, the eDS-filter achieves a significantly higher power.

In conclusion, the eDS-filter consistently performs significantly better than MKF+ \cite{katsevich2019multilayer} and e-MKF \cite{gablenz2023catch} by achieving higher power.
Compared with KF+, the eDS-filter has more precise discoveries and achieves multilayer FDR control with comparable or even higher power (e.g., drugs RTV, D4T, DDI).
Note that we simply set $\alpha^{(m)}_{0} = \alpha^{(m)}/2$ for $m=1,2$; therefore, higher power may be achieved by adjusting these parameters. 
The outstanding performance of the eDS-filter also demonstrated the ability of SFEFP to enable the design of tailored procedures for distinct contexts, and this flexibility in turn enhances the efficiency with which researchers can identify critical features. Indeed, it is highly feasible for practitioners to select the state-of-the-art method for each layer based on prior knowledge, thereby leading to more ideal results.

\section{Conclusion and discussion}\label{section6}
In this paper, we introduce SFEFP, a flexible, stable, and powerful detection procedure with multilayer FDR control.
Specifically, by developing generalized e-filter and generalized e-values, SFEFP combines diverse detection procedures to yield a selection set with FDR guarantees across multiple resolutions.
By stabilization, SFEFP can obtain more stable and powerful selection sets.
Several practical multilayer detection methods, such as the eDS-filter, are developed as illustrations of SFEFP. Simulations and case studies have shown that (1) through the flexible design, we can often achieve higher power, and (2) the stabilization step tends to enhance the detection power of one-bit inputs.

Our work is founded on the core idea that there is no one-size-fits-all method, and the optimal procedure varies depending on the situation. Consequently, it is imperative to provide practitioners with ample flexibility to develop methods informed by domain expertise.
The proposed SFEFP is also based on in-depth exploration of multilayer filtering, where one-bit input can cause a zero power disaster.
Guided by this objective, the primary strength of SFEFP lies in its flexibility and stability, enabling the creation of a diverse array of powerful multilayer detection methodologies with almost no cost.
In the realm of high-dimensional regression and scenarios where features are highly correlated, the establishment of the eDS-filter has demonstrated itself as a successful attempt.

Several meaningful issues are worth solving. 
First, the impact of the original FDR level vector on the power of FEFP and SFEFP requires further theoretical investigation. 
Second, it is important to explore the mild conditions under which a sharper FDR bound can be derived, since simulations indicate that SFEFP sometimes achieves significantly lower empirical FDR than the preset level.
Third, appropriate data-driven weights for different replications can improve the overall reliability of the results. Thus, how to achieve adaptive weights is of our interest. Techniques related to enhanced e-values, such as those in \cite{blier2024improved, lee2024boosting}, are also expected to enhance the power of SFEFP. We will leave these issues for future work.

\begin{acks}[Acknowledgments]
	Ruixing Ming was supported in part by the Characteristic \& Preponderant Discipline of Key Construction Universities in Zhejiang Province (Zhejiang Gongshang University--Statistics) and the Collaborative Innovation Center of Statistical Data  Engineering Technology \& Application. Zhanfeng Wang was supported in part by National Natural Science Foundation of China (No. 12371277, 12231017).
\end{acks}

% \begin{supplement}
% \textbf{Supplement to ``A generalized e-value feature detection method with FDR control at multiple resolutions'':}\\
% (; .pdf). The supplementary material contains details for the proofs of several theoretical results, the description of the second derandomized method, and the additional simulations.
% \end{supplement}

\bibliographystyle{imsart-nameyear} 
\bibliography{aos}

\newpage
%%SUMPLIEMNT
%% Template for the submission to:
%%   The Annals of Statistics [AOS]
%%
%%%%%%%%%%%%%%%%%%%%%%%%%%%%%%%%%%%%%%%%%%%%%%
%% In this template, the places where you   %%
%% need to fill in your information are     %%
%% indicated by '???'.                      %%
%%                                          %%
%% Please do not use \input{...} to include %%
%% other tex files. Submit your LaTeX       %%
%% manuscript as one .tex document.         %%
%%%%%%%%%%%%%%%%%%%%%%%%%%%%%%%%%%%%%%%%%%%%%%

%%%%%%%%%%%%%%%%%%%%%%%%%%%%%%%%%%%%%%%%%%%%%%

\begin{appendix}

\section{Proofs for Section 3}
\subsection{Proof of Proposition 1}
For any detection procedure $\mathcal{G} \in \mathcal{K}_{\text{finite}} \bigcup \mathcal{K}_{\text{asy}}$ and any target FDR level $\alpha \in(0,1)$, the rejection set is denoted by $\mathcal{G}(\alpha)$. Under the alternative to Definition 1 that
\[
t_{\alpha}= \sideset{}{}{\arg\max}_{t}^{}R(t), \quad \text{subject to}~\widehat{\text{FDP}} (t) = \frac{\widehat{V}(t)} {R(t)\vee 1} \leq \alpha,
\]
the rejection set is denoted by $\widetilde{\mathcal{G}}(\alpha)$.
We aim to prove $\mathcal{G}(\alpha) =\widetilde{\mathcal{G}}(\alpha)$.
On one hand, the fact $ \mathcal{G}(\alpha) \subseteq\widetilde{\mathcal{G}}(\alpha)$ is obvious, because for any $t$ we have $\widehat{V}(t) \leq \max\{\widehat{V}(t), \alpha\}$.
On the other hand, if $\widetilde{\mathcal{G}}(\alpha) \neq \emptyset$, then we have $ \mathcal{G}(\alpha) \supseteq\widetilde{\mathcal{G}}(\alpha)$. The reason is as follows. When $\widetilde{\mathcal{G}}(\alpha) \neq \emptyset$, if $\widehat{V}(t_{\alpha}) < \alpha$, then 
\[
 \frac{\widehat{V}(t_{\alpha}) \vee \alpha} {R(t_{\alpha})\vee 1} = \frac{
\alpha}{R(t_{\alpha})} \leq \alpha,
\]
which implies that $ \mathcal{G}(\alpha) \supseteq\widetilde{\mathcal{G}}(\alpha)$.
If $\widehat{V}(t_{\alpha}) \geq \alpha$, then we still have $ \mathcal{G}(\alpha) \supseteq\widetilde{\mathcal{G}}(\alpha)$, because $\max\{\widehat{V}(t_{\alpha}), \alpha\}= \widehat{V}(t_{\alpha}) $.
This concludes the proof of Proposition 1.
\subsection{Proof of Theorem 1}
The selection set of the generalized e-BH procedure is denoted by $\mathcal{S}_{\text{gebh}}(\alpha)$. We only need to consider the nontrivial case that $\mathcal{G}(\alpha) \neq \emptyset$. On one hand, for $j \notin \mathcal{G}(\alpha)$, we have
\[
e_{j} = N \cdot \frac{\mathbb{I}\{j \in \mathcal{G}(\alpha)\}}{\widehat{V}_{\mathcal{G}(\alpha)} \bigvee \alpha}=0.
\]
Since only those satisfying $e_{j} \geq N \slash (\alpha |\mathcal{S}_{\text{gebh}}(\alpha)|)$ can be selected by the generalized e-BH procedure when $\mathcal{S}_{\text{gebh}} \neq \emptyset$, we have $j \notin \mathcal{S}_{\text{gebh}}$.
Thus we have $\mathcal{S}_{\text{gebh}} \subseteq \mathcal{G}(\alpha)$. 
On the other hand, for $j \in \mathcal{G}(\alpha)$, we have
\[
e_{j} = N \cdot \frac{\mathbb{I}\{j \in \mathcal{G}(\alpha)\}}{\widehat{V}_{\mathcal{G}(\alpha)} \bigvee \alpha}\geq \frac{N}{\alpha R_{\mathcal{G}(\alpha)}}.
\]
According to the definition of the generalized e-BH procedure, we have $|\mathcal{S}_{\text{gebh}}(\alpha)| = R_{\mathcal{G}(\alpha)}$. Combining $\mathcal{S}_{\text{gebh}} \subseteq \mathcal{G}(\alpha)$, we have $\mathcal{S}_{\text{gebh}} = \mathcal{G}(\alpha)$.

\subsection{Proof of Proposition 2}
The proof is almost identical to the proof in~\cite{P-filter}.
The threshold calculation process can be viewed as a careful adjustment of the threshold $t^{(m)}$ for each $m \in [M]$.
This process terminates when the $\text{FDP}^{(m)}$ of all layers are under control.
It suffices to prove the following two aspects to establish the proposition.
The first aspect is the prudence of the algorithm, which ensures that $t_{k}^{(m)} \leq \widehat{t}^{(m)}$ for all $m,k$, where the vector $(t_{k}^{(1)}, \ldots, t_{k}^{(M)})$ represents the threshold vector after the $k$-th pass through the loop.
The second aspect is that $(t_{\text{final}}^{(1)}, \ldots, t_{\text{final}}^{(M)}) \in \mathcal{T}(\alpha^{(1)}, \ldots, \alpha^{(M)})$, where $(t_{\text{final}}^{(1)}, \ldots, t_{\text{final}}^{(M)})$ denotes the final output threshold vector.
The first aspect ensures that $(t_{\text{final}}^{(1)}, \ldots, t_{\text{final}}^{(M)}) \le (\widehat{t}^{(1)},\ldots,\widehat{t}^{(M)})$, with the inequality holding elementwise. 
Together with the definition of $(\widehat{t}^{(1)},\ldots,\widehat{t}^{(M)})$, the second aspect implies that $(t^{(1)}_{\text{final}}, \ldots, t^{(M)}_{\text{final}}) \ge  (\widehat{t}^{(1)},\ldots,\widehat{t}^{(M)})$. 
Therefore, we conclude that $(t^{(1)}_{\text{final}}, \ldots, t^{(M)}_{\text{final}}) =   (\widehat{t}^{(1)},\ldots,\widehat{t}^{(M)})$. 

Based on the initialization of the algorithm, we observe that  $t_{0}^{(m)} \leq \widehat{t}^{(m)}$ for all $m\in[M]$. 
We assum that $t_{k-1}^{(m)} \le \widehat{t}^{(m)}$ for all $m\in[M]$ and for any fixed $m$, we have $t_{k}^{(m')} \le \widehat{t}^{(m')}$ for $m' \in \{1, \dots, m-1\}$. 
We proceed to prove $t_{k}^{(m)} \leq \widehat{t}^{(m)}$. Based on our assumptions, we have
\[
\mathcal{S}^{(m)}(t_{k}^{(1)}, \ldots, t_{k}^{(m-1)}, t , t_{k-1}^{(m+1)},  \ldots, t_{k-1}^{(M)}) \supseteq \mathcal{S}^{(m)}(\widehat{t}^{(1)}, \ldots, \widehat{t}^{(m-1)}, t , \widehat{t}^{(m+1)}, \ldots, \widehat{t}^{(M)}) 
\]
for any $t >0$. Then we have
\[
\mathcal{S}^{(m)}(t_{k}^{(1)}, \ldots, t_{k}^{(m-1)}, \widehat{t}^{(m)} , t_{k-1}^{(m+1)},  \ldots, t_{k-1}^{(M)}) \supseteq \mathcal{S}^{(m)}(\widehat{t}^{(1)}, \dots,\widehat{t}^{(M)}),
\]
which further leads to
\[
\widehat{\text{FDP}}^{(m)}(t_{k}^{(1)},\dots,t_{k}^{(m-1)},\widehat{t}^{(m)},t_{k-1}^{(m+1)},\dots,t_{k-1}^{(M)}) \leq \widehat{\text{FDP}}^{(m)}(\widehat{t}^{(1)},\dots, \widehat{t}^{(M)}).
\]
By the definition of the algorithm that 
\[
t_{k}^{(m)} = \min \left\{t \in [t_{k-1}^{(m)},+ \infty) : \frac{G^{(m)} \frac{1}{t}}{ 1  \bigvee 
	\left| \mathcal{S}^{(m)}(t^{(1)},\ldots,t^{(m-1)},t,t^{(m+1)},\ldots,t^{(M)})\right| } \le \alpha^{(m)}\right\},
\]
we have $t_{k}^{(m)} \le \widehat{t}^{(m)}$. 
This completes the first part of the proof.
We then proceed to the second part of the proof.
Since the vector $(t_{\text{final}}^{(1)}, \ldots, t_{\text{final}}^{(M)} )$ is the output of the algorithm, we have 
\[
\frac{G^{(m)} \frac{1}{t_{\text{final}}^{(m)}}}{ 1  \bigvee 
	\left| \mathcal{S}^{(m)}(t_{\text{final}}^{(1)}, \ldots, t_{\text{final}}^{(M)} )\right| } \le \alpha^{(m)} \quad \text{for all } m \in [M].
\]
By the definition of $\mathcal{T}(\alpha^{(1)}, \ldots, \alpha^{(M)})$, we have $(t_{\text{final}}^{(1)}, \ldots, t_{\text{final}}^{(M)} ) \in \mathcal{T}(\alpha^{(1)}, \ldots, \alpha^{(M)})$. This completes the second part of the proof.

%%%%%%%%3

\subsection{Proof of Lemma 1}
For the generalized e-filter, the maximizer $\widehat{t}^{(m)}$ lies in the set $\left\{G^{(m)}\slash (\alpha^{(m)}k): k\in[G^{(m)}]\right\} \cup \left\{\infty\right\}$ for each $m\in[M]$. For any fixed $m\in[M]$, we first consider the case where $\widehat{t}^{(m)} \in \left\{G^{(m)}\slash (\alpha^{(m)}k): k\in[G^{(m)}]\right\} $. We have 
\begin{equation*}
	\begin{split}
		&\quad~\mathbb{E}\left[\text{FDP}^{(m)}(\widehat{t}^{(1)},\dots,\widehat{t}^{(M)}) \right]\\
		& = \mathbb{E}\left[\frac{\sum_{g\in \mathcal{H}^{(m)}_{0}} \mathbb{I} \left\{ g \in \mathcal{S}^{(m)}(\widehat{t}^{(1)},\ldots,\widehat{t}^{(M)}) \right\} }{ \left| \mathcal{S}^{(m)}(\widehat{t}^{(1)},\ldots,\widehat{t}^{(M)})\right| \vee 1}\right] \\
		& = \mathbb{E}\left[ \sum\limits_{g\in \mathcal{H}^{(m)}_{0}} \frac{ G^{(m)} \frac{1}{\widehat{t}^{(m)}} }{ \left| \mathcal{S}^{(m)}(\widehat{t}^{(1)},\ldots,\widehat{t}^{(M)})\right| \vee 1} \cdot 
		\frac{\mathbb{I} \left\{ g \in \mathcal{S}^{(m)}(\widehat{t}^{(1)},\ldots,\widehat{t}^{(M)}) \right\}}{G^{(m)} \frac{1}{\widehat{t}^{(m)}}} \right]   \\
		& \le \frac{\alpha^{(m)}}{G^{(m)}} \cdot \mathbb{E}\left[ \sum\limits_{g\in \mathcal{H}^{(m)}_{0}} \widehat{t}^{(m)} \cdot \mathbb{I} \left\{ g \in \mathcal{S}^{(m)}(\widehat{t}^{(1)},\ldots,\widehat{t}^{(M)}) \right\} \right] \\ 
		& \le  \sum\limits_{g\in \mathcal{H}^{(m)}_{0}} \frac{\alpha^{(m)}}{G^{(m)}} \cdot \mathbb{E} \left[\widehat{t}^{(m)} \cdot \mathbb{I} \left\{ e_{g}^{(m)} \ge \widehat{t}^{(m)}  \right\} \right] \\
		& \leq \sum\limits_{g\in \mathcal{H}^{(m)}_{0}}  \frac{\alpha^{(m)}}{G^{(m)}} \cdot\mathbb{E} \left[e_{g}^{(m)} \cdot \mathbb{I} \left\{ e_{g}^{(m)} \ge \widehat{t}^{(m)}  \right\} \right] \\
		& \le \alpha^{(m)}\cdot \frac{1}{G^{(m)}}\sum\limits_{g\in \mathcal{H}^{(m)}_{0}} \mathbb{E}[e_{g}^{(m)}].
	\end{split}
\end{equation*} 		
If $\{e_{g}^{(m)}\}_{g\in [G^{(m)}]}$ is a set of e-values, then we have $\mathbb{E}\left[\text{FDP}^{(m)}(\widehat{t}^{(1)},\dots,\widehat{t}^{(M)}) \right] \leq \pi_{0}^{(m)}\alpha^{(m)}$.
If $\{e_{g}^{(m)}\}_{g\in [G^{(m)}]}$ is a set of relaxed e-values, then we have $\mathbb{E}\left[\text{FDP}^{(m)}(\widehat{t}^{(1)},\dots,\widehat{t}^{(M)}) \right] \leq \alpha^{(m)}$.
If $\{e_{g}^{(m)}\}_{g\in [G^{(m)}]}$ represents a set of asymptotic e-values or asymptotic relaxed e-values, then the expectation $\mathbb{E}\left[\text{FDP}^{(m)}(\widehat{t}^{(1)},\dots,\widehat{t}^{(M)}) \right]$ is asymptotically controlled under $\overline{\pi}_{0}^{(m)}\alpha^{(m)}$ or $\alpha^{(m)}$, respectively, as $(n, G^{(m)}) \to \infty$. 

Next, we examine the case where $\widehat{t}^{(m)} = \infty$ for any fixed $m\in[M]$. 
If $\{e_{g}^{(m)}\}_{g\in [G^{(m)}]}$ is a set of e-values or relaxed e-values, then for any fixed $g\in \mathcal{H}_{0}^{(m)}$, we have $e_{g}^{(m)} < \infty$ almost surely because $e_{g}^{(m)}$ is nonnegative and $\mathbb{E}[e_{g}^{(m)} ]$ is upper bounded. 
Consequently, no features are selected almost surely because we have $e_{g}^{(m)} < t^{(m)}$ almost surely. 
This implies that we have $ \mathbb{E}\left[\text{FDP}^{(m)}(\widehat{t}^{(1)},\dots,\widehat{t}^{(M)}) \right]= 0$ almost surely. 
If $\{e_{g}^{(m)}\}_{g\in [G^{(m)}]}$ represents a set of asymptotic e-values, we also have $e_{g}^{(m)} < \infty$ almost surely. 
The remainder of the proof follows a similar manner as described above.
%%%%%4

\subsection{Proof of Theorem 2}
For any fixed $m\in[M]$, it can be observed that once the set $\{e_{g}^{(m)}\}_{g \in G^{(m)}}$ is obtained, the subsequent steps follow exactly the same procedure as the generalized e-filter. 
Specifically, we translate the results of detection procedure $\mathcal{G}^{(m)}$ to obtain the set $\{e_{g}^{(m)}\}_{g\in [G^{(m)}]}$ and subsequently use these as inputs to the generalized e-filter.
Referring to the proof of Lemma 1, we observe that the control of $\text{FDR}^{(m)}$ depends on $\sum_{g\in\mathcal{H}^{(m)}_{0}}\mathbb{E}[e_{g}^{(m)}]$. When $\mathcal{G}^{(m)} \in \mathcal{K}_{\text{finite}}$, we have 
\begin{equation*}
	\begin{split}
		\sum\limits_{g \in \mathcal{H}_{0}^{(m)}}\mathbb{E}[e_{g}^{(m)}]
		&=\mathbb{E}\left[G^{(m)} \cdot \frac{\sum\limits_{g \in \mathcal{H}_{0}^{(m)}} \mathbb{I}\left \{g \in \mathcal{G}^{(m)}(\alpha_{0}^{(m)})\right\}}{\widehat{V}_{\mathcal{G}^{(m)}(\alpha_{0}^{(m)})} \bigvee \alpha_{0}^{(m)}}\right]\\
		&=G^{(m)}\cdot \mathbb{E}\left[\frac{V_{\mathcal{G}^{(m)}(\alpha_{0}^{(m)})}}{\widehat{V}_{\mathcal{G}^{(m)}(\alpha_{0}^{(m)})} \bigvee \alpha_{0}^{(m)}}\right]\\
		&\leq G^{(m)}.
	\end{split}
\end{equation*} 
This means that $\{e_{g}^{(m)}\}_{g\in[G^{(m)}]}$ is a set of relaxed e-values. According to Lemma 1, we have $\text{FDR}^{(m)} \leq \alpha^{(m)}$.
Since the group size $|\mathcal{A}_{g}^{(m)}|$ is uniformly bounded, we have $G^{(m)} \to \infty$ as $(n, N)\to \infty$ at a proper rate.
Thus, if $\mathcal{G}^{(m)} \in \mathcal{K}_{\text{asy}}$, we have 
\[
\frac{1}{G^{(m)}}\sum_{g \in \mathcal{H}_{0}^{(m)}}\mathbb{E}[e_{g}^{(m)}] \leq 1 
\]
as $(n, N)\to \infty$ at a proper rate.
This further leads that the FEFP satisfies $\text{FDR}^{(m)} \leq \alpha^{(m)}$ as $(n, N)\to \infty$ at a proper rate.

\subsection{Proof of Theorem 3}
We first prove the necessary and sufficient conditions.
Let $t^{(m)}=G^{(m)}/\widehat{V}^{(m)}_{\mathcal{G}^{(m)}(\alpha_0^{(m)})}$ for $m\in[M]$. 
If $g\in\mathcal{G}^{(m)}(\alpha_0^{(m)})$, then $e_{g}^{(m)}=t^{(m)}$; otherwise, $e_{g}^{(m)}=0 <t^{(m)}$.
Therefore, the selection set under this threshold vector is 
\[
\mathcal{S}_{\text{init}}^{(m)}=\left\{g\in G^{(m)}: \mathcal{A}_{g}^{(m)}\bigcap \left[\bigcap_{l=1}^{M}\left\{j:e_{h(l,j)}^{(l)}>0\right\}\right]\neq \emptyset\right\}.
\]
\begin{itemize}
\item If the event $\bigcap_{m=1}^M\left\{\widehat{V}^{(m)}_{\mathcal{G}^{(m)}(\alpha_0^{(m)})}\leq\alpha^{(m)}|\mathcal{S}_{\text{init}}^{(m)}|\right\}$ occurs, then 
\[
\widehat{\text{FDP}}^{(m)}(t^{(1)},\dots,t^{(M)})=\frac{G^{(m)}/t^{(m)}}{|\mathcal{S}^{(m)}_{\text{init}}|}=\frac{\widehat{V}^{(m)}_{\mathcal{G}^{(m)}(\alpha_0^{(m)})}}{|\mathcal{S}^{(m)}_{\text{init}}|}\leq \alpha^{(m)}~\text{for all } m\in [M].
\]
This implies that $(t^{(1)},\dots,t^{(M)})\in \mathcal{T}(\alpha^{(1)},\dotsm\alpha^{(M)})$.
By the monotonicity of $\mathcal{S}$, we have $\mathcal{S}(t^{(1)},\dots,t^{(M)})\subseteq \mathcal{S}(\widehat{t}^{(1)},\dots,\widehat{t}^{(m)})$ since $\widehat{t}^{(m)}\leq t^{(m)}$.
On the other hand, by the definition of $\mathcal{S}^{(m)}$, $\mathcal{S}^{(m)}_{\text{init}}$ is the largest selection set since all non-zero $e_{g}^{(m)}$ are not less than the threshold.
Therefore, we have $\mathcal{S}(\widehat{t}^{(1)},\dots,\widehat{t}^{(M)})=\mathcal{S}(G^{(1)}/\widehat{V}_{\mathcal{G}^{(1)}(\alpha_0^{(1)})},\dots,G^{(M)}/\widehat{V}_{\mathcal{G}^{(M)}(\alpha_0^{(M)})})$.
\item If $\mathcal{S}(\widehat{t}^{(1)},\dots,\widehat{t}^{(M)})=\mathcal{S}(G^{(1)}/\widehat{V}_{\mathcal{G}^{(1)}(\alpha_0^{(1)})},\dots,G^{(M)}/\widehat{V}_{\mathcal{G}^{(M)}(\alpha_0^{(M)})})$, then we have $\widehat{t}^{(m)}\leq e_{g}^{(m)}$ for $g\in \mathcal{G}^{(m)}(\alpha_0^{(m)})$ and $\widehat{\text{FDP}}^{(m)}(\widehat{t}^{(1)},\dots,\widehat{t}^{(M)})\leq \alpha^{(m)}$ for all $m$.
Note that for all $m\in[M]$, $\mathcal{S}_{\text{init}}^{(m)}=\mathcal{S}^{(m)}(G^{(1)}/\widehat{V}_{\mathcal{G}^{(1)}(\alpha_0^{(1)})},\dots,G^{(M)}/\widehat{V}_{\mathcal{G}^{(M)}(\alpha_0^{(M)})})$.
Therefore, for any $m\in[M]$, we have
\begin{align*}
\frac{\widehat{V}_{\mathcal{G}^{(m)}(\alpha_0^{(m)})}}{|\mathcal{S}^{(m)}_{\text{init}}|}
&=\frac{G^{(m)}/(G^{(m)}/\widehat{V}_{\mathcal{G}^{(m)}(\alpha_0^{(m)})})}{|\mathcal{S}^{(m)}(G^{(1)}/\widehat{V}_{\mathcal{G}^{(1)}(\alpha_0^{(1)})},\dots,G^{(M)}/\widehat{V}_{\mathcal{G}^{(M)}(\alpha_0^{(M)})})|}\\
&\leq \frac{G^{(m)}/\widehat{t}^{(m)}}{|\mathcal{S}^{(m)}(G^{(1)}/\widehat{V}_{\mathcal{G}^{(1)}(\alpha_0^{(1)})},\dots,G^{(M)}/\widehat{V}_{\mathcal{G}^{(M)}(\alpha_0^{(M)})})|}\\
&=\widehat{\text{FDP}}(\widehat{t}^{(1)},\dots,\widehat{t}^{(M)})\\
&\leq\alpha^{(m)}.
\end{align*}
This implies that the event $\bigcap_{m=1}^M\left\{\widehat{V}^{(m)}_{\mathcal{G}^{(m)}(\alpha_0^{(m)})}\leq\alpha^{(m)}|\mathcal{S}_{\text{init}}^{(m)}|\right\}$ occurs.
\end{itemize}
Now we prove the second sufficient condition. If $|\mathcal{G}^{(m)}(\alpha_0^{(m)})|/|\mathcal{S}^{(m)}_{\text{init}}|\leq \alpha^{(m)}/\alpha_0^{(m)}$, then we have
\begin{align*}
\frac{\widehat{V}_{\mathcal{G}^{(m)}(\alpha_0^{(m)})}}{|\mathcal{S}^{(m)}_{\text{init}}|}
&=\frac{\widehat{V}_{\mathcal{G}^{(m)}(\alpha_0^{(m)})}}{\mathcal{G}^{(m)}(\alpha_0^{(m)})}\frac{\mathcal{G}^{(m)}(\alpha_0^{(m)})}{|\mathcal{S}^{(m)}_{\text{init}}|}\\
& \overset{(i)}{\leq} \alpha_{0}^{(m)}\frac{\mathcal{G}^{(m)}(\alpha_0^{(m)})}{|\mathcal{S}^{(m)}_{\text{init}}|}\\
&\leq \alpha_0^{(m)},
\end{align*}
where (i) holds by the definition of $\mathcal{K}_{\text{finite}}\cup \mathcal{K}_{\text{asy}}$. Thus, the event $\bigcap_{m=1}^M\left\{\widehat{V}^{(m)}_{\mathcal{G}^{(m)}(\alpha_0^{(m)})}\leq\alpha^{(m)}|\mathcal{S}_{\text{init}}^{(m)}|\right\}$ occurs.

\subsection{Proof of Theorem 4}
Consider any fixed $m\in[M]$. We only prove the case that $\mathcal{G}^{(m)} \in \mathcal{K}_{\text{finite}}$.
The proof for the case where $\mathcal{G}^{(m)} \in \mathcal{K}_{\text{asy}}$ can be finished similarly.
For $\mathcal{G}^{(m)} \in \mathcal{K}_{\text{finite}}$, we have $\mathbb{E}[V_{\mathcal{G}^{(m)}_{r}(\alpha_{0}^{(m)})} \slash\widehat{V}_{\mathcal{G}^{(m)}_{r}(\alpha^{(m)}_{0})}] \leq 1$. This further implies that
\begin{equation*}
	\begin{split}
		\sum\limits_{g \in \mathcal{H}_{0}^{(m)}}\mathbb{E}[e_{gr}^{(m)}]
		&=\mathbb{E}\left[G^{(m)} \cdot \frac{\sum\limits_{g \in \mathcal{H}_{0}^{(m)}} \mathbb{I}\left \{g \in \mathcal{G}^{(m)}_{r}(\alpha_{0}^{(m)})\right\}}{\widehat{V}_{\mathcal{G}^{(m)}_{r}(\alpha_{0}^{(m)})} \bigvee \alpha_{0}^{(m)}}\right]\\
		&=G^{(m)}\cdot \mathbb{E}\left[\frac{V_{\mathcal{G}^{(m)}_{r}(\alpha_{0}^{(m)})}}{\widehat{V}_{\mathcal{G}^{(m)}_{r}(\alpha_{0}^{(m)})}  \bigvee \alpha_{0}^{(m)}}\right]\\
		&\leq G^{(m)}.
	\end{split}
\end{equation*} 
By averaging over $r=1,\dots,R$, we have
\begin{equation*}
	\begin{split}
		\sum\limits_{g\in\mathcal{H}_{0}^{(m)}}\mathbb{E}[\overline{e}^{(m)}_{g}]
		&=\sum\limits_{g\in\mathcal{H}_{0}^{(m)}}\mathbb{E}\left[\sum\limits_{r\in[R]} \omega^{(m)}_{r}e_{gr}^{(m)}\right]\\
		&=\sum\limits_{r\in[R]}\omega^{(m)}_{r}\sum\limits_{g\in\mathcal{H}_{0}^{(m)}}\mathbb{E}[e_{gr}^{(m)}] \\
		&\leq \sum\limits_{r\in[R]}\omega^{(m)}_{r}G^{(m)}\\
		&= G^{(m)}.
	\end{split}
\end{equation*} 
Based on the proof of Lemma 1, it follows that $\text{FDR}^{(m)} \leq \alpha^{(m)}$.

%%%%%%%%7

\subsection{Proof of Theorem 5}
For each layer $m\in[M]$ and group $g\in[G^{(m)}]$, conditional on the data $(\boldsymbol{X}, \boldsymbol{y})$, the $e_{gr}^{(m)}$'s are are independent and identically distributed for different replications $r\in[R^{(m)}]$. By Hoeffding's inequality, we have 
\begin{align*}
	\begin{split}
\mathbb{P}\left(\max\limits_{g\in[G^{(m)}]}\left|\overline{e}_{g}^{(m)}-\Bar{\Bar{e}}_{g}^{(m)}\right| < \Delta^{(m)}\big{|} \boldsymbol{X}, \boldsymbol{y}\right)
		&\geq 1- \sum\limits_{g=1}^{G^{(m)}}\mathbb{P}\left(\left|\overline{e}_{g}^{(m)}-\Bar{\Bar{e}}_{g}^{(m)}\right| > \Delta^{(m)} \big{| }\boldsymbol{X}, \boldsymbol{y}\right)\\
		&\geq 1 - 2G^{(m)}\exp\left(\frac{-2(\Delta^{(m)})^{2}R^{(m)}}{(G^{(m)})^{2}}\right). 
	\end{split}
\end{align*} 
Thus, we have 
\begin{align*}
	\begin{split}
		\mathbb{P}\left(\bigcap\limits_{m=1}^{M}\left\{\max_{k\in[G^{(m)}]}\left|\overline{e}_{g}^{(m)} - \Bar{\Bar{e}}_{g}^{(m)}\right| < \Delta^{(m)}\right\} \Big{|} \boldsymbol{X}, \boldsymbol{y} \right)
    &=1-\mathbb{P}\left(\bigcup_{m=1}^M\left\{\max_{k\in[G^{(m)}]}\left|\overline{e}_{g}^{(m)} - \Bar{\Bar{e}}_{g}^{(m)}\right| \geq \Delta^{(m)}\right\} \big{|} \boldsymbol{X}, \boldsymbol{y} \right)\\
		&\geq 1-\sum_{m=1}^M \mathbb{P}\left(\max_{k\in[G^{(m)}]}\left|\overline{e}_{g}^{(m)} - \Bar{\Bar{e}}_{g}^{(m)}\right| \geq \Delta^{(m)} \big{|} \boldsymbol{X}, \boldsymbol{y} \right)\\
		&\geq 1-2\sum_{m=1}^M G^{(m)}\exp\left(\frac{-2(\Delta^{(m)})^2R^{(m)}}{(G^{(m)})^2}\right) .
	\end{split}
\end{align*}

Recall that for $j\in \mathcal{S}_{\infty}$, we have $\Bar{\Bar{e}}_{h(m,j)}^{(m)} \geq \widehat{t}^{(m)}_{\infty}$ for all $m\in[M]$. Thus, on the event that $\bigcap\limits_{m=1}^{M}\left\{\max_{g\in[G^{(m)}]}\left|\overline{e}_{g}^{(m)} - \Bar{\Bar{e}}_{g}^{(m)}\right| < \Delta^{(m)}\right\}$, for $j\in \mathcal{S}_{\infty}$, we have
\begin{align*}
	\begin{split}
		\overline{e}_{h(m,j)}^{(m)} 
		&=  \Bar{\Bar{e}}_{h(m,j)}^{(m)} - (\Bar{\Bar{e}}_{h(m,j)}^{(m)}   - \overline{e}_{h(m,j)}^{(m)})\\
		&>  \Bar{\Bar{e}}_{h(m,j)}^{(m)}-\Delta^{(m)}\\
		&\geq\Bar{\Bar{e}}_{h(m,j)}^{(m)}-\left|\Bar{\Bar{e}}_{h(m,j)}^{(m)}-\widehat{t}^{(m)}_{\infty}\right|\\
		&= \widehat{t}_{\infty}^{(m)},
	\end{split}
\end{align*}
and for $j\notin \mathcal{S}_{\infty}$, we have
\begin{align*}
	\begin{split}
		\overline{e}_{h(m,j)}^{(m)} 
		&=  \Bar{\Bar{e}}_{h(m,j)}^{(m)} - (\Bar{\Bar{e}}_{h(m,j)}^{(m)}   - \overline{e}_{h(m,j)}^{(m)})\\
		&<  \Bar{\Bar{e}}_{h(m,j)}^{(m)}+\Delta^{(m)}\\
		&\leq\Bar{\Bar{e}}_{h(m,j)}^{(m)}+\left|\Bar{\Bar{e}}_{h(m,j)}^{(m)}-\widehat{t}^{(m)}_{\infty}\right|\\
		&= \widehat{t}_{\infty}^{(m)}.
	\end{split}
\end{align*}
Thus, the threshold vector $(\widehat{t}^{(1)},\dots,\widehat{t}^{(M)}) = (\widehat{t}^{(1)}_{\infty},\dots,\widehat{t}^{(M)}_{\infty})$ belongs to the set $\mathcal{T}(\alpha^{(1)},\dots,\alpha^{(M)})$ and we have  $\mathcal{S}_{\infty} \subseteq \mathcal{S}_{\text{stab}}$.
Adopting the same argument as above, we also have $\mathcal{S}_{\infty} \supseteq \mathcal{S}_{\text{stab}}$. Hence, the event $\bigcap\limits_{m=1}^{M}\left\{\max_{g\in[G^{(m)}]}\left|\overline{e}_{g}^{(m)} - \Bar{\Bar{e}}_{g}^{(m)}\right| < \Delta^{(m)}\right\}$ implies that $\mathcal{S}_{\text{stab}} =\mathcal{S}_{\infty}$. Therefore, we complete the proof that 
\[
\mathbb{P}\left(\mathcal{S}_{\text{stab}} =\mathcal{S}_{\infty} | \boldsymbol{X}, \boldsymbol{y} \right) \geq 1-2\sum_{m=1}^M G^{(m)}\exp\left(\frac{-2(\Delta^{(m)})^2R^{(m)}}{(G^{(m)})^2}\right) .
\]

\section{Proofs for Section 4}
\subsection{Proof of Theorem 6}
We introduce the following notations to present several key ratios. For $t \in \mathbb{R} $, denote
\[
\widehat{L}^{0}_{G^{(m)}}(t)=\frac{1}{G_{0}^{(m)}}\sum\limits_{g \in \mathcal{H}_{0}^{(m)}}\mathbb{I}\left\{T^{(m)}_{g}>t \right\}, \quad	L^{0}_{G^{(m)}}(t)=\frac{1}{G_{0}^{(m)}}\sum\limits_{g \in \mathcal{H}_{0}^{(m)}}\mathbb{P}\left\{T^{(m)}_{g}>t\right\},	
\]
\[
\widehat{L}^{1}_{G^{(m)}}(t)=\frac{1}{G_{1}^{(m)}}\sum\limits_{g \in \mathcal{H}_{1}^{(m)}}\mathbb{I}\left\{T_{g}^{(m)}>t \right\}, \quad 	\widehat{V}^{0}_{G^{(m)}}(t)=\frac{1}{G^{(m)}_{0}}\sum\limits_{g \in \mathcal{H}_{0}^{(m)}}\mathbb{I}\left\{T_{g}^{(m)}<-t\right\},
\]
where $G_{0}^{(m)} = |\mathcal{H}_{0}^{(m)}|$ and $G_{1}^{(m)}=G^{(m)}-G_{0}^{(m)}$.
Let $r_{G^{(m)}}$ represent $G^{(m)}_{1} \slash G^{(m)}_{0}$. We define several forms of the false discovery proportion as
\[
\text{FDP}_{G^{(m)}}(t)=\frac{\widehat{L}_{G^{(m)}}^{0}(t)}{\widehat{L}_{G^{(m)}}^{0}(t)+r_{G^{(m)}}\widehat{L}_{G^{(m)}}^{1}(t)},
\]
\[	\text{FDP}_{G^{(m)}}^{+}(t)=\frac{\widehat{V}_{G^{(m)}}^{0}(t)}{\widehat{L}_{G^{(m)}}^{0}(t)+r_{G^{(m)}}\widehat{L}_{G^{(m)}}^{1}(t)},
\]
\[ \overline{\text{FDP}}_{G^{(m)}}(t)=\frac{L_{G^{(m)}}^{0}(t)}{L_{G^{(m)}}^{0}(t)+r_{G^{(m)}}\widehat{L}_{G^{(m)}}^{1}(t)}.
\]
\begin{lemma} \label{lemma1}
	Under Assumption~3.3, if $G_{0}^{(m)} \to \infty$ as $G^{(m)} \to \infty$, we have in probability,
	\[
	\sup\limits_{t\in \mathbb{R}}| \widehat{L}_{G^{(m)}}^{0}(t) - L_{G^{(m)}}^{0}| \to 0, \quad 	\sup\limits_{t\in \mathbb{R}}| \widehat{V}_{G^{(m)}}^{0}(t) - L_{G^{(m)}}^{0}(t)| \to 0.
	\]
\end{lemma}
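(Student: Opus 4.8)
\medskip

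The plan is to establish both uniform convergences by a classical Glivenko–Cantelli argument adapted to a weakly dependent (rather than i.i.d.) array, exploiting Assumption~3.3 (group weak dependence) to control variances and a pointwise-to-uniform bootstrapping via monotonicity of the empirical processes in $t$. First I would treat the two statements symmetrically: $\widehat{L}^{0}_{G^{(m)}}(t) = \frac{1}{G_0^{(m)}}\sum_{g\in\mathcal H_0^{(m)}}\mathbb I\{T_g^{(m)}>t\}$ has mean exactly $L^0_{G^{(m)}}(t)$, and by Assumption~3.1 (symmetry of the null sampling distribution) the process $\widehat{V}^{0}_{G^{(m)}}(t) = \frac{1}{G_0^{(m)}}\sum_{g\in\mathcal H_0^{(m)}}\mathbb I\{T_g^{(m)}<-t\}$ has the same mean $L^0_{G^{(m)}}(t)$ (since $\mathbb P(T_g^{(m)}<-t)=\mathbb P(T_g^{(m)}>t)$ for $g\in\mathcal H_0^{(m)}$, which is what Lemma~3.2 uses distributionally); thus the two claims are handled by the identical technique, and I would prove the first and remark that the second follows mutatis mutandis.

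\medskip

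For the first statement, I would proceed in three steps. \emph{Step 1 (pointwise $L^2$ convergence).} Fix $t\in\mathbb R$ and write $\mathrm{Var}(\widehat L^0_{G^{(m)}}(t)) = \frac{1}{(G_0^{(m)})^2}\Big[\sum_{g\in\mathcal H_0^{(m)}}\mathrm{Var}(\mathbb I\{T_g^{(m)}>t\}) + \sum_{g\neq h\in\mathcal H_0^{(m)}}\mathrm{Cov}(\mathbb I\{T_g^{(m)}>t\},\mathbb I\{T_h^{(m)}>t\})\Big]$. The diagonal term is $O(1/G_0^{(m)})$ since each variance is at most $1/4$; the off-diagonal term is at most $c^{(m)}(G_0^{(m)})^{\lambda^{(m)}}/(G_0^{(m)})^2 = c^{(m)}(G_0^{(m)})^{\lambda^{(m)}-2}\to 0$ because $\lambda^{(m)}<2$. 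Hence $\widehat L^0_{G^{(m)}}(t)-L^0_{G^{(m)}}(t)\to 0$ in $L^2$, so in probability, for each fixed $t$. \emph{Step 2 (from pointwise to uniform via monotonicity).} Both $t\mapsto \widehat L^0_{G^{(m)}}(t)$ and $t\mapsto L^0_{G^{(m)}}(t)$ are nonincreasing, and the limit function is (uniformly) continuous because Assumption~3.3 states the $T_g^{(m)}$ are continuous random variables, so $L^0_{G^{(m)}}$ has no jumps; moreover both functions lie in $[0,1]$ with limits $1$ at $-\infty$ and $0$ at $+\infty$. Then the standard Pólya-type argument applies: partition $\mathbb R$ into finitely many intervals by a grid $-\infty=t_0<t_1<\dots<t_K=+\infty$ chosen so that $L^0_{G^{(m)}}$ varies by at most $\varepsilon$ on each; by monotonicity, $\sup_t|\widehat L^0_{G^{(m)}}(t)-L^0_{G^{(m)}}(t)|\le \max_{0\le k\le K}|\widehat L^0_{G^{(m)}}(t_k)-L^0_{G^{(m)}}(t_k)| + \varepsilon$. \emph{Step 3 (conclude).} Since the maximum is over finitely many points, Step 1 gives $\max_k|\widehat L^0_{G^{(m)}}(t_k)-L^0_{G^{(m)}}(t_k)|\xrightarrow{p}0$, and $\varepsilon$ is arbitrary, yielding $\sup_t|\widehat L^0_{G^{(m)}}(t)-L^0_{G^{(m)}}(t)|\xrightarrow{p}0$. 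Repeating verbatim with $\mathbb I\{T_g^{(m)}<-t\}$ (whose mean is again $L^0_{G^{(m)}}(t)$ by symmetry, and whose covariances are controlled by Assumption~3.3 applied with the reflected statistics, or equivalently by noting $\{T_g^{(m)}<-t\}$ events inherit the same weak-dependence bound) gives $\sup_t|\widehat V^0_{G^{(m)}}(t)-L^0_{G^{(m)}}(t)|\xrightarrow{p}0$.

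\medskip

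The main obstacle I anticipate is Step 2 — specifically, justifying that the weak-dependence grid argument is legitimate when the grid itself may need to depend on $G^{(m)}$ because $L^0_{G^{(m)}}$ is a changing sequence of functions rather than a fixed limit. The clean fix is to require (as is implicit in Assumption~3.3 via continuity, and consistent with the uniform variance bounds assumed in Theorem~3.4) a uniform equicontinuity / tightness of the family $\{L^0_{G^{(m)}}\}$, or alternatively to pass through a subsequence argument combined with Helly's selection theorem: every subsequence of $L^0_{G^{(m)}}$ has a further subsequence converging pointwise to some nonincreasing limit, along which the monotone Pólya argument applies and forces $\sup_t|\widehat L^0_{G^{(m)}}-L^0_{G^{(m)}}|\xrightarrow{p}0$, and since this holds along every subsequence the full-sequence convergence follows. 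A secondary, more routine point is ensuring the covariance bound in Assumption~3.3, stated for the upper-tail indicators $\mathbb I\{T>t\}$, transfers to the lower-tail indicators $\mathbb I\{T<-t\}$ needed for $\widehat V^0$; by the symmetry in Assumption~3.1 and exchangeability of the construction this is immediate, but it should be stated explicitly. I expect the rest to be bookkeeping essentially identical to Proposition~2.2 of \cite{Dai2022}.
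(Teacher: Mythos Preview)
Your proposal is essentially the paper's proof: a Glivenko--Cantelli argument via monotonicity, a finite grid on which $L^0_{G^{(m)}}$ varies by at most $\varepsilon$, and Chebyshev with the covariance bound from Assumption~3.3. Two minor differences are worth noting. First, your anticipated obstacle in Step~2 dissolves more cheaply than you suggest: the paper simply lets the grid depend on $G^{(m)}$ (writing $\alpha_k^{G^{(m)}}$), but fixes the \emph{number} of grid points at $N_\varepsilon=[2/\varepsilon]$, which is always possible since $L^0_{G^{(m)}}$ is continuous and monotone from $1$ to $0$; because the Chebyshev bound $4c^{(m)}/(\varepsilon^2 (G_0^{(m)})^{2-\lambda^{(m)}})$ is uniform in $t$ (Assumption~3.3 holds for all $t$), the union bound over $N_\varepsilon$ points goes to zero regardless of where they sit, so no equicontinuity or Helly argument is needed. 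Second, for $\widehat V^0_{G^{(m)}}$ the paper does not re-run the argument on lower-tail indicators but simply invokes the distributional identity $\widehat L^0_{G^{(m)}}(t)\overset{d}{=}\widehat V^0_{G^{(m)}}(t)$ under Assumption~3.1, which sidesteps your secondary concern about transferring the covariance bound.
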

\begin{proof}[Proof of Lemma~\ref{lemma1}]
	For any $\epsilon \in (0, 1)$, denote $-\infty = \alpha_{0}^{G^{(m)}} < \alpha_{1}^{G^{(m)}} < \dots < \alpha_{N_{\epsilon}}^{G^{(m)}} = + \infty$ with $N_{\epsilon} = [2 \slash\epsilon]$, such that $L_{G^{(m)}}^{0}(\alpha^{G^{(m)}}_{k-1}) - L^{0}_{G^{(m)}}(\alpha^{G^{(m)}}_{k}) \leq \epsilon/2$ for $k \in [N_{\epsilon}]$. By Assumption 2, such a sequence $\{\alpha_{k}^{G^{(m)}}\}$ exists since $L_{G^{(m)}}^{0}(t)$ is a continuous function for $t \in \mathbb{R}$. We have 
	\begin{align*}
		\begin{split}
			\mathbb{P}\left(\sup\limits_{t\in \mathbb{R}}(\widehat{L}^{0}_{G^{(m)}}(t) - L_{G^{(m)}}^{0}(t)) > \epsilon \right)
			&\leq \mathbb{P}\left(\bigcup\limits_{k=1}^{N_{\epsilon}} \sup\limits_{t\in [\alpha_{k-1}^{G^{(m)}}, \alpha_{k}^{G^{(m)}})} (\widehat{L}^{0}_{G^{(m)}}(t) - L^{0}_{G^{(m)}}(t)) \geq \epsilon \right) \\
			&\leq \sum\limits_{k=1}^{N_{\epsilon}}\mathbb{P}\left(\sup\limits_{t\in [\alpha_{k-1}^{G^{(m)}}, \alpha_{k}^{G^{(m)}})} (\widehat{L}^{0}_{G^{(m)}}(t) - L^{0}_{G^{(m)}}(t)) \geq \epsilon \right).
		\end{split}
	\end{align*}
	Thus, for any fixed $k\in [N_{\epsilon}]$, we have 
	\begin{equation*}
		\begin{split}
			&\quad~\sup\limits_{t\in[\alpha_{k-1}^{G^{(m)}},\alpha_{k}^{G^{(m)}})}(\widehat{L}_{G^{(m)}}^{0}(t) - L^{0}_{G^{(m)}}(t)) \leq \widehat{L}_{G^{(m)}}^{0}(\alpha_{k-1}^{G^{(m)}}) - L^{0}_{G^{(m)}}(\alpha_{k}^{G^{(m)}}) \\
			&\quad\quad\quad\quad \quad\quad \quad\quad \quad\quad \quad\quad \quad\quad \quad\quad~ \leq \widehat{L}_{G^{(m)}}^{0}(\alpha_{k-1}^{G^{(m)}}) - L^{0}_{G^{(m)}}(\alpha_{k-1}^{G^{(m)}})+ \epsilon/2.
		\end{split}
	\end{equation*}
	It further leads that 
	\begin{equation*}
		\begin{split}
			&\quad ~\mathbb{P}\left(\sup\limits_{t\in \mathbb{R}}(\widehat{L}^{0}_{G^{(m)}}(t) - L_{G^{(m)}}^{0}(t)) > \epsilon \right) \\
			&\leq \sum\limits_{k=1}^{N_{\epsilon}}\mathbb{P}\left(\sup\limits_{t\in [\alpha_{k-1}^{G^{(m)}}, \alpha_{k}^{G^{(m)}})} (\widehat{L}^{0}_{G^{(m)}}(t) - L^{0}_{G^{(m)}}(t)) \geq \epsilon \right)\\
			&\leq \sum\limits_{k=1}^{N_{\epsilon}}\mathbb{P}\left(\sup\limits_{t\in [\alpha_{k-1}^{G^{(m)}}, \alpha_{k}^{G^{(m)}})} (\widehat{L}^{0}_{G^{(m)}}(\alpha_{k-1}^{G^{(m)}}) - L^{0}_{G^{(m)}}(\alpha_{k-1}^{G^{(m)}})) \geq \epsilon/2 \right)\\
			&\leq \sum\limits_{k=1}^{N_{\epsilon}}\frac{4}{\epsilon^{2}(G^{(m)}_{0})^{2}}\text{Var}\left(\sum\limits_{g\in \mathcal{H}_{0}^{(m)}} \mathbb{I}\{T_{g}^{(m)} > \alpha_{k-1}^{G^{(m)}}\}\right)\\
			&\leq \frac{4c^{(m)}N_{\epsilon}}{\epsilon^{2}(G^{(m)}_{0})^{2-\lambda^{(m)}}} \to 0, \quad \text{as}~G^{(m)}\to \infty.
		\end{split}
	\end{equation*}
	The last inequality is guaranteed by Assumption 2.
	Similarly, we have
		\begin{equation*}
		\begin{split}
			&\quad~\mathbb{P}\left(\inf\limits_{t\in \mathbb{R}}(\widehat{L}^{0}_{G^{(m)}}(t) - L_{G^{(m)}}^{0}(t)) < -\epsilon \right) \leq  \sum\limits_{k=1}^{N_{\epsilon}}\mathbb{P}\left(\widehat{L}^{0}_{G^{(m)}}(\alpha_{k}^{G^{(m)}}) - L^{0}_{G^{(m)}}(\alpha_{k}^{G^{(m)}}) \leq -\epsilon/2 \right)\\
			&\quad \quad \quad \quad \quad \quad  \quad \quad \quad \quad \quad \quad \quad \quad \quad \quad~\leq \frac{4c^{(m)}N_{\epsilon}}{\epsilon^{2}(G^{(m)}_{0})^{2-\lambda^{(m)}}} \to 0,\quad \text{as } G^{(m)} \to \infty.
		\end{split}
	\end{equation*}
	Then the following inequality holds that
	\[
	\mathbb{P}\left( \sup\limits_{t\in \mathbb{R}}\left| \widehat{L}_{G^{(m)}}^{0}(t) - L_{G^{(m)}}^{0}(t)\right| > \epsilon \right) \to 0, \quad \text{as}~G^{(m)} \to \infty.
	\]
	Under Assumption 1, we have $\widehat{L}_{G^{(m)}}^{0}(t) \overset{d}{=} \widehat{V}_{G^{(m)}}^{0}(t)$, which means that $\sup\limits_{t\in \mathbb{R}}|\widehat{V}^{0}_{G^{(m)}}(t) - L^{0}_{G^{(m)}}(t)| \to 0$ in probability.
\end{proof}
%%%%%%
%%%%%%
\begin{proof}[Proof of Theorem 6]
	With the assumption that $T_g^{(m)}$'s are continuous variables and $\text{Var}(T_{g}^{(m)})$ is uniformly upper bounded and lower bounded away from 0, by Lemma~\ref{lemma1} and continuous mapping theorem, we have
	\[
	\sup\limits_{0<t \leq c}| \text{FDP}_{G^{(m)}}^{+}(t) - \text{FDP}_{G^{(m)}}(t)| \overset{p}{\to} 0
	\]
	for any constant $c > 0$. For any fixed $\epsilon \in (0, \alpha^{(m)})$ and enough large $G^{(m)}$, we have
	\begin{equation*}
		\begin{split}
			\mathbb{P}(t_{\alpha^{(m)}} \leq \tau_{\alpha^{(m)}-\epsilon})
			&\geq \mathbb{P}(\text{FDP}_{G^{(m)}}^{+}(\tau_{\alpha^{(m)}-\epsilon}) \leq \alpha^{(m)})\\
			&\geq \mathbb{P}(| \text{FDP}_{G^{(m)}}^{+}(\tau_{\alpha^{(m)}-\epsilon}) - \text{FDP}_{G^{(m)}}(\tau_{\alpha^{(m)}-\epsilon})| \leq \epsilon, \text{FDP}_{G^{(m)}}(\tau_{\alpha^{(m)}-\epsilon}) \leq \alpha^{(m)}-\epsilon)\\
			&\geq 1 - \epsilon.
		\end{split}
	\end{equation*}
	Conditioning on the event $t_{\alpha^{(m)}} \leq \tau_{\alpha^{(m)}-\epsilon}$, we have
	\begin{equation*}
		\begin{split}
			&\quad~\limsup\limits_{G^{(m)} \to \infty}\mathbb{E}[\text{FDP}_{G^{(m)}}(t_{\alpha^{(m)}})]\\
			&\leq \limsup\limits_{G^{(m)} \to \infty}\mathbb{E}[\text{FDP}_{G^{(m)}}(t_{\alpha^{(m)}}) | t_{\alpha^{(m)}} \leq \tau_{\alpha^{(m)}-\epsilon}] \mathbb{P}(t_{\alpha^{(m)}}\leq \tau_{\alpha^{(m)}-\epsilon})+\epsilon\\
			&\leq \limsup\limits_{G^{(m)} \to \infty}\mathbb{E}[|\text{FDP}_{G^{(m)}}(t_{\alpha^{(m)}}) - \overline{\text{FDP}}_{G^{(m)}}(t_{\alpha^{(m)}}) || t_{\alpha^{(m)}} \leq \tau_{\alpha^{(m)}-\epsilon}] \mathbb{P}(t_{\alpha^{(m)}}\leq \tau_{\alpha^{(m)}-\epsilon})\\
			&~~~~+ \limsup\limits_{G^{(m)} \to \infty}\mathbb{E}[|\text{FDP}^{+}_{G^{(m)}}(t_{\alpha^{(m)}}) - \overline{\text{FDP}}_{G^{(m)}}(t_{\alpha^{(m)}}) || t_{\alpha^{(m)}} \leq \tau_{\alpha^{(m)}-\epsilon}] \mathbb{P}(t_{\alpha^{(m)}}\leq \tau_{\alpha^{(m)}-\epsilon})\\
			&~~~~+\limsup\limits_{G^{(m)} \to \infty}\mathbb{E}[\text{FDP}^{+}_{G^{(m)}}(t_{\alpha^{(m)}}) | t_{\alpha^{(m)}} \leq \tau_{\alpha^{(m)}-\epsilon}] \mathbb{P}(t_{\alpha^{(m)}}\leq \tau_{\alpha^{(m)}-\epsilon})\\
			&\leq \limsup\limits_{G^{(m)} \to \infty} \mathbb{E} \left[ \sup\limits_{0<t\leq \tau_{\alpha^{(m)}-\epsilon}}|\text{FDP}_{G^{(m)}}(t) - \overline{\text{FDP}}_{G^{(m)}}(t)|\right]\\
			&~~~~+\limsup\limits_{G^{(m)} \to \infty} \mathbb{E} \left[ \sup\limits_{0<t\leq \tau_{\alpha^{(m)}-\epsilon}}|\text{FDP}^{+}_{G^{(m)}}(t) - \overline{\text{FDP}}_{G^{(m)}}(t)|\right]\\
			&~~~~+\limsup\limits_{G^{(m)} \to \infty} \mathbb{E}[\text{FDP}_{G^{(m)}}^{+}(t_{\alpha^{(m)}})] + \epsilon \\
			&\leq \alpha^{(m)}+\epsilon.
		\end{split}
	\end{equation*}
	 The last inequality is based on the definition of $t_{\alpha^{(m)}}$ ($\text{FDP}_{G^{(m)}}^{+}(t_{\alpha^{(m)}}) \leq \alpha^{(m)}$) and the result of dominated convergence theorem and Lemma~\ref{lemma1} that 
	\begin{align*}
	\limsup\limits_{G^{(m)} \to \infty} \mathbb{E} \left[ \sup\limits_{0<t\leq \tau_{\alpha^{(m)}-\epsilon}}|\text{FDP}_{G}(t) - \overline{\text{FDP}}_{G}(t)|\right] &= \limsup\limits_{G^{(m)} \to \infty} \mathbb{E} \left[ \sup\limits_{0<t\leq \tau_{\alpha^{(m)}-\epsilon}}|\text{FDP}^{+}_{G}(t) - \overline{\text{FDP}}_{G}(t)|\right] \\&= 0.
	\end{align*}
	According to the arbitrariness of $\epsilon$, we have 
	\[
	\limsup\limits_{G^{(m)}\to \infty}\text{FDR}(t_{\alpha^{(m)}})\leq \alpha^{(m)}.
	\]
\end{proof}
%%%%%%%%8

\subsection{Proof of Theorem 7}
% Conditioning on the event $\mathcal{C}^{c}$, it is obvious that the expectation of $\sum_{g\in\mathcal{H}_{0}^{(m)}}e_{gr}^{(m)} \slash G^{(m)}$ is uniformly upper bounded by a constant $B>0$.
% According to the proof of Theorem 6, for enough large $G^{(m)}$, we have $\mathbb{P}(\mathcal{C}) \geq 1- \epsilon \slash B$. Thus, we have $\mathbb{P}(\mathcal{C}^{c}) \leq \epsilon \slash B$.
We first prove that for any $m\in[M]$, the set $\{e_{gr}^{(m)}\}_{g\in[G^{(m)}]}$ is a set of asymptotic relaxed e-values. 
For any $t \geq 0$, let $\widehat{V}_{G}^{1}(t) =  \sum_{g \in \mathcal{H}_{1}^{(m)}} \mathbb{I}\left\{T_{g}^{(m)} \leq -t \right\}/G^{(m)}_{1}$. We have
\begin{align*}
&~~~~~\frac{1}{G^{(m)}}\sum_{g\in \mathcal{H}_{0}^{^{(m)}}}\mathbb{E}\left[e^{(m)}_{gr}\right] \\
&=\mathbb{E}\left[\frac{\sum_{g\in \mathcal{H}_{0}^{^{(m)}}}\mathbb{I}\left\{g \in \mathcal{G}^{(m)}_{r}(\alpha_{0}^{(m)})\right\}}{\widehat{V}_{\mathcal{G}^{(m)}_{r}(\alpha_{0}^{(m)})} \bigvee \alpha_{0}^{(m)}}\right]    \\
&\leq \mathbb{E}\left[ \frac{\widehat{L}_{G^{(m)}}^{0}(t_{\alpha_{0}^{(m)}}^{r})}{\widehat{V}_{G^{(m)}}^{0}(t_{\alpha_{0}^{(m)}}^{r})+r_{G^{(m)}}\widehat{V}_{G^{(m)}}^{1}(t_{\alpha_{0}^{(m)}}^{r}) } \right]\\
&\leq  \mathbb{E}\left[\left| \frac{\widehat{L}_{G^{(m)}}^{0}(t_{\alpha_{0}^{(m)}}^{r})}{\widehat{V}_{G^{(m)}}^{0}(t_{\alpha_{0}^{(m)}}^{r})+r_{G^{(m)}}\widehat{V}_{G^{(m)}}^{1}(t_{\alpha_{0}^{(m)}}^{r})} - \frac{L_{G^{(m)}}^{0}(t_{\alpha_{0}^{(m)}}^{r})}{L^{0}_{G^{(m)}}(t_{\alpha_{0}^{(m)}}^{r})+r_{G^{(m)}}\widehat{V}_{G^{(m)}}^{1}(t_{\alpha^{(m)}_{0}}^{r}) } \right|  \right]\\
&~~~~+\mathbb{E}\left[\left| \frac{L_{G^{(m)}}^{0}(t_{\alpha^{(m)}_{0}}^{r})}{L_{G^{(m)}}^{0}(t_{\alpha^{(m)}_{0}}^{r})+r_{G^{(m)}}\widehat{V}_{G^{(m)}}^{1}(t_{\alpha^{(m)}_{0}}^{r})} - \frac{\widehat{V}_{G^{(m)}}^{0}(t_{\alpha^{(m)}_{0}}^{r})}{\widehat{V}^{0}_{G^{(m)}}(t_{\alpha^{(m)}_{0}}^{r})+r_{G^{(m)}}\widehat{V}_{G^{(m)}}^{1}(t_{\alpha^{(m)}_{0}}^{r})} \right|   \right]  \\
&~~~~+\mathbb{E}\left[ \frac{\widehat{V}_{G^{(m)}}^{0}(t_{\alpha^{(m)}_{0}}^{r})}{\left(\widehat{V}^{0}_{G^{(m)}}(t_{\alpha^{(m)}_{0}}^{r})+r_{G^{(m)}}\widehat{V}_{G^{(m)}}^{1}(t_{\alpha^{(m)}_{0}}^{r})\right)} \right]\\
&\leq 1+\mathbb{E}\left[\left| \frac{\widehat{L}_{G^{(m)}}^{0}(t_{\alpha_{0}^{(m)}}^{r})}{\widehat{V}_{G^{(m)}}^{0}(t_{\alpha_{0}^{(m)}}^{r})+r_{G^{(m)}}\widehat{V}_{G^{(m)}}^{1}(t_{\alpha_{0}^{(m)}}^{r})} - \frac{L_{G^{(m)}}^{0}(t_{\alpha_{0}^{(m)}}^{r})}{L^{0}_{G^{(m)}}(t_{\alpha_{0}^{(m)}}^{r})+r_{G^{(m)}}\widehat{V}_{G^{(m)}}^{1}(t_{\alpha^{(m)}_{0}}^{r}) } \right|  \right]\\
&~~~~+\mathbb{E}\left[\left| \frac{L_{G^{(m)}}^{0}(t_{\alpha^{(m)}_{0}}^{r})}{L_{G^{(m)}}^{0}(t_{\alpha^{(m)}_{0}}^{r})+r_{G^{(m)}}\widehat{V}_{G^{(m)}}^{1}(t_{\alpha^{(m)}_{0}}^{r})} - \frac{\widehat{V}_{G^{(m)}}^{0}(t_{\alpha^{(m)}_{0}}^{r})}{\widehat{V}^{0}_{G^{(m)}}(t_{\alpha^{(m)}_{0}}^{r})+r_{G^{(m)}}\widehat{V}_{G^{(m)}}^{1}(t_{\alpha^{(m)}_{0}}^{r})} \right|   \right].
\end{align*}
For any fixed $\epsilon \in (0, \alpha_{0}^{(m)})$, denote the event $\{ t^{r}_{\alpha^{(m)}_{0}} \leq \tau^{r}_{\alpha^{(m)}_{0}-\epsilon}\}$ by $\mathcal{C}$.
Since $T_g^{(m)}$'s are continuous variables and $\text{Var}(T_{g}^{(m)})$ is uniformly upper bounded and lower bounded away from 0, according to the proof of Theorem 6, for enough large $G^{(m)}$, we have $\mathbb{P}(\mathcal{C}) \geq 1- \epsilon$. Thus, we have $\mathbb{P}(\mathcal{C}^{c}) \leq \epsilon$.
Therefore, 
\begin{align*}
&~~~~~\mathbb{E}\left[\left| \frac{L_{G^{(m)}}^{0}(t_{\alpha^{(m)}_{0}}^{r})}{L_{G^{(m)}}^{0}(t_{\alpha^{(m)}_{0}}^{r})+r_{G^{(m)}}\widehat{V}_{G^{(m)}}^{1}(t_{\alpha^{(m)}_{0}}^{r})} - \frac{\widehat{V}_{G^{(m)}}^{0}(t_{\alpha^{(m)}_{0}}^{r})}{\widehat{V}^{0}_{G^{(m)}}(t_{\alpha^{(m)}_{0}}^{r})+r_{G^{(m)}}\widehat{V}_{G^{(m)}}^{1}(t_{\alpha^{(m)}_{0}}^{r})} \right|   \right] \\
&\leq \mathbb{E}\left[\left| \frac{L_{G^{(m)}}^{0}(t_{\alpha^{(m)}_{0}}^{r})}{L_{G^{(m)}}^{0}(t_{\alpha^{(m)}_{0}}^{r})+r_{G^{(m)}}\widehat{V}_{G^{(m)}}^{1}(t_{\alpha^{(m)}_{0}}^{r})} - \frac{\widehat{V}_{G^{(m)}}^{0}(t_{\alpha^{(m)}_{0}}^{r})}{\widehat{V}^{0}_{G^{(m)}}(t_{\alpha^{(m)}_{0}}^{r})+r_{G^{(m)}}\widehat{V}_{G^{(m)}}^{1}(t_{\alpha^{(m)}_{0}}^{r})} \right| \Bigg{|}\mathcal{C}\right]\mathbb{P}(\mathcal{C})+2\mathbb{P}(\mathcal{C}^{c})\\
&\leq \mathbb{E}\left[\sup\limits_{0<t<\tau_{\alpha_0^{(m)}-\epsilon}}\left|\frac{L_{G^{(m)}}^{0}(t_{\alpha^{(m)}_{0}}^{r})}{L_{G^{(m)}}^{0}(t_{\alpha^{(m)}_{0}}^{r})+r_{G^{(m)}}\widehat{V}_{G^{(m)}}^{1}(t_{\alpha^{(m)}_{0}}^{r})} - \frac{\widehat{V}_{G^{(m)}}^{0}(t_{\alpha^{(m)}_{0}}^{r})}{\widehat{V}^{0}_{G^{(m)}}(t_{\alpha^{(m)}_{0}}^{r})+r_{G^{(m)}}\widehat{V}_{G^{(m)}}^{1}(t_{\alpha^{(m)}_{0}}^{r})} \right|\right]+2\epsilon.
\end{align*}
According to Lemma~\ref{lemma1} and the dominated convergence theorem, we have
\[
\limsup_{G^{(m)} \to \infty} \mathbb{E}\left[\sup_{0 < t \leq \tau_{\alpha_{0}^{(m)}-\epsilon}} \left| \frac{L_{G^{(m)}}^{0}(t)}{L_{G^{(m)}}^{0}(t)+r_{G^{(m)}}\widehat{V}_{G^{(m)}}^{1}(t)} - \frac{\widehat{V}_{G^{(m)}}^{0}(t)}{\left(\widehat{V}^{0}_{G^{(m)}}(t)+r_{G^{(m)}}\widehat{V}_{G^{(m)}}^{1}(t)\right)} \right|   \right] =0.
\]
By the assumption that $\sum_{g=1}^{G^{(m)}}\mathbb{E}[e_{gr}^{(m)}|t^{r}_{\alpha_0^{(m)}} >\tau_{\alpha_0^{(m)}}]/G^{(m)}$ has a uniform upper bound, we have
\begin{align*}
\mathbb{E}\left[\left| \frac{\widehat{L}_{G^{(m)}}^{0}(t_{\alpha_{0}^{(m)}}^{r})}{\widehat{V}_{G^{(m)}}^{0}(t_{\alpha_{0}^{(m)}}^{r})+r_{G^{(m)}}\widehat{V}_{G^{(m)}}^{1}(t_{\alpha_{0}^{(m)}}^{r})} - \frac{L_{G^{(m)}}^{0}(t_{\alpha_{0}^{(m)}}^{r})}{L^{0}_{G^{(m)}}(t_{\alpha_{0}^{(m)}}^{r})+r_{G^{(m)}}\widehat{V}_{G^{(m)}}^{1}(t_{\alpha^{(m)}_{0}}^{r}) }\right| \cdot 1\{\mathcal{C}\} \right]\leq c\epsilon,
\end{align*}
where $c>0$ is an absolute constant.
Similarly, we have
\begin{align*}
&~~~~~\mathbb{E}\left[\left| \frac{\widehat{L}_{G^{(m)}}^{0}(t_{\alpha_{0}^{(m)}}^{r})}{\widehat{V}_{G^{(m)}}^{0}(t_{\alpha_{0}^{(m)}}^{r})+r_{G^{(m)}}\widehat{V}_{G^{(m)}}^{1}(t_{\alpha_{0}^{(m)}}^{r})} - \frac{L_{G^{(m)}}^{0}(t_{\alpha_{0}^{(m)}}^{r})}{L^{0}_{G^{(m)}}(t_{\alpha_{0}^{(m)}}^{r})+r_{G^{(m)}}\widehat{V}_{G^{(m)}}^{1}(t_{\alpha^{(m)}_{0}}^{r}) }\right| \right] \\
&\leq \mathbb{E}\left[\sup\limits_{0<t\leq \tau_{\alpha_0^{(m)}-\epsilon}}\left| \frac{L_{G^{(m)}}^{0}(t_{\alpha^{(m)}_{0}}^{r})}{L_{G^{(m)}}^{0}(t_{\alpha^{(m)}_{0}}^{r})+r_{G^{(m)}}\widehat{V}_{G^{(m)}}^{1}(t_{\alpha^{(m)}_{0}}^{r})} - \frac{\widehat{V}_{G^{(m)}}^{0}(t_{\alpha^{(m)}_{0}}^{r})}{\widehat{V}^{0}_{G^{(m)}}(t_{\alpha^{(m)}_{0}}^{r})+r_{G^{(m)}}\widehat{V}_{G^{(m)}}^{1}(t_{\alpha^{(m)}_{0}}^{r})} \right| \right]+c\epsilon.
\end{align*}
Since $\sum_{g\in\mathcal{H}_0^{(m)}}e_g^{(m)}/G^{(m)}$ is uniformly integrable, with Lemma~\ref{lemma1}, we have
\[
\limsup_{G^{(m)} \to \infty} \mathbb{E}\left[ \sup_{0 < t \leq \tau_{\alpha_{0}^{(m)}-\epsilon}}\left| \frac{\widehat{L}_{G^{(m)}}^{0}(t)}{\widehat{V}_{G^{(m)}}^{0}(t)+r_{G^{(m)}}\widehat{V}_{G^{(m)}}^{1}(t)} - \frac{L_{G^{(m)}}^{0}(t)}{L^{0}_{G^{(m)}}(t)+r_{G^{(m)}}\widehat{V}_{G^{(m)}}^{1}(t)} \right|  \right]  = 0.
\]
Putting these pieces together, we have
\begin{align*}
		\limsup_{G^{(m)} \to \infty} \frac{1}{G^{(m)}}\sum\limits_{g\in \mathcal{H}_{0}^{^{(m)}}}\mathbb{E}\left[e^{(m)}_{gr}\right] \leq(2+c)\epsilon.
\end{align*}
According to the arbitrariness of $\epsilon$, we have
\[
\limsup\limits_{G^{(m)} \to \infty}\frac{1}{G^{(m)}}\sum\limits_{g\in \mathcal{H}_{0}^{^{(m)}}}\mathbb{E}\left[e^{(m)}_{gr}\right] \leq 1.
\]
Thus, the set $\{e_{gr}^{(m)}\}_{g\in[G^{(m)}]}$ is a set of asymptotic relaxed e-values. We then have
\begin{align}
	\begin{split}
		\limsup\limits_{G^{(m)} \to \infty}\frac{1}{G^{(m)}}\sum\limits_{g\in \mathcal{H}_{0}^{(m)}}\mathbb{E}\left[\overline{e}^{(m)}_{g}\right] 
		&=\limsup\limits_{G^{(m)} \to \infty}\frac{1}{G^{(m)}}\sum\limits_{g\in \mathcal{H}_{0}^{(m)}}\mathbb{E}\left[\frac{1}{R}\sum\limits_{r=1}^{R}e_{gr}^{(m)}\right]\\
		&=\frac{1}{R}\sum\limits_{r=1}^{R}\left(\limsup_{G^{(m)} \to \infty}\frac{1}{G^{(m)}}\sum\limits_{g \in \mathcal{H} _{0}^{(m)}}\mathbb{E}[e^{(m)}_{gr}]\right)\\
		& \leq 1.		
	\end{split}
\end{align}
Since the group size $|\mathcal{A}_{g}^{(m)}|$ is uniformly bounded, we have $G^{(m)} \to \infty$ as $N\to \infty$.
According to Lemma 1, we have $\limsup_{N\to \infty} \text{FDR}^{(m)} \leq \alpha^{(m)}$.

\section{Discussion on stabilization for multiple resolutions} 
\label{other_derandomized}

\subsection{A detailed comparison with derandomization for single resolution}
E-values averaging has been applied to derandomized knockoff filter~\cite{ren2022derandomized} to achieve FDR control under a single resolution setting. However, this stabilization typically entails a loss of power, as demonstrated through simulations and discussions in~\cite{lee2024boosting,ren2022derandomized}. Intuitively, this reduction arises because certain important features with “fuzzy” or borderline signals are inconsistently selected across realizations. When a feature is not selected, its corresponding e-value is assigned zero, and the subsequent averaging procedure yields substantially a attenuated e-value. Consequently, these features fail to pass the e-BH selection threshold, resulting in a reduced detection power.

For multiple resolutions, the impact of stabilization is more complex. Consider the proposed SFEFP. On one hand, looking at each resolution separately, the generalized e-values of ``fuzzy'' important (groups) features can be small. This may result in a reduced power. On the other hand, the final averaged generalized e-values are no longer one-bit, yielding a better ranking information. This naturally avoid the one-bit result of FEFP (Theorem 3) and can bring power enhancement. Therefore, the impact of stabilization on the detection power can be quite complex. Anyway, the empirical studies in our main text demonstrate that the stabilization always brings a enhanced power, making stabilization a cost free choice.

Stabilization for multiple resolutions also differs from the single resolution setting in terms of parameter selection and stability guaranty.
Specifically, for single resolution, there is no need to do derandomization if the original detection procedure is stable, and \cite{ren2022derandomized} proved that $\alpha_0^{(1)}=\alpha^{(1)}$ is the optimal choice to recover detection power.
In contrast, for multiple resolutions, we still need to choose $\alpha^{(m)}_0<\alpha^{(m)}$ if $\mathcal{G}^{(m)}$ is stable, as implied by Theorem 3 and discussed in Subsection 3.3. Furthermore, to obtain generalized e-values with better ranking information, SFEFP further takes the fusion decision, i.e., running different determined procedure $\mathcal{G}^{(m)}$ and averaging their one-bit generalized e-values.
Finally, in terms of stability, Theorem 5 differs with the previous result since the stability of each resolution needs to be handled.

\subsection{Alternative method and comparison}
Ren and Barber~\cite{ren2022derandomized} achieve derandomization by running knockoff procedure multiple times and averaging e-values obtained from the output of each running. 
A nature question arises:
\begin{center}
\textit{Is it possible to establish a connection between the output of multi-layer filters (e.g., MKF~\cite{katsevich2019multilayer} or the proposed output) and generalized e-values, and follows \cite{ren2022derandomized} to do derandomization?}
\end{center}
In this subsection, we will show that such a treatment is feasible but not competitive.

Denote the detection procedure with multilayer FDR control at the $r$-th replication by $\mathcal{G}_{r}$. Here, $\mathcal{G}_{r}$ can be the MKF or the FEFP, and in the latter case we write $\mathcal{G}_{r}=(\mathcal{G}_{r}^{(1)}, \dots, \mathcal{G}_{r}^{(M)})$. 
Let $\mathcal{G}_{r}(\gamma^{(1)},\dots,\gamma^{(M)})$ denote the rejection set of $\mathcal{G}_{r}$ with the original target FDR levels $(\gamma^{(1)},\dots,\gamma^{(M)})$. 
The set of rejected groups at layer $m$ of $\mathcal{G}_{r}(\gamma^{(1)},\dots,\gamma^{(M)})$ is denoted as $\mathcal{G}_{r}^{[m]}(\gamma^{(m)})$, which needs to be distinguished from the rejection of running $\mathcal{G}^{(m)}_{r}$ separately.
Let $\widehat{V}_{\mathcal{G}_{r}^{[m]}(\gamma^{(m)})}$ represent the estimated number of false discoveries at layer $m$ of the multilayer filtering procedure. Then we can construct generalized e-values by
\begin{equation}\label{mkf-evalue}
e^{(m)}_{gr}=G^{(m)} \cdot \frac{\mathbb{I}\left \{g \in \mathcal{G}^{[m]}_{r}(\gamma^{(m)}) \right \}} {\widehat{V}_{\mathcal{G}^{[m]}_{r}(\gamma^{(m)})}\bigvee \gamma^{(m)}}.
\end{equation}
Then for each we take a average that $\overline{e}^{(m)}_{g}=\sum_{r=1}^{R}e_{gr}^{(m)}/R$ for $g\in[G^{(m)}],~m\in[M]$.
Finally, the set of selected features $\widehat{\mathcal{S}}_{\text{derand}}$ is obtained by applying the generalized e-filter at the target levels $(\alpha^{(1)}, \dots, \alpha^{(M)})$.

The following theorem establish a connection between the output of multilayer filtering procures and generalized e-values. That is, the $r$-th selected set of the MKF or the FEFP is equivalent to applying the generalized e-filter to $\{e_{gr}^{(m)}\}_{m\in[M]}$ with identical FDR levels. 

\begin{theorem}\label{theorem4}
Consider $\mathcal{G}_{r}$ as the MKF or the FEFP.
Let $\mathcal{S}_{r}(\gamma^{(1)},\dots,\gamma^{(M)})$ be the set of selected features for the generalized e-filter applied to $\{e_{gr}^{(m)}\}_{m\in[M]}$~\eqref{mkf-evalue} with the same preset level $(\gamma^{(1)},\dots,\gamma^{(M)})$. Then $\mathcal{G}_{r}(\gamma^{(1)},\dots,\gamma^{(M)}) = \mathcal{S}_{r}(\gamma^{(1)},\dots,\gamma^{(M)})$.
\end{theorem}

\begin{proof}
Denote the final threshold of the generalized e-filter $\mathcal{S}_{r}(\gamma^{(1)},\dots,\gamma^{(M)})$ by $(\widehat{T}^{(1)}, \dots, \widehat{T}^{(M)})$. On one hand, for $j \notin \mathcal{G}_{r}(\gamma^{(1)},\dots,\gamma^{(M)})$, its generalized e-value equals to zero by definition. This leads to $\mathcal{G}_{r} \supseteq  \mathcal{S}_{r}$. 
	On the other hand, we have
	\[
	\mathcal{S}_{\text{gefilter}}(T^{(1)},\dots,T^{(M)}) \subseteq \mathcal{S}_{r}(\gamma^{(1)},\dots,\gamma^{(M)}),
	\]
	for any $(T^{(1)},\dots,T^{(M)}) \in \mathcal{T}(\gamma^{(1)},\dots,\gamma^{(M)})$, where 
	\[
	\mathcal{S}_{\text{gefilter}}(T^{(1)},\dots,T^{(M)}) = \{j: \text{for all}~m \in [M], e_{h(m,j)r}^{(m)} \geq T^{(m)} \}
	\]
	with $e_{gr}^{(m)}$ given by (\ref{mkf-evalue}).
	Thus, it suffices to find thresholds $(T^{(1)}, \dots, T^{(M)}) \in \mathcal{T}(\gamma^{(1)},\dots,\gamma^{(M)})$ satisfying $\mathcal{S}_{\text{gefilter}}(T^{(1)},\dots,T^{(M)}) \supseteq \mathcal{G}_{r}(\gamma^{(1)},\dots,\gamma^{(M)})$.
    We prove the existence of such $(T^{(1)},\dots,T^{(M)})$ as follows.
	Let 
	\[
	T^{(m)} = \frac{G^{(m)}}{\widehat{V}_{\mathcal{G}_{r}^{[m]}(\gamma^{(m)})} \bigvee \gamma^{(m)}}.
	\]
	For $j \in \mathcal{G}_{r}$, we have $e_{g(m, j)}^{(m)}=T^{(m)}$, $m\in[M]$.
	If $(T^{(1)},\dots,T^{(M)}) \in \mathcal{T}(\gamma^{(1)},\dots,\gamma^{(M)})$, then $X_{j}$ is selected by the generalized e-filter, which completes the proof. Therefore, it suffices to prove $\widehat{\text{FDP}}^{(m)} (T^{(1)},\ldots,T^{(M)}) \leq \gamma^{(m)}$ for all $m\in [M]$. This fact holds since
	\begin{align*}
		\begin{split}
			\widehat{\text{FDP}}^{(m)} (T^{(1)},\ldots,T^{(M)}) 
			& = \frac{G^{(m)}} { T^{(m)} | {\mathcal{S}^{(m)}_{\text{gefilter}}}(T^{(1)},\ldots,T^{(M)}) | }  \\
			& = \frac{ \widehat{V}_{\mathcal{G}_{r}^{[m]}(\gamma^{(m)})} \bigvee \gamma^{(m)}}{|\mathcal{S}^{(m)}_{\text{gefilter}}(T^{(1)},\dots,T^{(M)})|} \\
			&= \frac{ \widehat{V}_{\mathcal{G}_{r}^{[m]}(\gamma^{(m)})}\bigvee \gamma^{(m)}}{{\mathcal{G}_{r}^{[m]}(\gamma^{(m)})}} \cdot \frac{{\mathcal{G}_{r}^{[m]}(\gamma^{(m)})}}{|\mathcal{S}^{(m)}_{\text{gefilter}}(T^{(1)},\dots,T^{(M)})|} \\
			& \le \gamma^{(m)} \frac{{\mathcal{G}_{r}^{[m]}(\gamma^{(m)})}}{|\mathcal{S}^{(m)}_{\text{gefilter}}(T^{(1)},\dots,T^{(M)})|}\\
			&\leq \gamma^{(m)},
		\end{split}
	\end{align*}
	where the last inequality is guaranteed by $\mathcal{S}_{\text{gefilter}}(T^{(1)},\dots,T^{(M)})  \supseteq \mathcal{G}_{r}(\gamma^{(1)},\dots,\gamma^{(M)})$. 
	This concludes the proof.
\end{proof}

The following theorem demonstrates that this derandomized procedure also controls FDR at multiple resolutions.
\begin{theorem}\label{theorem5}
Consider $\mathcal{G}_{r}$ as the MKF($c$)+ or the FEFP. 
Consider any original preset FDR levels $(\gamma^{(1)},\dots,\gamma^{(M)}) \in (0,1)^{M}$, target FDR levels $(\alpha^{(1)}, \dots,\alpha^{(M)}) \in (0,1)^{M}$, and number of realizations $R \geq1$. For MKF($c$)+, the $\widehat{\mathcal{S}}_{\text{derand}}$ satisfies $\text{FDR}^{(m)} \leq c_{\text{kn}}\alpha^{(m)}/c$ for $m \in [M]$, where $c_{\text{kn}}= 1.96$.
When $\mathcal{G}_{r}$ refers to FEFP, if $\mathcal{G}^{(m)}_{r} \in \mathcal{K}_{\text{finite}}$, then $\widehat{\mathcal{S}}_{\text{derand}}$ satisfies $\text{FDR}^{(m)} \leq \alpha^{(m)}$ for finite samples. If $\mathcal{G}^{(m)}_{r} \in \mathcal{K}_{\text{asy}}$ and the group size $|\mathcal{A}_{g}^{(m)}|$ is uniformly bounded, then $\text{FDR}^{(m)} \leq \alpha^{(m)}$ when $(n, N)\to \infty$ at a proper rate.
\end{theorem}

\begin{proof}
	Denote the final threshold vector of $\mathcal{G}_{r}$ as $(\widetilde{t}^{(1)},\dots,\widetilde{t}^{(M)})$.
	We first prove for the case where $\mathcal{G}_{r}$ refers to $\text{MKF}(c_{\text{kn}})+$.
	For each layer $m$, denote the group knockoff statistics as $\{T_{gr}^{(m)}\}_{g\in [G^{(m)}]}$. 
	According to~\cite{katsevich2019multilayer}, we have 
	\[
	\widehat{V}_{\mathcal{G}_{r}^{[m]}} = c \cdot \left(1 + \left| \left\{g: T_{gr}^{(m)} \leq -\widetilde{t}^{(m)} \right\} \right| \right),
	\]
	where $c >0$ is a constant. For each replication $r\in[R]$, we have
	\begin{align*}
		\begin{split}
			\sum\limits_{g \in \mathcal{H}_{0}^{(m)}}\mathbb{E}[e_{gr}^{(m)}] 
			&=G^{(m)}\cdot \mathbb{E}\left[\frac{V_{\mathcal{G}^{[m]}_{r}(\gamma^{(m)})}}{\widehat{V}_{\mathcal{G}^{[m]}_{r}(\gamma^{(m)})} \bigvee \gamma^{(m)}}\right]\\
			&\leq G^{(m)}\cdot \mathbb{E}\left[\frac{ \left| \left\{g: T_{gr}^{(m)} \geq \widetilde{t}^{(m)} \right\} \bigcap \mathcal{H}_{0}^{(m)} \right|}{c \cdot \left(1 + \left| \left\{g: T_{gr}^{(m)} \leq -\widetilde{t}^{(m)} \right\} \right| \right)}\right] \\
			&\leq G^{(m)}\cdot \frac{1}{c} \cdot \mathbb{E}\left[\sup\limits_{t^{(m)}} \frac{ \left| \left\{g: T_{gr}^{(m)} \geq t^{(m)} \right\} \bigcap \mathcal{H}_{0}^{(m)} \right|}{ 1 + \left| \left\{g: T_{gr}^{(m)} \leq -t^{(m)} \right\} \bigcap \mathcal{H}^{(m)}_{0} \right| }\right]\\
			&\leq G^{(m)} \cdot \frac{c_{\text{kn}}}{c},
		\end{split}
	\end{align*} 
	where $c_{\text{kn}} = 1.96$. The proof of the last inequality see \cite{katsevich2019multilayer}. Then, we have 
	\[
	\sum_{g\in\mathcal{H}_{0}^{(m)}}\mathbb{E}[\overline{e}_{g}^{(m)}] \leq c_{\text{kn}}G^{(m)}/c.
	\]
	According to the proof of Lemma 1, we have $\text{FDR}^{(m)} \leq c_{\text{kn}}\alpha^{(m)}/c$.

	Now we prove for the case for FEFP, i.e, $\mathcal{G}_r=(\mathcal{G}_r^{(1)},\dots,\mathcal{G}_r^{(M)})$. We only prove for the case that $\mathcal{G}^{(m)}_{r} \in \mathcal{K}_{\text{finite}}$. 
	When $\mathcal{G}_{r}(\gamma^{(1)},\dots,\gamma^{(M)}) = \emptyset$, we have $e_{gr}^{(m)} = 0$.
	When $\mathcal{G}_{r}(\gamma^{(1)},\dots,\gamma^{(M)}) \neq \emptyset$, we have $\widetilde{t}^{(m)} \in \left\{G^{(m)}\slash (\alpha^{(m)}k): k\in[G^{(m)}]\right\} $.
\begin{align*}
\sum\limits_{g\in\mathcal{H}_{0}^{(m)}}\mathbb{E}[e_{gr}^{(m)}]
&= \sum\limits_{g\in\mathcal{H}_{0}^{(m)}}\mathbb{E}\left[\frac{G^{(m)}1\{g\in \mathcal{G}_r^{[m]}(\gamma^{(1)},\dots,\gamma^{(M)})\}}{G^{(m)}/\widetilde{t}^{(m)}}\right]\\
&= \sum\limits_{g\in\mathcal{H}_{0}^{(m)}}\mathbb{E}[\widetilde{t}^{(m)}1\{g\in \mathcal{G}_r^{[m]}(\gamma^{(1)},\dots,\gamma^{(M)})\}]\\
&\leq \sum\limits_{g\in\mathcal{H}_{0}^{(m)}}\mathbb{E}[G^{(m)}1\{g\in \mathcal{G}_r^{[m]}(\gamma^{(1)},\dots,\gamma^{(M)})\}  /\widehat{V}_{\mathcal{G}^{(m)}_r}]\\
&\leq \sum\limits_{g\in\mathcal{H}_{0}^{(m)}}\mathbb{E}[G^{(m)}1\{g\in \mathcal{G}_r^{(m)}\}  /\widehat{V}_{\mathcal{G}^{(m)}_r}]\\
&= G^{(m)} \mathbb{E}[V_{\mathcal{G}^{(m)}}/\widehat{V}_{\mathcal{G}^{(m)}}]\\
&\leq G^{(m)}.
\end{align*}
The remaining proof follows identically to that of Theorem~4.
\end{proof}

Compared with SFEFP, this alternative method is less competitive in several aspects. When $\mathcal{G}_r$ refers to MKF($c$)+, this method still suffers a reduced power due to the constant $c_{\text{kn}}$. When $\mathcal{G}_r$ refers to FEFP, this alternative requires adjusting too many parameters, including $(\gamma^{(1)},\dots,\gamma^{(M)})$ and the parameter within FEFP $(\alpha_0^{(1)},\dots,\alpha_0^{(M)})$. Furthermore, a major advantage of SFEFP is avoiding the zero-power-dilemma of FEFP by obtaining non one-bit generalized e-values. In contrast, since the alternative method is building on the result of FEFP, $e_{gr}^{(m)}$ can be zero for each $g\in[G^{(m)}]$ for certain $r\in[R]$. This leads to a small $\overline{e}_g^{(m)}$ and further results in a reduced power.

\section{Additional Simulation Results}
In this section, wo provide additional simulation results to further demonstrate the claims proposed in the manuscript.
We compare the performance of the compound e-values with ours.
%Then we conduct simulations to show the when we prefer eDS+gKF-filter.

%\subsection{Comparison with compound e-values}
\begin{figure}[htbp]
	\centering
    \includegraphics[width=45em]{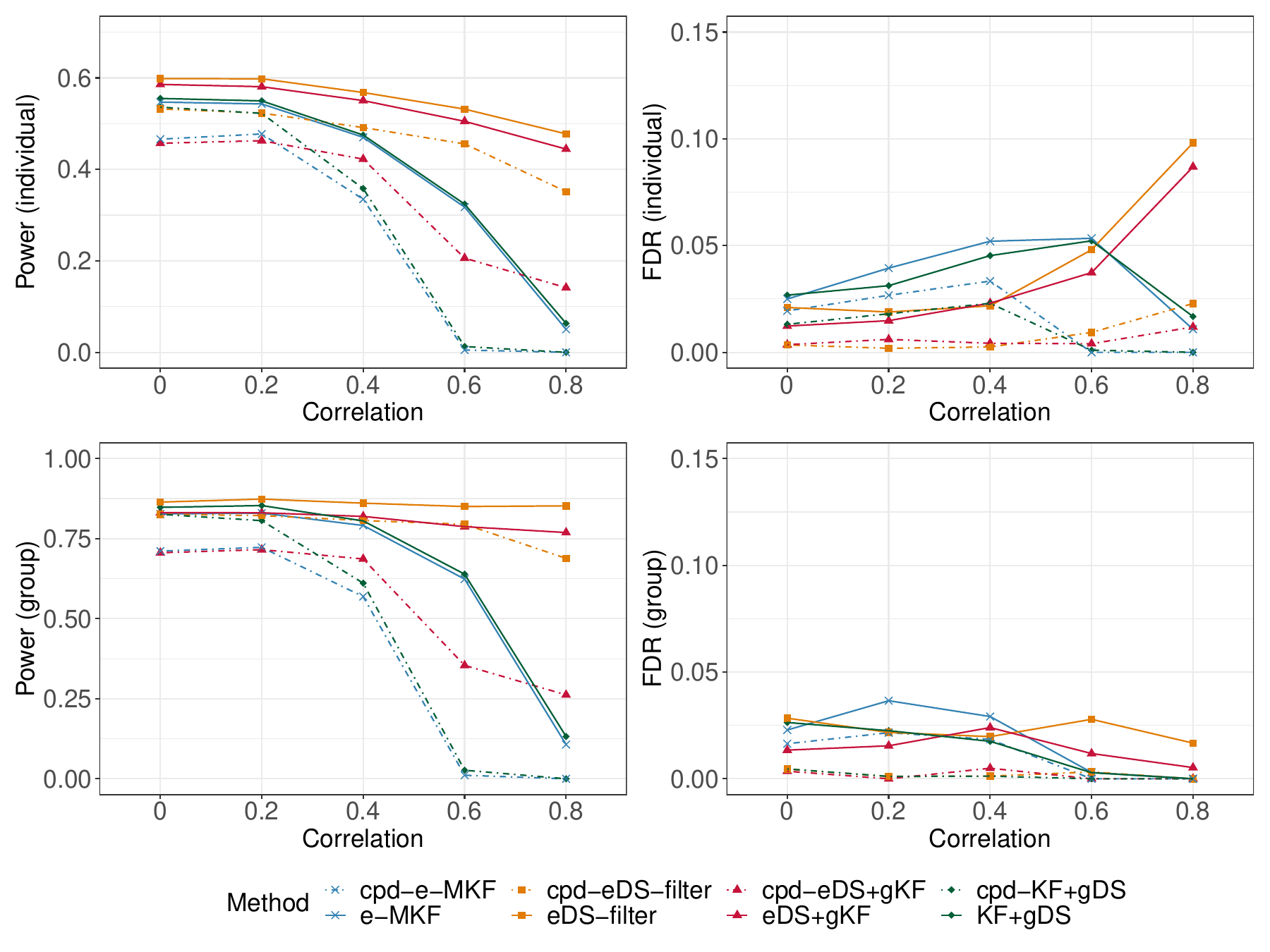}
	\caption{Comparison with compound e-values under different correlations with $\delta=3$.}
	\label{compound_signal3}
\end{figure}

\begin{figure}[htbp]
	\centering
    \includegraphics[width=45em]{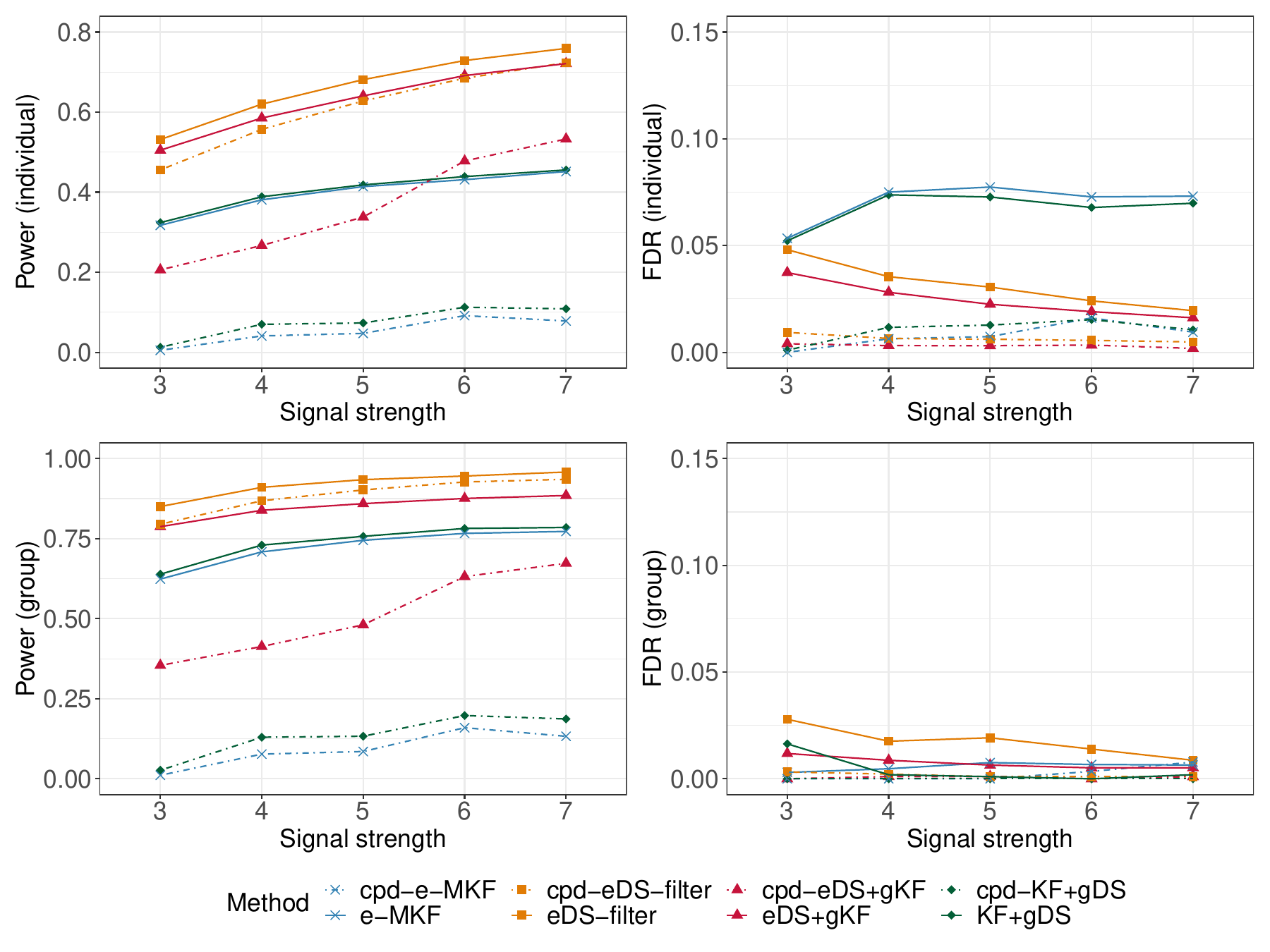}
	\caption{Comparison with compound e-values under different correlations with $\rho=0.6$.}
	\label{compound_rho6}
\end{figure}
Consider replacing the proposed generalized e-values to compound e-values, where
\begin{equation*}\label{compound_evalue2}
	e_{g}^{(m)'}=  G^{(m)} \cdot \frac{\mathbb{I}\left \{g \in \mathcal{G}^{(m)}(\alpha_{0}^{(m)})\right \} } {\alpha_{0}^{(m)}\left(\left|\mathcal{G}^{(m)}(\alpha_{0}^{(m)})\right|\bigvee 1\right)}.
\end{equation*}
With compound e-values, we can develop corresponding versions for e-MKF, eDS-filter, eDS+gKF, and KF+gDS. We denote these methods by cpd-e-MKF, cpd-eDS-filter, cpd-eDS+gKF, and cpd-KF+gDS, respectively.
The simulation settings are consistent with the setting in the main text for high-dimensional settings.
That is, consider $n=600$, $N=800$, $G=80$, $|\mathcal{H}_1|=60$, and $K=20$. We set $\alpha^{(m)}=0.2$ and $R^{(m)}=50$ for $m=1,2$. All results are averaged over 50 independent trials.

Figure~\ref{compound_signal3} and Figure~\ref{compound_rho6} present the performance of methods under different correlations and signal strength, respectively. 
We observe that our proposed constructions consistently achieve significantly higher power than compound e-values at both resolutions. The power gaps increase when features become high correlated. 
% \subsection{Advantages of eDS+gKF-filter}
% We conduct simulations to demonstrate the advantage of eDS+gKF-filter in the settings with highly correlated features and a low group signal strength.

\section{Unified e-filter}
We develop an analog of \cite{Ramdas2019A} for generalized e-values, called unified e-filter, for incorporating prior knowledge into e-filter to broaden the application scope and improve detection power.
Before presenting the method, we introduce necessary notations as follows, to clarify the additional knowledge considered by unified e-filter.
\begin{itemize}
\item \textit{Overlapping groups}. The unified e-filter allows overlapping partitions, that is, it allows $\mathcal{A}_{g}^{(m)}\cap \mathcal{A}_{h}^{(m)}\ne \emptyset$ for $g\ne h$, $m\in[M]$. Let $g^{(m)}(i)=\{g\in[G^{(m)}]:i\in A_{g}^{(m)}\}$ denote the index set of groups in the $m$-th resolution to which $H_i$ belongs.
\item \textit{Incomplete partitions}. 
The unified e-filter allows incomplete partitions, say, $\cup_g\mathcal{A}_g^{(m)}\subseteq[N]$. 
Let $L^{(m)}=[N] \setminus \bigcup_{g} A_{g}^{(m)}$ denote the leftover of $m$th partition, i.e., the index set of elementary hypotheses that do not belong to any group in the $m$th partition.
\item \textit{Penalties and priors.} 
Knowledge about the scientific importance of hypothesis $H_g^{(m)}$ and the non-null beliefs are considered.
Large penalties $u_{g}^{(m)}$ reflects that hypothesis $H_g^{(m)}$ is more scientifically important, and a large prior weight $v_g^{(m)}$ indicates that the hypothesis $H_g^{(m)}$ is more likely to be non-null. Penalties $\{u_g^{(m)}\}$ and priors $\{v_g^{(m)}\}$ are normalized such that $\sum_{g=1}^{G^{(m)}}u_g^{(m)}v_{g}^{(m)}=G^{(m)}$.
\item \textit{Null-proportion adaptivity.} Fix a user-defined constant $ \lambda^{(m)}\in(0,1)$, the weighted null proportion estimator is 
\begin{equation}\label{adaptive}
\widehat{\pi}^{(m)} :=
\frac{ \, \lvert u^{(m)} \cdot v^{(m)} \rvert_{\infty}
       + \sum_{g} u^{(m)}_{g} v^{(m)}_{g} \mathbf{1}\!\left\{e^{(m)}_{g} < 1/\lambda^{(m)} \right\} }
     { G^{(m)} \bigl( 1 - \lambda^{(m)} \bigr) } .
\end{equation}
If e-values are known to be independent at $m$th layer, we can leverage $\widehat{\pi}^{(m)}$ to enhance power.
\end{itemize}

Without causing ambiguity, let
\[
\frac{a}{ b} :=
\begin{cases}
0, & \text{if } a=0, \\
\dfrac{a}{b}, & \text{if } a\ne 0,\, b\ne 0, \\
\text{undefined}, & \text{if } a\ne 0,\, b=0.
\end{cases}
\]
The unified e-filter aims to (1) incorporate the above knowledge to provide a rejection set with internal consistency, and (2) satisfy the multilayer penalty-weighted FDR control:
\[
\text{FDR}_{u}^{(m)}=\mathbb{E}[\text{FDP}_u^{(m)}] \leq \alpha^{(m)}~\text{simultaneously for } m = 1,\dots, M,
\]
where
\[
\text{FDP}_{u}^{(m)}
=
  \frac{
    \sum_{g \in \mathcal{H}_{0}^{(m)}} u^{(m)}_{g} \,\mathbf{1}\!\left\{ g \in \mathcal{S}^{(m)} \right\}
  }{
    \sum_{g \in [G^{(m)}]} u^{(m)}_{g} \,\mathbf{1}\!\left\{ g \in \mathcal{S}^{(m)} \right\}.
  }
\]

\subsection{Unified e-filter framework}
To ensure the internal consistency of discoveries, $H_i$ is rejected if and only if for each $m\in[M]$, either there is at least one rejected group containing $i$, or $i\in L^{(m)}$.
Therefore, we define the candidate selection set of individuals as
\begin{align*}
\mathcal{S}(\vec{k})=\Big\{ i : \forall m,\; &\text{either } i \in L^{(m)}, 
 \text{or } \exists g \in g^{(m)}(i),\;\\&
e_g^{(m)} \ge \max \Big\{
\frac{1\{\widehat{k}^{(m)}\ne 0\}\widehat{\pi}^{(m)}G^{(m)}}{v_g^{(m)}\alpha^{(m)} \widehat{k}^{(m)}},\;
1\{\widehat{k}^{(m)}=0\}\cdot\infty,\;
\frac{1}{\lambda^{(m)}}
\Big\}
\Big\}
\end{align*}
for any $\vec{k}=(k^{(1)},\dots,k^{(M)})$ with $k^{(m)}\in [0,G^{(m)}]$ and $0/0=0$.
The index set of selected groups in layer $m$ with $\vec{k}$ is defined by
\begin{equation}\label{candidate set}
\begin{aligned}
\mathcal{S}^{(m)}(\vec{k})
= \Big\{ g \in [G^{(m)}] :\;
& A_g^{(m)} \cap \mathcal{S}(\vec{k}) \neq \emptyset~\text{and } \\
&  e_g^{(m)} \ge \max\Big\{
\frac{1\{\widehat{k}^{(m)}\ne0\}\widehat{\pi}^{(m)}G^{(m)}}{v_g^{(m)}\alpha^{(m)}\widehat{k}^{(m)}},1\{\widehat{k}^{(m)}=0\}\cdot \infty,
\frac{1}{\lambda^{(m)}} \Big\}
\Big\}.
\end{aligned}
\end{equation}
The set of admissible rejection number, $\mathcal{K}(\alpha^{(1)},\dots,\alpha^{(M)})$, is defined by 
\begin{equation}\label{kappa}
\mathcal{K}(\alpha^{(1)},\dots,\alpha^{(M)}) = \left\{ \vec{k} \in [0, G^{(1)}] \times \cdots \times [0,G^{(m)}] : \sum\limits_{g\in\mathcal{S}^{(m)}(\vec{k})}u_g^{(m)} \geq k^{(m)} \text{ for all } m \right\}.
\end{equation}
For each layer $m\in[M]$, the final rejection number $\widehat{k}^{(m)}$ is determined by 
\begin{equation}\label{threshold_unified}
\widehat{k}^{(m)} = \max \left\{ k^{(m)} : \left( k^{(1)}, \ldots, k^{(M)} \right) 
\in \mathcal{K}\big(\alpha^{(1)}, \ldots, \alpha^{(M)}\big) \right\}.
\end{equation}
The unified e-filter is summarized in Algorithm~\ref{unified e-filter}.

\begin{algorithm}[t]\label{unified e-filter}
	\caption{The unified e-filter}
	\KwIn {Partition $\{\mathcal{A}_{g}^{(m)}\}_{g\in[G^{(m)}]}$, $m \in [M]$; e-values $ \{e_{g}^{(m)}\}_{g\in[G^{(m)}]}$, $m\in [M]$; target FDR level vector $\{\alpha^{(m)}\}_{m\in[M]}$;
    penalty weights and prior weights $\{u_g^{(m)},v_g^{(m)}\}_{g\in [G^{(m)}]}$, $m\in[M]$; thresholds for adaptive null proportion estimation $\{\lambda^{(m)}\}_{m\in[M]}$.
    }
	\textbf{Initialize:} $k^{(m)}= G^{(m)}$ and $\widehat{\pi}^{(m)}$ by Equation \eqref{adaptive}, $m\in[M]$.
	
	\Repeat{\textrm{until} $k^{(1)}, \ldots, k^{(M)} $ \textrm{are all unchanged}}{
		\For{$m=1, \ldots, M$}{
			Compute candidate rejection set $\mathcal{S}^{(m)}(k^{(1)},\dots,k^{(M)})$ by Equation 
            \eqref{candidate set}.
			
			Update the threshold
			\[
			k^{(m)} \leftarrow \max \left\{k \in [0,G^{(m)}] : \sum_{g\in\mathcal{S}^{(m)}(k^{(1)},\dots,k^{(m-1)},k,k^{(m+1)},\dots,k^{(M)})} u_{g}^{(m)}\geq k \right\}.
			\]
	}}
	
	\KwOut{the final threshold $\widehat{k}^{(m)} = k^{(m)}$ for $m\in [M]$ and the rejected set $\mathcal{S}(\widehat{k}^{(1)}, \ldots, \widehat{k}^{(M)} )$. }
\end{algorithm}

\begin{proposition}\label{feasible threshold}
Let the vector $(\widehat{k}^{(1)},\dots,\widehat{k}^{(M)})$ defined by \eqref{threshold_unified} and the $\mathcal{K}(\alpha^{(1)},\dots,\alpha^{(M)})$ defined by \eqref{kappa}. We have $(\widehat{k}^{(1)},\dots,\widehat{k}^{(M)})\in \mathcal{K}(\alpha^{(1)},\dots,\alpha^{(M)})$.
\end{proposition}

The following theorem illustrates the multilayer FDR control of the unified e-filter.
\begin{theorem}\label{unifiede-fdr}
Let vector $(\widehat{k}^{(1)},\dots,\widehat{k}^{(M)})$ be defined by Equation~\eqref{threshold_unified}. For any partition $m \in[M]$, we have:
\begin{itemize}
\item[(a)] Suppose $\{e_g^{(m)}\}$ are independent from each other.
If $\{e_g^{(m)}\}$ are valid e-values, then employing adaptivity by defining $\widehat{\pi}^{(m)}$ as in Equation \eqref{adaptive} guarantees that $\mathrm{FDR}_u^{(m)}\leq \alpha^{(m)}$.
If $\{e_g^{(m)}\}$ are asymptotic e-values, then $\lim\sup_{n,G^{(m)}\to\infty}\mathrm{FDR}_u^{(m)}\leq \alpha^{(m)}$.
\item[(b)] If $\{e_g^{(m)}\}$ satisfy $\sum_{g\in\mathcal{H}_0^{(m)}}u_g^{(m)}v_g^{(m)}\mathbb{E}[e_g^{(m)}]\leq G^{(m)}$ (relaxed e-values with priors and penalties) and are arbitrarily dependent, without adaptivity we have that
\[
\mathrm{FDR}_u^{(m)}\leq \alpha^{(m)} \frac{\sum_{g \in \mathcal{H}_0^{(m)}} u_g^{(m)}v_g^{(m)}\mathbb{E}[e_g^{(m)}]}{G^{(m)}} \leq \alpha^{(m)}.
\]
If $\{e_g^{(m)}\}$ satisfy $\lim\sup_{n,N\to \infty}\sum_{g\in\mathcal{H}_0^{(m)}}u_g^{(m)}v_g^{(m)}\mathbb{E}[e_g^{(m)}]/G^{(m)}\leq 1$, then  
\[
\lim\sup_{n,G^{(m)}\to\infty}\mathrm{FDR}_u^{(m)}\leq \alpha^{(m)}. 
\]
\end{itemize}
\end{theorem}

\subsection{Proofs for theoretical properties}
\begin{proof}[Proof of Proposition~\ref{feasible threshold}]
For any fixed $m\in[M]$, according to the definition of $\widehat{k}^{(m)}$, there exists $k_{s}^{(1)},\ldots,k_{s}^{(m-1)},k_{s}^{(m+1)},\ldots,k_{s}^{(M)} \in  \left[0,G^{(m)} \right]$ such that 
\[
(k_{s}^{(1)},\ldots,k_{s}^{(m-1)},\widehat{k}^{(m)},k_{s}^{(m+1)},\ldots,k_{s}^{(M)}) \in \mathcal{K} (\alpha^{(1)}, \ldots, \alpha^{(M)}).
\]
We have 
\[
\mathcal{S}(k_{s}^{(1)},\ldots,k_{s}^{(m-1)},\widehat{k}^{(m)},k_{s}^{(m+1)},\ldots,k_{s}^{(M)}) \subseteq \mathcal{S}(\widehat{k}^{(1)},\ldots,\widehat{k}^{(m-1)},\widehat{k}^{(m)},\widehat{k}^{(m+1)},\ldots,\widehat{k}^{(M)}),
\]
because $\widehat{k}^{(m^{'})} \geq k_{s}^{(m^{'})} (m^{'} \ne m)$ and $ \mathcal{S}(k^{(1)},\ldots,,k^{(M)}) $ is a increasing with $(k^{(1)},\ldots,,k^{(M)})$. 
Since the set $\mathcal{S}^{(m)}$ is given by $\mathcal{S}^{(m)} = \left\{  g \in [G^{(m)}] : \mathcal{S} \cap A_{g}^{(m)} \ne \emptyset \right\}$,
we have
\[
\mathcal{S}^{(m)}(k_{s}^{(1)},\ldots,k_{s}^{(m)},\widehat{k}^{(m)},k_{s}^{(m+1)},\ldots,k_{s}^{(M)}) \subseteq \mathcal{S}^{(m)}(\widehat{k}^{(1)},\ldots,\widehat{k}^{(m-1)},\widehat{k}^{(m)},\widehat{k}^{(m+1)},\ldots,\widehat{k}^{(M)}).
\]
Therefore,		
\[
\sum\limits_{g\in\mathcal{S}(\widehat{k}^{(1)},\dots,\widehat{k}^{(m)})}u_{g}^{(m)}\geq\sum_{g\in\mathcal{S}(k^{(1)},\dots,k_s^{(m-1)},\widehat{k}^{(m)},k_s^{(m+1)},\dots,k^{(M)})}u_g^{(m)}\geq \widehat{k}^{(m)}.
\]
By the definition of $\mathcal{K} (\alpha^{(1)}, \ldots, \alpha^{(M)})$, we have 
\[
(\widehat{k}^{(1)},\ldots,\widehat{k}^{(M)}) \in \mathcal{K} (\alpha^{(1)}, \ldots, \alpha^{(M)}).
\]
\end{proof}

\begin{proof}[proof of Theorem \ref{unifiede-fdr}]
For both$(a)$ and $(b)$, we have
\begin{equation}\label{main_prof}
	\begin{split}
		&\quad~\mathbb{E}\left[\text{FDP}_u^{(m)}(\widehat{k}^{(1)},\dots,\widehat{k}^{(M)}) \right]\\
		& = \mathbb{E}\left[\frac{ \sum_{g \in \mathcal{H}_{0}^{(m)}} u^{(m)}_{g} \,\mathbf{1}\!\left\{ g \in \mathcal{S}^{(m)}(\widehat{t}^{(1)},\ldots,\widehat{t}^{(M)}) \right\} }{ \sum_{g \in [G^{(m)}]} u^{(m)}_{g} \,\mathbf{1}\!\left\{ g \in \mathcal{S}^{(m)}(\widehat{t}^{(1)},\ldots,\widehat{t}^{(M)}) \right\}}\right] \\
        &\leq \mathbb{E}\left[\frac{ \sum_{g \in \mathcal{H}_{0}^{(m)}} u^{(m)}_{g} \,\mathbf{1}\!\left\{ g \in \mathcal{S}^{(m)}(\widehat{t}^{(1)},\ldots,\widehat{t}^{(M)}) \right\}}{\widehat{k}^{(m)}}\right]\\
        &\leq \mathbb{E}\left[\frac{ \sum_{g \in \mathcal{H}_{0}^{(m)}} u^{(m)}_{g} \,\mathbf{1}\!\left\{ e_g^{(m)} \geq \max\left\{\frac{\widehat{\pi}^{(m)}G^{(m)}}{v_g^{(m)}\alpha^{(m)}\widehat{k}^{(m)}},\frac{1}{\lambda^{(m)}}\right\} \right\}}{\widehat{k}^{(m)}}\right]\\
        &=\mathbb{E}\left[\frac{ \sum_{g \in \mathcal{H}_{0}^{(m)}} u^{(m)}_{g} \,\mathbf{1}\!\left\{ e_g^{(m)} \geq \max\left\{\frac{\widehat{\pi}^{(m)}G^{(m)}}{v_g^{(m)}\alpha^{(m)}\widehat{k}^{(m)}},\frac{1}{\lambda^{(m)}}\right\} \right\}\alpha^{(m)}v_g^{(m)}/G^{(m)}}{\widehat{k}^{(m)}v_g^{(m)}\alpha^{(m)}/G^{(m)}}\right]\\
        &=\frac{\alpha^{(m)}}{G^{(m)}}\sum\limits_{g\in\mathcal{H}_0^{(m)}}u_g^{(m)}v_g^{(m)}\mathbb{E}\left[\frac{\,\mathbf{1}\!\left\{ e_g^{(m)} \geq \frac{1}{\lambda^{(m)}} \right\}\,\mathbf{1}\!\left\{ e_g^{(m)} \geq \frac{\widehat{\pi}^{(m)}G^{(m)}}{v_g^{(m)}\alpha^{(m)}\widehat{k}^{(m)}} \right\}}{\widehat{\pi}^{(m)}\frac{v_g^{(m)}\alpha^{(m)}\widehat{k}^{(m)}}{\widehat{\pi}^{(m)}G^{(m)}}}\right].
        \end{split}
\end{equation} 

\textit{Theorem statement $(a)$.} Before presenting the proof, we introduce the following lemma.
\begin{lemma}[Inverse binomial lemma~\cite{Ramdas2019A}]\label{inverse_lemma}
Given a vector $a\in[0,1]^d$, constant $b\in[0,1]$, and Bernoulli variables $Z_i \stackrel{\text{i.i.d.}}{\sim} \mathrm{Bernoulli}(b)$, the weighted sum $Z:=1+\sum_{i=1}^d a_iZ_i$ satisfies 
\begin{equation}\label{bound of inverse}
\frac{1}{1 + b \sum_{i=1}^d a_i}\leq\mathbb{E}\!\left[\frac{1}{Z}\right] \leq \frac{1}{\,b\!\left(1 + \sum_{i=1}^d a_i\right)}.
\end{equation}
\end{lemma}
Let $b:=\mathbb{P}(e_g^{(m)}\leq 1/\lambda^{(m)})$ and $d:=|\mathcal{H}_0^{(m)}|-1$. We have
\begin{equation*}
b=1-\mathbb{P}\left(e_g^{(m)}\geq \frac{1}{\lambda^{(m)}}\right)\geq1-\lambda^{(m)}\mathbb{E}[e_g^{(m)}]\geq1-\lambda^{(m)}.
\end{equation*}
Define
\begin{equation*}
Z := 1 + \sum_{\substack{h\in \mathcal{H}_0^{(m)},~h\ne g}}
a_h\,\mathbf{1}\!\left\{ e_h^{(m)} < \frac{1}{\lambda^{(m)}}\right\},
\qquad
\text{where }\;
a_h=\frac{u_h^{(m)}\, w_h^{(m)}}{\bigl\|\,u^{(m)}\!\cdot\! w^{(m)}\,\bigr\|_\infty}.
\end{equation*}
We have
\begin{equation*}
\widehat{\pi}_{-g}^{(m)}=\frac{1+\sum_{h\ne g}a_h I\left\{e_h^{(m)}<\frac{1}{\lambda^{(m)}}\right\}}{G^{(m)}(1-\lambda^{(m)})/|u^{(m)}v^{(m)}|_{\infty}}\geq\frac{Z|u^{(m)}v^{(m)}|_{\infty}}{G^{(m)}(1-\lambda^{(m)})},
\end{equation*}
which leads that
\begin{equation}\label{geqinverse}
\mathbb{E}\left[\frac{1}{Z}\right]\geq \mathbb{E}\left[\frac{|u^{(m)}v^{(m)}|_{\infty}}{G^{(m)}(1-\lambda^{(m)})\widehat{\pi}_{g}^{(m)}}\right].
\end{equation}
Furthermore, by applying Lemma~\ref{inverse_lemma}, we have
\begin{equation}\label{leqinverse}
\mathbb{E}\left[\frac{1}{Z}\right]\leq\frac{|u^{(m)}v^{(m)}|_{\infty}}{b\left(|u^{(m)}v^{(m)}|_{\infty}+\sum_{h\in\mathcal{H}_{0}^{(m)},h\ne g}u_h^{(m)}v_h^{(m)}\right)}.
\end{equation}
By putting Equation~\eqref{geqinverse} and \eqref{leqinverse} together, we have
\begin{equation}\label{adaptivity_bound}
\mathbb{E}\left[\frac{1}{\widehat{\pi}_{-g}^{(m)}}\right]\leq \frac{G^{(m)}(1-\lambda^{(m)})}{b\left(|u^{(m)}v^{(m)}|_{\infty}+\sum_{h\in\mathcal{H}_{0}^{(m)},~h\ne g}u_h^{(m)}v_h^{(m)}\right)}
\leq \frac{G^{(m)}}{\sum_{h\in\mathcal{H}_0^{(m)}}u_h^{(m)}v_h^{(m)}}.
\end{equation}
Therefore, we have
\begin{equation}\label{mFDR}
\begin{split}
&\quad~\mathbb{E}\left[\text{FDP}_u^{(m)}(\widehat{k}^{(1)},\dots,\widehat{k}^{(M)}) \right]\\
&\leq \frac{\alpha^{(m)}}{G^{(m)}}\sum\limits_{g\in\mathcal{H}_0^{(m)}}u_g^{(m)}v_g^{(m)}\mathbb{E}\left[\frac{\,\mathbf{1}\!\left\{ e_g^{(m)} \geq \frac{\widehat{\pi}^{(m)}G^{(m)}}{v_g^{(m)}\alpha^{(m)}\widehat{k}^{(m)}} \right\}}{\widehat{\pi}_{-g}^{(m)}\frac{v_g^{(m)}\alpha^{(m)}\widehat{k}^{(m)}}{\widehat{\pi}^{(m)}G^{(m)}}}\right]\\
        &=\frac{\alpha^{(m)}}{G^{(m)}}\sum\limits_{g\in\mathcal{H}_0^{(m)}}u_g^{(m)}v_g^{(m)}\mathbb{E}\left[\frac{1}{\widehat{\pi}^{(m)}_{-g}}\mathbb{E}\left[\frac{\,\mathbf{1}\!\left\{ e_g^{(m)} \geq \frac{\widehat{\pi}^{(m)}G^{(m)}}{v_g^{(m)}\alpha^{(m)}\widehat{k}^{(m)}} \right\}}{\frac{v_g^{(m)}\alpha^{(m)}\widehat{k}^{(m)}}{\widehat{\pi}^{(m)}G^{(m)}}}\;\Bigg|\;e^{(m)}_{-g}\right]\right]\\
        &\leq \frac{\alpha^{(m)}}{G^{(m)}}\sum\limits_{g\in\mathcal{H}_0^{(m)}}u_g^{(m)}v_g^{(m)}\mathbb{E}\left[\frac{1}{\widehat{\pi}^{(m)}_{-g}}\mathbb{E}\left[e_g^{(m)}|e^{(m)}_{-g}\right]\right]\\
        &=\frac{\alpha^{(m)}}{G^{(m)}}\sum\limits_{g\in\mathcal{H}_0^{(m)}}u_g^{(m)}v_g^{(m)}\mathbb{E}\left[\frac{1}{\widehat{\pi}^{(m)}_{-g}}\mathbb{E}\left[e_g^{(m)}\right]\right]\\
        &\leq \frac{\alpha^{(m)}}{G^{(m)}}\sum\limits_{g\in\mathcal{H}_0^{(m)}}u_g^{(m)}v_g^{(m)}\mathbb{E}\left[e_g^{(m)}\right]\mathbb{E}\left[\frac{1}{\widehat{\pi}^{(m)}_{-g}}\right]\\
        &\leq \alpha^{(m)}\frac{\sum_{g\in\mathcal{H}_0^{(m)}}u_g^{(m)}v_g^{(m)}\mathbb{E}[e_g^{(m)}]}{\sum_{g\in\mathcal{H}_0^{(m)}}u_g^{(m)}v_g^{(m)}}.
\end{split}
\end{equation}
If $\{e_g^{(m)}\}$ are valid e-values, then based on \eqref{mFDR} we have $\text{FDR}_u^{(m)}\leq \alpha^{(m)}$.
If $\{e_g^{(m)}\}$ are asymptotic e-values, then  we have $\lim\sup_{n,N\to\infty}\text{FDR}_u^{(m)}\leq \alpha^{(m)}$.

\textit{Theorem statement $(b)$.} Based on Equation~\eqref{main_prof}, we have
\begin{align}\label{mfdr2}
\begin{split}
\mathbb{E}\left[\mathrm{FDP}_u^{(m)}(\widehat{k}^{(1)},\dots,\widehat{k}^{(M)})\right]
&\leq \frac{\alpha^{(m)}}{G^{(m)}}\sum\limits_{g\in\mathcal{H}_0^{(m)}}u_g^{(m)}v_g^{(m)}\mathbb{E}\left[\frac{\mathbf{1}\left\{e_g^{(m)}\geq \frac{G^{(m)}}{v_g^{(m)}\alpha^{(m)}\widehat{k}^{(m)}}\right\}}{v_g^{(m)}\alpha^{(m)}\widehat{k}^{(m)}/G^{(m)}}\right]\\
&\leq \frac{\alpha^{(m)}}{G^{(m)}}\sum\limits_{g\in \mathcal{H}_0^{(m)}}u_g^{(m)}v_g^{(m)}\mathbb{E}\left[e_g^{(m)}\right].
\end{split}
\end{align}
Based on \eqref{mfdr2}, if $\{e_g^{(m)}\}$ are relaxed e-values with priors and penalties, then $\text{FDR}_u^{(m)}\leq \alpha^{(m)}$.
If $\{e_g^{(m)}\}$ are asymptotic relaxed e-values with priors and penalties, then we have $\lim\sup_{n,G^{(m)}\to\infty}\text{FDR}_u^{(m)}\leq \alpha^{(m)}$.
This completes the proof.
\end{proof}

\section{Other specific generalized e-values}
In this section, we develop generalized e-values based on other base detection procedures including \cite{Xing2023} and \cite{wang2025adaptive}. We have slightly abused some notations, but this does not cause confusion.
\subsection{Gaussian mirror e-values}
We first develop asymptotic relaxed e-values based on the Gaussian Mirror (GM) method \cite{Xing2023}.

Suppose the linear regression model
\begin{equation*}
\mathbf{y}=\mathbf{X}\beta+\mathbf{\epsilon},
\end{equation*}
where $\mathbf{y}\in \mathbb{R}^{n}$ is the response vector, $\mathbf{X}=(\mathbf{X}_1,\dots,\mathbf{X}_N)\in\mathbb{R}^{n\times N}$ is the design matrix, 
$\mathbf{X}_j=(X_{1j},\dots,X_{nj})^{\top}$, and $\mathbf{\epsilon}=(\epsilon_1,\dots,\epsilon_n)$ is a vector of i.i.d. draws from $N(0,\sigma^2)$.
Let $\mathcal{H}_0=\{j\in[N]:\beta_j=0\}$ and $\mathcal{H}_1=\{j\in[N]:\beta_j\ne0\}$.
Assume that the design matrix is normalized such that $\sum_{i=1}^n X_{ij}=0$ and $||\mathbf{X}_j||_2=n$, $j\in[N]$.
We focus on low-dimensional settings with the OLS estimator.
For each $j\in[N]$, GM method constructs Gaussian Mirror by
\begin{equation*}
GM_j:=\mathbf{X}^{j}=(\mathbf{X}_j^{+},\mathbf{X}_j^{-},\mathbf{X}_{-j})=(\mathbf{X}_j+c_j\mathbf{z}_j,\mathbf{X}_j-c_j\mathbf{z}_j,\mathbf{X}_{-j}),
\end{equation*}
where $\mathbf{X}_{-j}=(\mathbf{X}_1,\dots,\mathbf{X}_{j-1},\mathbf{X}_{j+1},\dots,\mathbf{X}_N)$, $\mathbf{z}_j$ is an independently simulated from $N(\mathbf{0},\mathbf{I}_n)$, and
\[
c_j=\sqrt{\frac{\mathbf{x}_j^{\top}(\mathbf{I}_n-\mathbf{X}_{-j}(\mathbf{X}_{-j}^{\top}\mathbf{X}_{-j})^{-1}\mathbf{X}_{-j}^{\top})\mathbf{x}_{j}}{\mathbf{z}_{j}^{\top}(\mathbf{I}_n-\mathbf{X}_{-j}(\mathbf{X}_{-j}^{\top}\mathbf{X}_{-j})^{-1}\mathbf{X}_{-j}^{\top})\mathbf{z}_j}}.
\]
Then GM method computes the coefficients by 
\begin{equation*}
\widehat{\mathbf{\beta}}^{j}=\arg\min\limits_{\mathbf{\beta}_j=(\beta_j^{+},\beta_{j}^{-},\mathbf{\beta}_{-j})} ||\mathbf{y}-\mathbf{X}^{j}\mathbf{\beta}^{j}||_2^{2},
\end{equation*}
and constructs the mirror statistics for the $j$-th variable by 
\begin{equation}\label{GM statistics}
M_j=|\widehat{\beta}^{+}_j+\widehat{\beta}^{-}_j|-|\widehat{\beta}^{+}_j-\widehat{\beta}^{-}_j|.
\end{equation}
For any target FDR level $\alpha$, the selection set is given by $\mathcal{G}(\alpha)=\{j\in[N]:M_j\geq t_{\alpha}\}$, where
\begin{equation}\label{GM threshold}
t_{\alpha}=\min_{t}\left\{t>0: \widehat{\text{FDP}}=\frac{\#\{j:M_j<-t\}}{\#\{M_j> t\}\vee1 }\leq \alpha\right\}.
\end{equation}

The GM e-values are constructed by
\begin{equation}\label{GMe-value}
e_{j}=N\frac{\mathbb{I}\{M_j\geq t_{\alpha}\}}{\#\{j:M_{j}\leq-t_{\alpha}\} \vee \alpha},~j\in[N].
\end{equation}
The following proposition demonstrates that the GM e-values for low-dimensional settings are valid asymptotic relaxed e-values.
Analysis for high-dimensional settings are similar and we omit it here.

\begin{proposition}
For any given $\alpha\in(0,1)$, let $e_{j}$'s be the GM e-values calculated by Equation \eqref{GMe-value}. Suppose that there exists a constant $c\in(0,1)$ such that $N=[cn]$ and
\[
\sum_{j,k\in \mathcal{H}_0}\frac{|\Omega_{jk}^{0}|}{\sqrt{\Omega_{jj}^{0}}\sqrt{\Omega_{kk}^{0}}}<C_1 |\mathcal{H}_0|^{\lambda}
\]
hods for absolute constant $C_1>0$ and $\lambda\in(0,2)$, where $\Omega^{0}=(X^{\top}_{[\mathcal{H}_0]}X_{[\mathcal{H}_0]})^{-1}$.
Further suppose that there exists $\tau_{\alpha}>0$ such that $\mathbb{P}(\text{FDP}(\tau_{\alpha}) \leq \alpha)\to 1$ as $p\to\infty$, and $\sum_{j\in\mathcal{H}_0}e_{j}/N$ is uniformly integrable. Then we have
\[
\lim\sup_{N\to\infty}\frac{1}{N}\sum_{j\in\mathcal{H}_0}\mathbb{E}[e_j]\leq 1.
\]
\end{proposition}

\begin{proof}
We first introduce several lemma from \cite{Xing2023}.
The remaining proof is consistent with Theorem 7.
\begin{lemma}
Let $M_j$'s be the mirror statistics defined by \eqref{GM statistics}, then
\[
\mathbb{P}(M_j \leq -t|\mathbf{z}_j)=\mathbb{P}(M_j \geq t|\mathbf{z}_j), \forall t>0,
\]
for $j\in\mathcal{H}_0$.
\end{lemma}

\begin{lemma}
Let $M_j$'s be the mirror statistics defined by \eqref{GM statistics}. Suppose there exists a constant $c\in(0,1)$ such that $N=[cn]$ and
\[
\sum_{j,k\in \mathcal{H}_0}\frac{|\Omega_{jk}^{0}|}{\sqrt{\Omega_{jj}^{0}}\sqrt{\Omega_{kk}^{0}}}<C_1 |\mathcal{H}_0|^{\lambda}
\]
hods for absolute constant $C_1>0$ and $\lambda\in(0,2)$.
Then 
\[
\sum_{j,k\in\mathcal{H}_0}\text{Cov}(\mathbb{I}\{M_j\geq t\},\mathbb{I}\{M_k\geq t\})\leq C_1^{'}|\mathcal{H}_0|^{\lambda_1}, \forall t,
\]
where $\lambda_1\in(0,2)$, $C_1^{'}>0$ is a constant.
\end{lemma}
Note that $\widehat{\beta}^{j}=((\mathbf{X}^j)^{\top}\mathbf{X}^{j})^{-1}(\mathbf{X}^j)^{\top}\mathbf{y}$. Therefore, $M_j$'s are continuous random variables and $\text{Var}(M_j)$ is uniformly upper bounded and also lower bounded away from zero.
Lemma F.2 and Lemma F.3 implies that the Assumption 1 and Assumption 2 for the Theorem 7 holds. 
Therefore, by the same argument of Theorem 7, it holds that 
\[
\lim\sup_{N\to\infty}\frac{1}{N}\sum_{j\in\mathcal{H}_0}\mathbb{E}[e_j]\leq 1.
\]
\end{proof}

\subsection{Asymptotic relaxed e-values for two-dimensional statistics}
In this subsection, we develop asymptotic relaxed e-values based on the \textit{symmetry-based adaptive selection} (SAS) procedure~\cite{wang2025adaptive}, which takes the symmetry property of the two-dimensional statistics associated with the null features to determine the rejections, with the aim of improving power over approaches based on one-dimensional mirror statistic. 

We first recall the SAS framework~\citep{wang2025adaptive}. For the $j$-th variable, consider that we have a two dimensional statistic $T_j=(T_j^{(1)},T_j^{(2)})$.
Consider the the random mixture model~\citep{efron2004large}:
\begin{equation}\label{mix_model}
T_j \sim f_n(t) = (1-\pi_1) f_{0,n}(t) + \pi_1 f_{1,n}(t),\qquad 
t = (t^{(1)}, t^{(2)}) \in \mathbb{R}^2,
\end{equation}
where $\pi_1$ is the prior probability that $j\in \mathcal{H}_1$, and $f_{0,n}(t)>0$ and $f_{1,n}$ are continuous densities for null and non-null, respectively.
The SAS framework first gives an estimate of the local false discovery rate~\citep{efron2004large} $\text{Lfdr}(t)$, where
\[
\text{Lfdr}(t)=\mathbb{P}(j\in\mathcal{H}_0|T_{j}=t).
\]
Specifically, under the model \eqref{mix_model}, the local false discovery rate is 
\begin{equation*}
\text{Lfdr}(t)=\frac{1-\pi_1f_{0,n}(t)}{f_n(t)}:=\frac{f_{0,n}^{+}(t)}{f_n(t)}.
\end{equation*}
An kernel density estimator~\citep{wand1994kernel} of the denominator $f_{n}(t)$ is $\widehat{f}_{n}(t)=N^{-1}\sum_{j=1}^N K_{\text{H}}(t-T_j)$, where $K$ is a symmetric kernel function, $\text{H}$ is a bandwidth matrix, and $K_{\text{H}}(t)=|\text{H}|^{-1/2}K(\text{H}^{-1/2}t)$.
The $\widehat{f}_n(t)$ is then refined by
\begin{equation}\label{denominator}
\tilde{f}_n(t)=\frac{\widehat{f}_n(t^{(1)},t^{(2)})+\widehat{f}_n(t^{(2)},t^{(1)})}{2}
\end{equation}
when $T_j^{(1)}$ and $T_j^{(2)}$ are exchangeable.
Before introducing an estimate of $f_{0,n}^{+}(t)$, we introduce several definition and assumption used in \cite{zhao2025false}.
\begin{definition}[Asymptotic symmetry property of \cite{zhao2025false}]
For $j\in\mathcal{H}_0$, $T_j$ is asymptotically symmetric with respect to a hyperplane $H\subset \mathbb{R}^{2}$ if the density $f_{0,n}(t)$ of $T_j$ satisfies $\lim_{n\to\infty} (f_{0,n}(t)-f_{0,n}(t^{'}))=0$ where $t\in \mathbb{R}^2$ and $t^{'}$ is the symmetry point of $t$ to $H$.
\end{definition}
\begin{assumption}\label{ass1}
For $j\in\mathcal{H}_0$, $T_{j}$ fulfills the asymptotic symmetry property with respect to $H_1=\{(t^{(1)},t^{(2)}):t^{(1)}=0\}$ and $H_2=\{(t^{(1)},t^{(2)}):t^{(2)}=0\}$.
\end{assumption}
\begin{assumption}\label{ass2}
For any rejection region $G\subseteq \mathbb{R}^2$, denote $G_m^{'}$ as the symmetry region of $G$ with respect to $H_m$, $m=1,2$. The rejection region satisfies $G\cap(H_m\cup G_m^{'})=\emptyset$ and $\lim_{n\to \infty}f_{1,n}(t^{'})=0$ for $t^{'}\in G_m^{'}$.
\end{assumption}

Denote the $i$-th quadrant by $Q_i$, $i=1,2,3,4$.
Define $A_0=\{T_j:T_j\in Q_2\cup Q_4\}$ and $B_0=\{(-T_j^{(1)},T_j^{(2)}):T_j\in Q_2\cup Q_4\}$.
By leveraging the asymptotic symmetry property of $T_j$ with respect to $H_1$, the first estimate of $f_{0,n}(t)$ is given by 
\[
\widehat{f}_{0,n}^{(1)}(t)=\frac{\sum_{T_A\in A_0}(K_{\text{H}}(t-T_A)+K_{\text{H}}(t_1^{'}-T_A))}{2|A_0|},
\]
where $t_1^{'}$ is the symmetry point of $t$ about the hyperplane $H_1$.
Similarly, the second estimator of $f_{0,n}(t)$ is given by 
\[
\widehat{f}_{0,n}^{(2)}(t)=\frac{\sum_{T_A\in A_0}(K_{\text{H}}(t-T_A)+K_{\text{H}}(t_2^{'}-T_A))}{2|A_0|}.
\]
By Assumption~\ref{ass1}, $f_{0,n}(t)$ is estimated by
\[
\widehat{f}_{0,n}(t)=\frac{1}{4}(\widehat{f}^{(1)}_{0,n}(t)+\widehat{f}^{(2)}_{0,n}(t)+\widehat{f}^{(1)}_{0,n}(-t)+\widehat{f}^{(2)}_{0,n}(-t)).
\]
The proportion $\pi_1$ is estimated by $1-2|A_0|/N$. Thus, similar to \eqref{denominator}, a refined estimator of $f_{0,n}^{+}(t)$ with $t=(t^{(1)},t^{(2)})$ is given by
\begin{equation}\label{molecule}
\tilde{f}^{+}_{0,n}(t)=\frac{\widehat{f}^{+}_{0,n}(t^{(1)},t^{(2)})+\widehat{f}^{+}_{0,n}(t^{(2)},t^{(1)})}{2},
\end{equation}
where
\[
\widehat{f}^{+}_{0,n}(t)=\frac{\sum_{T_A\in A_0}K_{\text{H}}(t_1^{'}-T_A)+K_{\text{H}}(t-T_A)+K_{\text{H}}(t_2^{'}-T_A)+K_{\text{H}}(-t-T_A)}{2N},
\]
and $t_i^{'}$ are the symmetry points of $t$ about the hyperplanes $H_i$, $i=1,2$.
Denote $G_0=Q_1\cup Q_3$.
The form of $G(t)$ is then given by $G(b)=\{t\in G_0:\tilde{f}_{0,n}^{+}(t)/\tilde{f}_n(t)\leq b\}$.
The set of selected features with target FDR level $\alpha\in(0,1)$ is given by $\mathcal{G}=\{j\in[N]:T_j\in G(t_{\alpha})\}$, where
\begin{equation}\label{threshold_SAS}
t_{\alpha}=\sup\left\{b>0:\frac{1+2^{-1}\sum_{m=1}^2\#\{j:T_j\in G_m^{'}(b)\}}{\#\{j:T_j\in G(b)\} \vee 1}\leq \alpha\right\}.
\end{equation}

The asymptotic relaxed e-values for the SAS framework (called SAS e-values) are constructed by 
\begin{equation}\label{SAS e-value}
e_j=N\frac{\mathbb{I}\{T_j\in G(t_{\alpha})\}}{1+2^{-1}\sum_{m=1}^2\#\{j:T_j\in G_m^{'}(t_{\alpha})\}}, ~j\in[N].
\end{equation}
Let $\psi(t):\mathbb{R}^2\to[0,1]$ be the function that $\text{Lfdr}(t)$ converges to. Denote $\tilde{G}(b)=\{t:\psi(t)\leq b, t\in G_0\}$, $\mathcal{K}(b)=\{\tilde{G}(b),\tilde{G}_1^{'}(b),\tilde{G}_2^{'}(b)\}$, and $\tilde{G}^{-}(b)=G_0-\tilde{G}(b)$.
Before showing that the SAS e-values are asymptotic relaxed e-values, we introduce necessary assumptions needed by \cite{wang2025adaptive}. 
\begin{assumption}\label{ass3}
For any constant $0<c<1$, there exists a corresponding constant $0<c_1<c$ such that $\tilde{G}^{-}(c_1)$ is a subset of a compact set, $\sup_{t\in\tilde{G}^{-}(c_1)}|\text{Lfdr}(t)-\psi(t)|\overset{p}{\to} 0 $, and $\sup_{t\in\tilde{G}(c_1)}\text{Lfdr}(t)\leq c$.
\end{assumption}

\begin{assumption}\label{ass4}
There exists constants $c>0$ and $\tau\in(0,2)$ such that for any $b\in[0,1]$,
\[
\max_{A\in \mathcal{K}(b)}\text{Var}\left(\sum_{j\in\mathcal{H}_0}\mathbb{I}\{T_j\in A\}\right)\leq cN^{\tau}.
\]
\end{assumption}

\begin{assumption}\label{ass5}
Assume that (1) for all $n$, $f_{0,n}(t)\leq q(t)$ and $q(t)$ is integrable; (2) the $\psi(T_j)$ are continuous random variables and $\text{Var}(\psi(T_j))>0$; (3) for any target level $\alpha\in (0,1)$, there exists a constant $\tau_\alpha\geq 0$ such that $\mathbb{P}(\text{FDP}(G(\tau_{\alpha}))\leq \alpha)\to 1$ as $N\to \infty$.
\end{assumption}

\begin{proposition}\label{valid_SAS_e-values}
For any given $\alpha\in(0,1)$, let $e_{j}$'s be the GM e-values calculated by \eqref{SAS e-value}.
Suppose that $\sum_{j\in\mathcal{H}_0}e_{j}/N$ is uniformly integrable. Then under Assumption \ref{ass1}-\ref{ass5}, we have
\[
\lim\sup_{N\to\infty}\frac{1}{N}\sum_{j\in\mathcal{H}_0}\mathbb{E}[e_j]\leq 1.
\]
\end{proposition}
\begin{proof}
The main proof structure is similar to the proof of Theorem 7.
Under Assumption~\ref{ass1}-\ref{ass5}, by the proof of Theorem 1 in \cite{wang2025adaptive}, it holds that for any fixed $\epsilon\in(0,\alpha)$, $\mathbb{P}(t_{\alpha}\geq\tau_{\alpha-\epsilon})\geq 1-\epsilon$ for enough large $N$.
Let
\begin{align*}
&\widehat{R}^{0}_{N}(b) = N^{-1} \sum_{j=1}^{N} \mathbb{I}\!\left\{T_j \in G(b),\, j \in \mathcal{H}_0 \right\},\quad
\widehat{V}^{0}_{N}(b) = (2N)^{-1} 
\sum_{m=1}^{2} \sum_{j=1}^{N} 
\mathbb{I}\!\left\{T_j \in G'_m(b),\, j \in \mathcal{H}_0 \right\},\\
&\widehat{V}^{1}_{N}(b) = (2N)^{-1} 
\sum_{m=1}^{2} \sum_{j=1}^{N} 
\mathbb{I}\!\left\{T_j \in G'_m(b),\, j \in \mathcal{H}_1 \right\}.
\end{align*}
Denote
\begin{align*}
&\
\mathcal{R}^{0}_{N}(b) 
= \mathbb{P}\!\left(T_j \in \widetilde{G}(b),\, j \in S_0 \right),\quad \widehat{\mathcal{R}}^{0}_{N}(b) 
= N^{-1} \sum_{j=1}^{N} \mathbb{I}\!\left\{T_j \in \widetilde{G}(b),\, j \in S_0 \right\}\\
&\mathcal{V}^{0}_{N}(b) = 2^{-1} \sum_{m=1}^{2} 
\mathbb{P}\!\left(T_j \in \widetilde{G}'_{m}(b),\, j \in S_0 \right),\quad\widehat{\mathcal{V}}^{0}_{N}(b) 
= (2N)^{-1} \sum_{m=1}^{2} \sum_{j=1}^{N} 
\mathbb{I}\!\left\{T_j \in \widetilde{G}'_{m}(b),\, j \in S_0 \right\}.
\end{align*}
We have
\begin{align*}
\frac{1}{N}\sum_{j=1}^N\mathbb{E}\left[e_{j}\mathbb{I}\{j\in\mathcal{H}_0\}\right] &=\mathbb{E}\left[\frac{\widehat{R}_N^0(t_{\alpha})}{\widehat{V}_N^{0}(t_{\alpha})+\widehat{V}_N^{1}(t_{\alpha})+1/N}\right]    \\
&\leq  \mathbb{E}\left[\left| \frac{\widehat{R}_N^0(t_{\alpha})}{\widehat{V}_N^{0}(t_{\alpha})+\widehat{V}_N^{1}(t_{\alpha})+1/N}
-
\frac{\mathcal{R}_N^{0}(t_{\alpha})}{\mathcal{R}_N^{0}(t_{\alpha})+\widehat{V}_N^{1}(t_{\alpha})+1/N} \right|  \right]
\\
&~~~~+\mathbb{E}\left[\left|
\frac{\mathcal{R}_N^{0}(t_{\alpha})}{\mathcal{R}_N^{0}(t_{\alpha})+\widehat{V}_N^{1}(t_{\alpha})+1/N}
-
\frac{\widehat{V}_N^{0}(t_{\alpha})}{\widehat{V}_N^{0}(t_{\alpha})+\widehat{V}_N^{1}(t_{\alpha})+1/N}
\right|   \right]  \\
&~~~~+\mathbb{E}\left[ \frac{\widehat{V}_N^{0}(t_{\alpha})}{\widehat{V}_N^{0}(t_{\alpha})+\widehat{V}_N^{1}(t_{\alpha})+1/N} \right]\\
&\leq 1+\mathbb{E}\left[\left| \frac{\widehat{R}_N^0(t_{\alpha})}{\widehat{V}_N^{0}(t_{\alpha})+\widehat{V}_N^{1}(t_{\alpha})+1/N}
-
\frac{\mathcal{R}_p^{0}(t_{\alpha})}{\mathcal{R}_p^{0}(t_{\alpha})+\widehat{V}_N^{1}(t_{\alpha})+1/N} \right|  \right]\\
&~~~~+\mathbb{E}\left[\left|
\frac{\mathcal{R}_N^{0}(t_{\alpha})}{\mathcal{R}_N^{0}(t_{\alpha})+\widehat{V}_N^{1}(t_{\alpha})+1/N}
-
\frac{\widehat{V}_N^{0}(t_{\alpha})}{\widehat{V}_N^{0}(t_{\alpha})+\widehat{V}_N^{1}(t_{\alpha})+1/N}
\right|   \right] .
\end{align*}
Furthermore, we have
\begin{align*}
&~~~~~\mathbb{E}\left[\left|
\frac{\mathcal{R}_N^{0}(t_{\alpha})}{\mathcal{R}_N^{0}(t_{\alpha})+\widehat{V}_N^{1}(t_{\alpha})+1/N}
-
\frac{\widehat{V}_N^{0}(t_{\alpha})}{\widehat{V}_N^{0}(t_{\alpha})+\widehat{V}_N^{1}(t_{\alpha})+1/N}
\right|   \right]\\
&\leq
\mathbb{E}\left[\sup_{\tau_{\alpha-\epsilon} \le t < 1}\left|
\frac{\mathcal{R}_N^{0}(t)}{\mathcal{R}_N^{0}(t)+\widehat{V}_N^{1}(t)+1/N}
-
\frac{\widehat{V}_N^{0}(t)}{\widehat{V}_N^{0}(t)+\widehat{V}_N^{1}(t)+1/N}
\right|    \right]+2\epsilon.
\end{align*}
By Lemma 2 and Lemma 3 of \cite{wang2025adaptive}, for any $0<c<1$, we have 
\begin{align*}
&
\sup_{b\in(0,1)} \left| \widehat{\mathcal{R}}^{0}_{N}(b) - \mathcal{R}^{0}_{N}(b) \right|
\ \xrightarrow{p}\ 0, ~
\sup_{b\in(0,1)} \left| \widehat{\mathcal{V}}^{0}_{N}(b) - \mathcal{R}^{0}_{N}(b) \right|
\ \xrightarrow{p}\ 0,\\
&\sup_{b \in (c,1)} \left| \widehat{R}^{0}_{N}(b) - \widehat{\mathcal{R}}^{0}_{N}(b) \right|
\ \xrightarrow{p}\ 0, ~
\sup_{b \in (c,1)} \left| \widehat{V}^{0}_{N}(b) - \widehat{\mathcal{V}}^{0}_{N}(b) \right|
\ \xrightarrow{p}\ 0,
\end{align*}
which leads to 
\begin{equation}\label{mirror_prop}
\sup_{b\in(c,1)}\left|\widehat{R}_N^{0}(b)-\mathcal{R}_N^{0}(b)\right|\xrightarrow{p}\ 0,\quad
\sup_{b\in(c,1)}\left|\widehat{V}_N^{0}(b)-\mathcal{R}_N^{0}(b)\right|\xrightarrow{p}\ 0.
\end{equation}
By the dominated convergence theorem, we have
\[
\lim\sup_{N\to\infty}\mathbb{E}\left[\sup_{\tau_{\alpha-\epsilon} \le t < 1}\left|
\frac{\mathcal{R}_N^{0}(t)}{\mathcal{R}_N^{0}(t)+\widehat{V}_N^{1}(t)+1/N}
-
\frac{\widehat{V}_N^{0}(t)}{\widehat{V}_N^{0}(t)+\widehat{V}_N^{1}(t)+1/N}
\right|    \right]=0.
\]
Similarly, we have
\begin{align*}
&~~~~~\mathbb{E}\left[\left| \frac{\widehat{R}_N^0(t_{\alpha})}{\widehat{V}_N^{0}(t_{\alpha})+\widehat{V}_N^{1}(t_{\alpha})+1/N}
-
\frac{\mathcal{R}_N^{0}(t_{\alpha})}{\mathcal{R}_N^{0}(t_{\alpha})+\widehat{V}_N^{1}(t_{\alpha})+1/N} \right|  \right]\\
&\le
\mathbb{E}\left[\sup_{\tau_{\alpha-\epsilon} \le t <1}\left| \frac{\widehat{R}_N^0(t)}{\widehat{V}_N^{0}(t)+\widehat{V}_N^{1}(t)+1/N}
-
\frac{\mathcal{R}_N^{0}(t)}{\mathcal{R}_N^{0}(t)+\widehat{V}_N^{1}(t)+1/N} \right|  \right]+C\epsilon,
\end{align*}
where $C>0$ is an absolute constant.
Since $\sum_{j=1}^N e_{j}\mathbb{I}\{j\in\mathcal{H}_0\}$ is uniformly integrable, by \eqref{mirror_prop}, we have
\[
\lim\sup_{N\to\infty}\mathbb{E}\left[\sup_{\tau_{\alpha-\epsilon} \le t <1}\left| \frac{\widehat{R}_N^0(t)}{\widehat{V}_N^{0}(t)+\widehat{V}_N^{1}(t)+1/N}
-
\frac{\mathcal{R}_N^{0}(t)}{\mathcal{R}_N^{0}(t)+\widehat{V}_N^{1}(t)+1/N} \right|  \right]=0.
\]
Putting these pieces together, we have
\[
\lim\sup_{N\to\infty}\frac{1}{N}\sum_{j=1}^N\mathbb{E}[e_j\mathbb{I}\{j\in\mathcal{H}_0\}]\le 1+(2+C)\epsilon.
\]
According to the arbitrariness of $\epsilon$, we have
\[
\limsup\limits_{N \to \infty}\frac{1}{N}\sum\limits_{j=1}^N\mathbb{E}\left[e_{j}\mathbb{I}\{j\in\mathcal{H}_0\}\right] \le 1.
\]
This completes the proof.
\end{proof}
\end{appendix}

\end{document}